\documentclass[11pt]{article}

\usepackage{amsmath, amssymb, amsthm}
\usepackage{thmtools}
\usepackage{mathtools}
\usepackage{dsfont}
\usepackage{geometry}
\usepackage{xcolor}
\usepackage{graphicx} %
\usepackage{csquotes}
\usepackage{url}
\usepackage[colorlinks=true,urlcolor=blue!80!black,linkcolor=blue!80!black,citecolor=purple!80!black]{hyperref}
\usepackage[capitalise,nameinlink]{cleveref}
\usepackage{hyperref}
\usepackage[T1]{fontenc}
\usepackage{pifont}
\newcommand{\cmark}{\ding{51}}
\newcommand{\xmark}{\ding{55}}

\usepackage{titlesec}
\titlespacing*{\paragraph}{0pt}{0.8ex plus 0.1ex}{0.8em} %

\usepackage[
 backend=biber,
 style=alphabetic,
 citetracker,
 hyperref=auto,
 maxcitenames=5, %
 sortcites,      %
 sorting=nyt,
 maxbibnames=12, %
 date=year,      %
 isbn=false,     %
 url=false,      %
 doi=false,      %
 eprint=false,   %
]{biblatex}

\AtEveryBibitem{%
  \clearname{editor}%
  \clearfield{editortype}%
}

\newbibmacro{string+doiurlisbn}[1]{%
  \iffieldundef{doi}{%
    \iffieldundef{url}{%
      #1
    }{%
      \href{\thefield{url}}{#1}%
    }%
  }{%
    \href{http://dx.doi.org/\thefield{doi}}{#1}%
  }%
}

\DeclareFieldFormat%
[article,inbook,incollection,inproceedings,patent,thesis,unpublished,misc,book]
  {title}{\usebibmacro{string+doiurlisbn}{#1}}

\usepackage{booktabs}
\usepackage{tikz}
\usetikzlibrary{arrows.meta,calc,positioning}
\usepackage{pgfplots}
\usepgfplotslibrary{groupplots}
\usepackage{subcaption}
\usepackage{paralist}
\usetikzlibrary{arrows.meta, positioning, shadows}

\newtheorem{theorem}{Theorem}
\newtheorem{definition}[theorem]{Definition}
\newtheorem{observation}[theorem]{Observation}
\newtheorem{lemma}[theorem]{Lemma}

\newtheorem*{theorem*}{Theorem}

\crefname{observation}{Observation}{Observations}
\Crefname{observation}{Observation}{Observations}

\usepackage{xspace}
\newcommand{\PropX}{\ensuremath{\mathrm{PropX}\xspace}}
\newcommand{\PropOne}{\ensuremath{\mathrm{Prop1}\xspace}}
\newcommand{\Prop}{\ensuremath{\mathrm{Prop}}\xspace}
\newcommand{\PS}{\ensuremath{\mathrm{PS}}}
\newcommand{\N}{\ensuremath{\mathcal{J}}}
\newcommand{\M}{\ensuremath{\mathcal{M}}}

\newcommand{\bZ}{\ensuremath{\mathbb{Z}}}

\usepackage[colorinlistoftodos,prependcaption,textsize=scriptsize]{todonotes}

\title{Online Fair Division under Eligibility Constraints}

\author{
Joes Biburger\thanks{Faculty of Mathematics and Computer Science, University of Bremen, Germany. \texttt{\{biburger, nmegow, golnoosh\}@uni-bremen.de}}
\and Nicole Megow\footnotemark[1]
\and Golnoosh Shahkarami\footnotemark[1] \textsuperscript{,}\thanks{Max Planck Institute for Informatics, Germany.}
}

\date{\today}

\begin{document}

\maketitle

\begin{abstract}
We study fair division of indivisible items under eligibility constraints: each item may be assigned only to a subset of eligible agents. This model coincides with classical restricted assignment in scheduling, with $m$ agents as machines and items as jobs, and captures settings where work or resources must be split among parties with different capabilities. Eligibility makes the usual fairness benchmarks inadequate, as agents should not be measured against items they could never receive. We therefore work with eligibility-aware versions of proportionality and envy-freeness, in which an agent's proportional share is what it would receive if every item were split equally among its eligible agents, and envy comparisons discount items the agents are not eligible for. We then determine which guarantees survive.

Our main finding is a sharp chore--goods separation for proportionality. We focus on identical valuations: under arbitrary valuations, no online algorithm can guarantee even an $o(m)$-approximation of proportionality for chores. For chores, $\PropOne$ is achievable offline, whereas exact proportionality and proportionality up to \emph{any} item are not. Our main technical result is for online chores: the optimal approximation factor for $\PropOne$ achievable by a deterministic online algorithm is $\Theta(\sqrt m)$, and the lower bound already holds for unit-size jobs. For goods, $\PropOne$ is again achievable offline, but the tight online bound drops to $\Theta(\log m)$.

For envy-based fairness, the picture is different. Goods and chores become equivalent under eligibility-aware envy notions, while the choice of notion is decisive. Strong variants largely fail: strong EF1 may fail to exist even offline and cannot be approximated online. In contrast, weak EF1 can be maintained online, while weak EFX separates offline from online. Together, these results give a systematic study of online proportional and envy-based fairness in restricted assignment and show that eligibility determines which fairness guarantees remain achievable.
\end{abstract}

\thispagestyle{empty} %
\clearpage
\setcounter{page}{1}

\section{Introduction}
Fair division studies how to split a collection of items among $m$ agents with competing claims so that none is treated unfairly. The question has a long history in economics and is now a central topic in computer science and multi-agent systems. Most work concerns \emph{goods}, which agents want, but an equally natural reading treats the items as \emph{chores} that carry cost or workload; we study both. Because items are indivisible, the classical ideals of envy-freeness and proportionality can fail outright, and the field has developed meaningful relaxations such as envy-freeness up to one item (EF1) and up to any item (EFX), the maximin share guarantee (MMS), and proportionality up to one item ($\PropOne$)~\cite{LMMS04,Budish11,CKM+19,Pro20,AMNS17}. These range from mild to strong: EF1 and $\PropOne$ ask only that unfairness be repairable by setting aside a single item, while EFX and MMS ask for more.

The classical theory is \emph{offline}: The whole instance is known in advance, and the divider can plan the allocation with everything in view. Plenty of real situations refuse to cooperate. Items show up one at a time, their values are learned only on arrival, and each has to be handed out on the spot, with no chance to take it back. This \emph{online} version of the problem appears throughout practice from scheduling jobs on a shared cluster to handing out labeling tasks in a crowdsourcing pipeline to allocating donations as they arrive.
The known impossibility results are stark. With three or more agents, no online algorithm can guarantee any finite approximation to MMS, even if it is told the total value up front~\cite{ZBW23}, and the same wall appears for EF1~\cite{neoh2026online}. Online fairness is harder than offline fairness; even weak targets are in question. Exact proportionality requires each of the $m$ agents to receive at least a $1/m$ share and can fail even offline.
Since $\PropOne$ is implied by both EF1 and MMS, it is the mildest of these standard benchmarks and the natural target left standing online; even so, under general additive valuations its online approximability degrades with the number of items~\cite{DuS2026online}, whereas under identical valuations it is achievable exactly.

There is a second feature of these applications that the standard models leave out: the agents are usually not interchangeable. An item can often go only to some of them.
A specialized task needs a qualified worker, a compute job needs a machine that meets its requirements, a perishable donation can only reach recipients close enough to receive it in time. The agents who can take a given item form its \emph{eligibility set}, and any allocation that ignores these sets is not implementable. The usual fair-division setup, by contrast, assumes any item can be given to anyone.

Eligibility brings fair division face to face with classical restricted-assignment problems in scheduling: makespan minimization for chores and max-min allocation, or Santa Claus, for goods. These objectives are global, optimizing the worst load or value. Our fairness requirements are local and eligibility-aware. When machines represent independent parties, as in a shared research cluster, a cloud platform, or a volunteer-compute network, minimizing the worst load is not enough; each party wants a fair slice of the work it is actually eligible to do. Similarly, on the goods side, maximizing the minimum received value need not ensure that each agent receives a fair share of the items available to it. Thus fairness under restricted assignment is not a by-product of classical scheduling objectives, but a genuinely separate requirement.

If we read machines as agents and jobs as items, restricted assignment is exactly fair division under eligibility constraints, and this is the lens we take. Eligibility also changes what fairness means. A standard valuation-based model can encode which items an agent can receive, but it does not capture how responsibility or entitlement for an item should be shared among the agents eligible for it. An item eligible for only a few agents should count more strongly toward their proportional benchmarks than an item eligible for everyone. We therefore work with an \emph{eligibility-aware} proportional share, in which a machine's fair share is measured only against the jobs it is eligible for and the machines competing for those same jobs. Concretely, imagine every job split equally among its eligible machines; the proportional share $\mathrm{PS}_i$ of machine $i$ is the total it receives in this fractional relaxation,
\[
  \mathrm{PS}_i \;=\; \sum_{j \,:\, i \in \mathcal{M}(j)} \frac{p_j}{k_j},
\]
where $p_j$ is the size of job $j$, the same for every machine eligible for it (we work with identical valuations throughout), $\mathcal{M}(j)$ its set of eligible machines, and $k_j = |\mathcal{M}(j)|$. The shares sum to the total size of all jobs, and when every job is eligible for every machine they reduce to the usual $1/m$ share. The same share was recently and independently proposed by Sun, Wang, and Li~\cite{SWL26orientations}, who settle its offline existence questions; our focus is the online problem.

A natural alternative benchmark would be the maximin share, which under identical valuations and without eligibility constraints simply equals the optimal makespan (we phrase this for chores; goods are symmetric). Neither of the two natural ways to adapt it to eligibility constraints is adequate. If a machine spreads \emph{all} jobs as evenly as eligibility allows and takes the maximum load as its benchmark, every machine obtains the same value, the optimal makespan, and we are back at a global objective that is well understood in the restricted-assignment setting, both offline~\cite{lenstra1990approximation} and online~\cite{DBLP:journals/jacm/AspnesAFPW97}. If it spreads only the jobs it is eligible for, the benchmark may be unattainable within a factor $o(m)$ by every allocation, even when a perfectly balanced allocation exists.\footnote{Take one unit-size job for each pair of machines, eligible for exactly that pair. Each machine's benchmark is $1$, but every allocation loads some machine with $\Omega(m)$ jobs, including the one that gives each machine half of its jobs.} The proportional share has neither defect, which is why we take it as our primary benchmark: each machine is measured against its own eligibility set, and the shares of all machines partition the total load.

To complete the picture, we also ask how envy-based fairness should be read under eligibility constraints, since here, too, comparisons between machines are only meaningful through the jobs eligibility allows. We consider two versions. The strong notion compares two machines only through jobs that both could have received. The weak notion counts the full load of the envied machine but, on the envying side, ignores jobs the envied machine could not have received. Strong and weak EF1 and EFX then ask, as usual, that envy disappear after removing one, respectively any, job on the envying side.

With these eligibility-aware notions in place, we ask what can and cannot be achieved, with online $\PropOne$ as our central question. This gives a first systematic view of online fair allocation under restricted assignment. Our results show that eligibility constraints fundamentally reshape the fair-division landscape: they separate goods from chores for proportionality, but bring them back together for envy-based fairness.

\subsection{Our Results}

In the unrestricted setting, where every item is eligible for every agent, proportionality admits strong guarantees: proportionality up to any item ($\PropX$) is achievable offline, and $\PropOne$ remains achievable online.\footnote{For identical chores, assigning each chore to a currently least-loaded agent in non-increasing order of processing times gives EFX, and hence \PropX, offline. Without sorting, least-loaded assignment maintains EF1, and hence \PropOne, online.}

Eligibility changes this picture completely, and not uniformly. Throughout, we work with identical valuations; under general valuations even $o(m)$-$\PropOne$ is unattainable online for chores; see \Cref{app:impossibility-general-valuations}. Here and below, for $\alpha \ge 1$, an allocation is $\alpha$-$\PropOne$ if it satisfies the $\PropOne$ condition with the share relaxed by a factor $\alpha$, to $\alpha\,\mathrm{PS}_i$ for chores and to $\mathrm{PS}_i/\alpha$ for goods. For proportionality, eligibility creates a $\Theta(\sqrt m)$ online price for chores but only a $\Theta(\log m)$ price for goods; for envy-based fairness, it collapses the distinction between the two settings. Note that our fairness guarantees are not consequences of classical restricted-assignment objectives. Makespan and Santa Claus optimize global worst-case quantities, whereas our benchmarks are local and eligibility-aware; in fact, the corresponding scheduling and fairness guarantees cannot, in general, be achieved simultaneously, not even with constant loss (\Cref{app:impossibility-simultan-scheduling}).

\paragraph{Proportionality for chores.}
In the chore setting, each job adds to a machine's load, and fairness asks that no machine exceed its eligibility-aware share. We first record the offline baseline: exact proportionality and $\PropX$ are not always achievable, while $\PropOne$ always is; see \Cref{sec:offline}, and~\cite{SWL26orientations} for the same facts under general additive valuations. Thus, for chores under eligibility constraints, $\PropOne$ is the strongest proportionality notion that is always achievable.

The main focus of the paper is the online setting, where jobs arrive one by one and must be assigned irrevocably. We measure performance by the smallest $\alpha$ for which an algorithm guarantees $\alpha$-\PropOne{}. Without eligibility constraints, $\PropOne$ is achievable online exactly, so any $\alpha > 1$ is a price paid for eligibility alone. Our main result determines this online price of eligibility up to constant factors.

\begin{theorem*}[informal; see \Cref{sec:chore_online}]
For restricted-assignment chores with arbitrary processing times, the optimal deterministic online approximation factor for $\PropOne$ is $\Theta(\sqrt m)$; the lower bound already holds for unit-size jobs.
\end{theorem*}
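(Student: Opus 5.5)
We prove the two directions separately: an online algorithm that is $O(\sqrt m)$-$\PropOne$ for arbitrary processing times, and an adversary against any deterministic algorithm that forces $\Omega(\sqrt m)$ already with unit jobs. Throughout we use that $\mathrm{PS}_i$ only grows as jobs arrive. It therefore suffices to keep the invariant $L_i - p^{\max}_i \le \alpha\,\mathrm{PS}_i^{(t)}$ after every step $t$, where $L_i$ is the current load and $p^{\max}_i$ the largest job on machine $i$.

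\emph{Upper bound.} We split jobs by eligibility size, with threshold $\sqrt m$.
\begin{itemize}
\item A \emph{narrow} job ($k_j \le \sqrt m$) raises the share of every eligible machine by $p_j/k_j \ge p_j/\sqrt m$. If we place it on an eligible machine, that machine's share grows by a $1/\sqrt m$ fraction of the job it receives.
\item A \emph{wide} job ($k_j > \sqrt m$) contributes little to each individual share, but it has many eligible machines to choose from.
\end{itemize}
The algorithm assigns $j$ to an eligible machine minimizing the slack quantity $L_i + p_j - \alpha\,\mathrm{PS}_i^{(t)}$ with $\alpha = c\sqrt m$. To make the $\PropOne$ removal cheap, we may first prefer a machine whose current largest job is at least $p_j$.

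For the analysis we suppose some machine $i$ first violates the invariant when it receives job $j$. Every machine in $\mathcal M(j)$ is then over-loaded relative to $\alpha\,\mathrm{PS}$ (up to one job). For a narrow $j$ this directly contradicts the accumulated share bound. For a wide $j$, more than $\sqrt m$ machines are each over-loaded. We then compare total load $\sum_i L_i$ with $\sum_i \mathrm{PS}_i$, which equals the total processed volume. We intend a charging/potential argument in which each machine's load beyond its largest job is charged to jobs it was eligible for. The charge is split so that narrow jobs pay at most $\sqrt m$ times their share contribution, and wide jobs are paid for by the $>\sqrt m$ alternative machines that were all full. Making this charging consistent over time, rather than only at the violating step, is the step I expect to be the main obstacle. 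A direct global counting may give only $O(m)$. If so, I would fall back on an Azar et al.-style layered argument: build a tree of machines reachable via wide jobs, and show that each layer multiplies in size by $\sqrt m$, which bounds the depth by a constant.

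\emph{Lower bound.} The adversary works in phases on unit jobs and exploits that a job's share contribution $1/k_j$ is split among all eligible machines, while the job lands on only one of them.
\begin{itemize}
\item In phase 1, it releases jobs eligible for a large set $S_1$ (size about $m$), enough that about $\sqrt m$ machines receive a job.
\item It then releases jobs eligible only for the machines that have already been loaded, in groups of about $\sqrt m$. Each new job has $k_j \approx \sqrt m$, so it raises every eligible share by only $1/\sqrt m$. Yet the algorithm must stack these jobs onto the same few machines.
\item Recursing, with eligibility sets shrinking to whichever machines the algorithm chose, we aim to end with a machine carrying about $\sqrt m + 1$ unit jobs whose share is $O(1)$.
\end{itemize}
The parameters should be balanced so that the total share any single machine collects over all phases is a geometric-type sum bounded by a constant, while its load grows by one per phase for $\sqrt m$ phases. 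This yields ratio $\Omega(\sqrt m)$ after removing one job.
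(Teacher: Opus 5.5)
Both directions have genuine gaps.

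\textbf{Upper bound.} Your treatment of narrow jobs is right, and for unit sizes so is the reduction to showing that every wide job has an eligible machine with nonnegative $\alpha$-slack. The rule itself, however, breaks for general sizes. Ranking by the full-load quantity $L_i+p_j-\alpha\,\PS_i$ fails, with or without your preference for machines with $p_{\max}(i)\ge p_j$. Take $m$ even. For $\ell=1,\dots,m/2$, release a job of size $2^\ell$ eligible exactly for the still-empty machines; its recipient $i_\ell$ holds one job, with $\PS_{i_\ell}<2^{\ell+2}/m$ and $\PS_{i_1}=2/m$. Then release a job of size $2$ eligible for $i_1,\dots,i_{m/2}$. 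For $\alpha=c\sqrt m$ and $m>49c^2$, your rule picks $i_1$: its full-load slack is the least negative, and $p_{\max}(i_1)=2$ passes the filter. This leaves $L-p_{\max}=2$ against $\alpha\,\PS_{i_1}=6\alpha/m$, a violation.

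A machine holding one huge job looks overloaded but is the right host for a small job, since the huge job is the one removed. This is why the paper ranks by the \PropOne-slack $\alpha\,\PS_i-L_i+p_{\max}(i)$ and adds a Shadow case. The same instance refutes the contradiction you aim for: here all $m/2$ eligible machines have $L_i>\alpha\,\PS_i$. For unit jobs the preference is harmful as well: it stacks the second all-eligible job onto the first machine, giving $L-p_{\max}=1$ against $\PS=2/m$.

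Without the preference and for unit jobs, your rule is essentially the paper's greedy, but the missing step is exactly the key lemma: fewer than $\sqrt m$ machines can have negative $\alpha$-slack at once. The identity $\sum_i L_i=\sum_i\PS_i$ only constrains totals and cannot give this. The paper uses a potential with an exponentially decaying part plus a unit jump at zero slack. Greedy with $\alpha=\sqrt m+1$ never increases it, because the loss on the receiving machine is paid for by the share gain $\alpha/k_j$ on the other $k_j-1$ eligible machines, whose slacks are no larger. For general sizes the paper needs the Fit/Shadow rule together with a job-dependent potential that penalizes machines whose largest job is too big to be shadowed. It proves by induction over jobs that whenever nothing fits, more than half of the eligible machines can be shadowed. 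Your layered fallback is not developed far enough to replace this.

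\textbf{Lower bound.} Nothing forces the algorithm to stack. If about $\sqrt m$ unit jobs are eligible for the same $\sqrt m$ loaded machines, it places one on each, and every share rises by $1$ while the load rises by $1$. Suppose instead that phase $r$ shrinks the pool of candidate machines by a factor $g_r$ (one job per group of $g_r$ machines, or $m/g_1$ all-eligible jobs in the first phase). Then a machine ending with $D$ jobs has share $\sum_{r=1}^{D}1/g_r\ge D\,m^{-1/D}$, because $\prod_r g_r\le m$. Its ratio $(D-1)/\PS$ is therefore below $m^{1/D}$, so more phases only hurt, and a machine with $\sqrt m+1$ jobs and share $O(1)$ is out of reach in this scheme.

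The optimum is $D=2$ with $g_1=g_2=\sqrt m$, which is the paper's argument. Release $\sqrt m$ unit jobs eligible for all machines. If some machine receives two, it has $L-p_{\max}=1$ against $\PS=1/\sqrt m$. Otherwise release one job eligible exactly for the $\sqrt m$ loaded machines; its recipient has $L-p_{\max}=1$ against $\PS=2/\sqrt m$, forcing $\alpha\ge\sqrt m/2$.
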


\paragraph{Proportionality for goods.}
For goods, the offline baseline mirrors the chore setting: $\PropOne$ is always achievable, while exact proportionality and $\PropX$ are not guaranteed; see \Cref{sec:offline}. Online, however, the price of eligibility is much smaller than for chores.

\begin{theorem*}[informal; see \Cref{sec:goods_setting}]
For restricted-assignment goods with arbitrary values, the optimal deterministic online approximation factor for $\PropOne$ is $\Theta(\log m)$; the lower bound already holds for unit values.
\end{theorem*}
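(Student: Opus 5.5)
The plan is to prove the two directions separately. The lower bound uses an adaptive adversary with unit values that repeatedly halves the set of agents. The upper bound analyzes a greedy algorithm with a layered, expansion-type argument, in the style of the classical $O(\log m)$ analysis of greedy for online restricted-assignment load balancing. Throughout, for goods an allocation is $\alpha$-$\PropOne$ if every agent $i$ satisfies $v(A_i) + \max_{j} v_j \ge \mathrm{PS}_i/\alpha$, where the maximum is over items $j \notin A_i$ with $i \in \M(j)$ (or $v(A_i)\ge \mathrm{PS}_i/\alpha$ if no such item exists).

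\emph{Lower bound.} Let $m=2^L$ and $S_0$ be the set of all agents. In phase $r=0,\dots,L-1$ the adversary releases $|S_r|/2$ unit items, each with eligibility set exactly $S_r$. At most $|S_r|/2$ agents of $S_r$ receive an item in this phase. Hence at least $|S_r|/2$ agents receive nothing, and the adversary lets $S_{r+1}$ be such a subset of size $|S_r|/2$. By induction, the last surviving agent $i$ received no item at all. Every released item is eligible for it, and each phase contributes $(|S_r|/2)\cdot(1/|S_r|)=1/2$ to its share, so $\mathrm{PS}_i = L/2 = \tfrac12\log_2 m$. Since $v(A_i)=0$ and the largest item missed by $i$ has value $1$, any $\alpha$-$\PropOne$ guarantee forces $1 \ge \mathrm{PS}_i/\alpha$, i.e.\ $\alpha = \Omega(\log m)$.

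\emph{Upper bound.} The algorithm assigns each arriving good $j$ to an eligible agent of currently minimum value $v(A_i)$ (ties arbitrary); I would first try this plain greedy rule and fall back to normalizing by the current share if needed. Fix an agent $i_0$ at the end and set $\beta := v(A_{i_0}) + v^{\max}_{i_0}$, where $v^{\max}_{i_0}$ is the largest value of an eligible item it missed. The goal is $\mathrm{PS}_{i_0} = O(\beta \log m)$.

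The key observation is this: when an item $j$ eligible for $i_0$ went to another agent $i'$, greedy gives $v(A_{i'}) \le v(A_{i_0}) \le \beta$ just before assignment. Therefore $i'$ ends with value at most $\beta + v_j \le 2\beta$. So every recipient of an item eligible for $i_0$ is ``poor''. I then build layers: $T_0=\{i_0\}$, and $T_{\ell+1}$ consists of $T_\ell$ plus all agents that received an item eligible for some agent of $T_\ell$ while having value at most $c\,\ell\beta$ before that assignment. The final values in $T_\ell$ are bounded by $O(\ell\beta)$.

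The charging argument splits $\mathrm{PS}_{i_0}=\sum_{j:\, i_0\in\M(j)} v_j/k_j$ by the eligibility size $k_j$, grouping items with $k_j \in (2^{s-1},2^s]$ into $O(\log m)$ scales. For a fixed scale, the items eligible for $i_0$ are all received by agents of value $O(\beta)$, and each such item spreads its weight $v_j/k_j$ over $k_j$ agents. Double counting should then show that either the total value of the group is $O(2^s\beta)$, giving contribution $O(\beta)$, or the set of poor agents at the next layer at least doubles. Since the layers cannot double more than $\log_2 m$ times, summing over scales gives $\mathrm{PS}_{i_0}=O(\beta\log m)$.

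The main obstacle is making this charging rigorous. An agent's value can grow after it receives an item eligible for $i_0$, and items with small $k_j$ must be prevented from concentrating share on $i_0$ without forcing expansion. If the scale-by-scale doubling argument does not close, I would switch to a potential-function or primal--dual analysis of the normalized greedy rule (assign to the eligible agent minimizing $v(A_i)/\mathrm{PS}^{(t)}_i$), mirroring the $\Theta(\log m)$ proof for online restricted assignment.
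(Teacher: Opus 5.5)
Your lower bound is correct. It differs from the paper's only cosmetically: the paper removes one agent per good, each good being eligible for all agents that have received nothing so far, which gives $\PS_i=H_m-1$ instead of your $\tfrac12\log_2 m$.

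The upper bound has a genuine gap. Neither your primary rule (minimum $v(A_i)$) nor your fallback (minimum $v(A_i)/\PS_i^{(t)}$) achieves $o(m)$-\PropOne{}, so the charging argument cannot be completed.

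\emph{Counterexample.} Fix an agent $i_0$ and release one unit good eligible only for $i_0$. Then, for each of the other $m-1$ agents $r$, release one unit good $j$ with $\M(j)=\{i_0,r\}$. At every such arrival $r$ has value $0<1=v(A_{i_0})$, hence ratio $0$ against a positive ratio for $i_0$. Both rules therefore give the good to $r$, with no ties involved. In the end $v(A_{i_0})=1$, the largest missed eligible good has value $1$, and $\PS_{i_0}=1+\frac{m-1}{2}$. So $\alpha$-\PropOne{} requires $2\ge\frac{m+1}{2\alpha}$, i.e.\ $\alpha\ge\frac{m+1}{4}$.

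\emph{Where your sketch breaks.} In your notation $\beta=2$ while $\PS_{i_0}=\Theta(m)$, all at the single scale $k_j=2$. Your key observation holds, since every recipient was poor. But it constrains the recipients, not the number of goods eligible for $i_0$ with small eligibility sets. The single expansion from $\{i_0\}$ to all agents cannot pay for a share of order $m$. Separately, recipients need not end with value at most $2\beta$, since they may later receive goods not eligible for $i_0$. The makespan analogy fails because there an assignment certifies that \emph{all} eligible machines are heavy. Here it only certifies that the recipient is poorer than $i_0$, which places no constraint on $\PS_{i_0}$.

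The missing idea is to let the share enter the rule additively, discounted by a logarithmic factor. The paper assigns each good to an eligible agent maximizing $d_\alpha(i)=\frac1\alpha\PS_i-L_i$ with $\alpha=\log_2 m$, where the share includes the arriving good. For unit values, the potential $\sum_i m^{d_\alpha(i)}$ never increases:
\begin{itemize}
\item the other $k_j-1$ eligible agents' terms grow by the factor $m^{1/(\alpha k_j)}=2^{1/k_j}$, and each is at most the recipient's term;
\item the recipient's term shrinks by the factor $2^{1/k_j}/m$;
\item $(k-1)2^{1/k}+2^{1/k}/m-k<0$ for $2\le k\le m$ and $m\ge 3$.
\end{itemize}
Hence the potential stays at most its initial value $m$, so every deficit stays at most one unit, which is exactly $\alpha$-\PropOne{}. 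Arbitrary values are handled by rounding up to powers of two and running the unit algorithm separately per class, at a constant loss.

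The exact form of the rule matters. With $\alpha=1$ the same deficit rule is already off by $\Omega(m)$ (shown in the paper's appendix), and the ratio rule fails on the example above.
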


Thus the same eligibility constraints that force a $\Theta(\sqrt m)$ loss for online chores cost only $\Theta(\log m)$ for online goods.

\paragraph{Envy-based fairness.}
For envy-based fairness, the picture is different. While proportionality separates goods from chores, envy brings them back together: under identical valuations, the two settings are equivalent for eligibility-aware envy notions. We therefore state the results for chores, and they transfer directly to goods.

\begin{theorem*}[informal; see \Cref{sec:envy}]
Strong envy-based guarantees are largely impossible under restricted assignment: strong EF1 may fail to exist even offline, and no deterministic online algorithm can guarantee any approximation of it. In contrast, weak EF1 can be maintained online for general processing times, while weak EFX exists offline but cannot be maintained online.
\end{theorem*}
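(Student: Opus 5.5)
The statement bundles four claims, and I would prove each separately under the eligibility-aware definitions for chores. Write $L_i$ for the load of machine $i$. For a pair $i,i'$, write $L_i^{(i')}$ for the total size of the jobs on $i$ that are also eligible for $i'$. Machine $i$ \emph{weakly envies} $i'$ if $L_i^{(i')} > L_{i'}$. It \emph{strongly envies} $i'$ if $L_i^{(i')} > L_{i'}^{(i)}$. Weak/strong EF1 and EFX allow removing one (respectively any) job from the restricted bundle of $i$. The transfer to goods is the equivalence announced before the statement, so only chores need treatment.

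\textbf{Weak EF1 online.} I would analyse the greedy rule: assign each arriving job to a currently least-loaded machine among its eligible machines. Fix $i,i'$ and let $j$ be the last job on $i$ that is eligible for $i'$. When $j$ arrived, $i'$ was eligible, so the load of $i$ just before $j$ was at most the load of $i'$ at that time. That load of $i'$ is at most $L_{i'}$ now, since loads never decrease. Every job in the restricted bundle of $i$ other than $j$ arrived before $j$, so
\[
L_i^{(i')} - p_j \le L_i^{\text{before } j} \le L_{i'},
\]
which is weak EF1.

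\textbf{Weak EFX offline.} I would use the same greedy rule with jobs processed in non-increasing order of size. Then the last job $j$ of the restricted bundle of $i$ is also its smallest job, so the displayed inequality already removes a smallest job, giving weak EFX.

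\textbf{Weak EFX online impossibility.} I would use an adversary against an arbitrary deterministic algorithm, with sizes increasing over time so that the property exploited offline is unavailable. First release a tiny job of size $\varepsilon$ eligible for $\{1,2\}$; without loss of generality it goes to machine $1$. Then release a job of size $1$ eligible for $\{1,2\}$. If it goes to $1$, then machine $1$ has bundle $\{\varepsilon,1\}$ against $L_2=0$, and removing the smallest job $\varepsilon$ still leaves $1>0$, a violation. So it must go to $2$. Next the adversary releases jobs that exploit this forced choice, for instance further jobs for $\{1,2\}$ of size just above $1$, or jobs involving a third machine, until every option creates a violation. The main obstacle is finding this gadget; I would try a finite case analysis over two or three machines with sizes $\varepsilon,1,1+\delta$.

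\textbf{Strong EF1.} Strong comparisons see only the jobs common to both eligibility sets, so an adversary can keep the right-hand side at $0$. Online, release two jobs eligible for $\{1,2\}$. If both go to the same machine, say $1$, then after removing one job machine $1$ still has positive restricted load while $L_2^{(1)}=0$. This violates $\alpha$-strong EF1 for every $\alpha$, so the algorithm must split them. The adversary then adds jobs on overlapping pairs such as $\{1,3\}$ and $\{2,3\}$ with suitable sizes, forcing some machine to have two jobs whose eligibility sets both contain the same machine $i'$, while $i'$ holds nothing eligible for it. The resulting ratio is unbounded.

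Offline, I would build a small instance, for example three or four machines with jobs eligible for pairs and one job eligible for all machines. The key point is that the jobs shared by a pair cannot be balanced for all pairs at once. I would verify nonexistence by an exhaustive case analysis over allocations of the ambiguous jobs. Of the four claims, I expect this offline counterexample to need the most trial and error.
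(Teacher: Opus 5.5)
Your two positive weak-envy arguments are correct. Taking $j$ to be the last $i'$-eligible job on $i$ is a slightly more direct version of the paper's weak EF1 argument, and the sorted-greedy argument for offline weak EFX is the paper's own. The online weak EFX lower bound is left unfinished, but your own candidate closes it at once. With $\varepsilon$ on machine $1$ and the unit job forced onto machine $2$, release a third job of size $1+\delta$ eligible for $\{1,2\}$. On machine $1$ it leaves $1+\delta>1$ after removing $\varepsilon$; on machine $2$ it leaves $1+\delta>\varepsilon$ after removing the unit job. The paper uses sizes $1,1,2$.

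The genuine gaps are in both strong EF1 claims, and the mechanism you propose cannot work. Under the strong notion, a job whose eligibility set is exactly $\{a,b\}$ lies in $C(a,b)$ and in no other $C(i,i')$, so pair-jobs never couple different pairs. Hence, if every eligibility set is a singleton, a pair, or the full machine set, a strong EF1 allocation always exists, for any sizes:
\begin{itemize}
\item First assign the all-eligible jobs by least-loaded greedy, which gives EF1 among identical chores.
\item Then, independently for each pair $\{a,b\}$, give its pair-jobs one by one to whichever of $a,b$ has the smaller load restricted to $C(a,b)$.
\item If $a$ strongly envies $b$ at the end, remove the last $\{a,b\}$-job placed on $a$. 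If $a$ has no such job, remove the last all-eligible job placed on $a$.
\end{itemize}
So your suggested offline instance (pair-jobs plus all-eligible jobs) is never a counterexample. For $m=3$ no counterexample exists at all, since every eligibility set is then of one of these types. The missing idea is an eligibility structure whose intermediate-size sets interlock rigidly. The paper takes seven unit jobs whose eligibility sets are the $4$-element blocks of the biplane of order $2$: any two blocks share exactly two machines, and any two machines lie in exactly two common blocks. It adds two all-eligible jobs of size $p\in(1/2,1)$. It then shows that each machine must receive exactly one unit job. The induced strong-envy graph has equal in- and out-degrees and at most one isolated vertex. Hence, wherever the size-$p$ jobs go, some receiver strongly envies a machine holding nothing in their common set.

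Online, the same observation defeats your adversary. Against pair-jobs only, the per-pair greedy rule maintains exact strong EF1, so jobs on overlapping pairs such as $\{1,3\}$ and $\{2,3\}$ can never force two jobs of $C(i,i')$ onto $i$ while $i'$ holds none. You need a job eligible for all three machines, released after the pairs have been oriented. The paper proceeds in three steps:
\begin{itemize}
\item It releases one unit job per pair.
\item For each pair that deviates from the cyclic target $\{1,2\}\mapsto 1$, $\{2,3\}\mapsto 2$, $\{1,3\}\mapsto 3$, it releases a job of size $p=\alpha+1$ on that pair, which is forced onto the target. Now every machine leads exactly one pair, either by a factor of at least $p$ or against restricted load zero.
\item A final all-eligible job of size $p$ lands on some machine $i$ that leads its pair $\{i,i'\}$. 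It is a largest job of $i$ in $C(i,i')$, and removing it still leaves $L_{i,i'}>\alpha L_{i',i}$.
\end{itemize}
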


These results give, to the best of our knowledge, the first systematic study of online proportional and envy-based fairness under eligibility constraints. They identify the exact online price of eligibility for proportionality, reveal a sharp $\sqrt m$ versus $\log m$ chore--goods divide, and show that envy-based fairness follows a different logic: goods and chores coincide, while strong and weak notions diverge. \Cref{tab:results} summarizes the results. Where a notion is marked as not always achievable, we do not determine its best possible approximation; in particular, the optimal factors for $\alpha$-Prop and $\alpha$-$\PropX$, offline and online, remain open (\Cref{sec:conclusion}).

\begin{table}[t]
\centering
\begin{tabular}{lccc}
\toprule
 & & \multicolumn{2}{c}{Online} \\
\cmidrule(lr){3-4}
Notion & Offline & Chores & Goods \\
\midrule
Prop, $\PropX$ & \xmark & \xmark & \xmark \\
$\PropOne$ & \cmark & $\Theta(\sqrt m)$-$\PropOne$ & $\Theta(\log m)$-$\PropOne$ \\
\midrule
strong EF1, strong EFX & \xmark & \multicolumn{2}{c}{no approximation} \\
weak EFX & \cmark & \multicolumn{2}{c}{\xmark} \\
weak EF1 & \cmark & \multicolumn{2}{c}{\cmark} \\
\bottomrule
\end{tabular}
\caption{Summary of results. Online bounds are for deterministic algorithms.}
\label{tab:results}
\end{table}%

\subsection{Overview of techniques}

We briefly highlight the main ideas behind our results. The offline proportionality results are the one place where classical restricted-assignment machinery applies directly: after normalizing loads by the eligibility-aware proportional shares, the assignment-LP rounding of Lenstra, Shmoys, and Tardos~\cite{lenstra1990approximation} yields $\PropOne$ allocations for chores, and the modification by Bez{\'a}kov{\'a} and Dani~\cite{bezakova2005allocating} yields $\PropOne$ for goods. Online, however, the analysis has to track local eligibility-aware fairness slack rather than a global load or value objective, which is why standard online load-balancing arguments do not directly apply.

For online chores (\Cref{sec:chore_online}), the difficulty already appears for unit-size jobs, where it is caused purely by eligibility.
This is also where standard online load-balancing intuition first breaks down: assigning a job to a least-loaded eligible machine gives an $O(\log m)$-competitive algorithm for makespan, but the natural fairness analogue, assigning to a machine with maximum current $\PropOne$-slack, can fail to achieve an $o(m)$-approximation; see \Cref{app:greedy-fails}.
We instead use a greedy rule based on approximate proportional slack: each arriving job is assigned to an eligible machine with maximum $\alpha$-slack, where $\alpha=\Theta(\sqrt m)$.
The analysis uses a deliberately non-smooth potential, with a roughly exponential part for small positive slack and a discrete jump at zero.
This jump makes negative slack globally expensive and lets us bound how many machines can violate the approximate proportionality threshold at once.
When a job is assigned to one eligible machine, its load hurts that machine, but its proportional-share contribution helps all other eligible machines; the potential shows that, under the greedy choice, these gains compensate for the loss.

Arbitrary processing times break this simple greedy logic, since one large job can destroy the slack of the receiving machine.
The algorithm therefore works with $\alpha$-\PropOne{}-slack and distinguishes two assignment modes. Either the job \emph{fits} on a machine with enough slack to absorb it, or it is assigned to a machine whose previous largest job can be \emph{shadowed} by the new job's proportional-share contribution.
The core of the analysis is a job-dependent potential with
a local penalty that guarantees that whenever no \emph{fit}-machine exists, many eligible machines satisfy the \emph{shadow} condition. This lifts the $\Theta(\sqrt m)$ guarantee from unit-size jobs to arbitrary processing times.

For online goods (\Cref{sec:goods_setting}), the same eligibility constraints lead to a different scale. For unit values, we use the natural sign-reversed analogue of approximate slack: each arriving good is assigned to an eligible agent with maximum $\alpha$-deficit, where $\alpha=\Theta(\log m)$. The potential is correspondingly simpler: it penalizes large deficits exponentially.
The key difference from chores is that an assigned chore creates one large load increase, whereas an assigned good creates only small benchmark increases for the other eligible agents. This makes the deficits easier to control than the slacks; the exponential potential shows that the greedy choice keeps all deficits within one unit at the $\Theta(\log m)$ scale.
For arbitrary values, we group goods into value classes and run the unit-value algorithm separately in each class. The bucketing loses only another constant factor, giving the optimal $\Theta(\log m)$ guarantee.

The envy-based results (\Cref{sec:envy}) use different ideas. The main structural observation is that, unlike proportionality, eligibility-aware envy preserves the equivalence between goods and chores:
under identical valuations, the same allocation satisfies the corresponding envy relations in the other setting with reversed direction of comparison.
This equivalence is immediate without eligibility constraints, but it is no longer automatic once envy comparisons are restricted by eligibility. It nevertheless holds for our strong and weak envy notions, allowing us to prove all envy results in one setting and transfer them to the other.
The weak notion then behaves much like the unrestricted case and requires only minor modifications in the proofs.
The impossibility results show where this analogy breaks down: strong EF1 allocations fail to exist in an instance whose eligibility structure is based on the biplane of order $2$. %

\subsection{Related Work}\label{sec:related-work}

\paragraph{Fair division of indivisible items.}
The classical fairness benchmarks are \emph{envy-freeness}~\cite{F67resource,V74equity}, requiring that no agent prefer another's bundle, and \emph{proportionality}~\cite{Steinhaus48}, guaranteeing each agent an equal share of the total value; see~\cite{AAB+23fair} for a survey. Since neither is always attainable with indivisible items, relaxations are studied: EF1~\cite{LMMS04,Budish11} asks that envy be eliminable by removing one item from the envied bundle, EFX~\cite{CKM+19} requires this for every item, and $\PropOne$~\cite{CFS17} that a proportional share be reached by adding one item. These notions extend from goods to chores~\cite{ACIW22}. Offline, EF1 is achievable for both goods~\cite{LMMS04} and chores~\cite{ACIW22,BSV21}, whereas EFX allocations are known to exist only in special cases~\cite{PR20,CGM24} and may fail to exist in general, both for goods~\cite{AMMSW26,MS26} and for chores~\cite{CS24chores}.

A related line of work studies fair division under combinatorial constraints; see~\cite{S21constraints} for a survey. Several works investigate the fair allocation subject to pairwise \emph{conflict constraints}, where two conflicting items cannot be assigned to the same agent~\cite{LLZ21fair, HH22fair, KEG+24fair, IMY25dividing, EGI+26fair, MarkakisS2026matchings, Haviv2026colorability}. These works study the existence of allocations that are simultaneously approximately envy-free and economically efficient (e.g., maximal or Pareto optimal), in both the goods and~chores~settings. Matroid constraints on bundles~\cite{CooksonS2026matroid} and budget constraints in the online setting~\cite{CohenTGW2026budget} restrict what an agent may hold; in all of these models, however, every item is eligible for every agent, and the constraint acts on bundles rather than on agent--item pairs.

Closest to our model are two lines of work in which items are tied to specific agents. Under \emph{restricted additive} valuations~\cite{ARS22ef2x,KSS24tale}, every good $j$ has a value $p_j$ and each agent values it at either $p_j$ or $0$; eligibility is thus encoded in the valuations, but a good may still be given to an agent who values it at zero. Known guarantees are EF2X and $(\sqrt{2}/2)$-EFX for goods~\cite{ARS22ef2x,KSS24tale} and EFX together with MMS for chores~\cite{LWZ26chores}. In the \emph{graphical} model of Christodoulou et al.~\cite{CFKS23graphs}, agents are vertices and goods are edges valued only by their endpoints; an \emph{orientation} gives each edge to one of its endpoints and is exactly a restricted assignment with $|\M(j)|=2$, with multigraphs and hypergraphs covering larger eligibility structures. EFX orientations need not exist and are NP-hard to recognize~\cite{CFKS23graphs}, with a characterization of graphs that always admit them in~\cite{ZM25structure}, whereas EF1 orientations always exist for monotone valuations and arbitrary eligibility sets~\cite{DEGK25orientations}; further work covers chores and mixed items~\cite{ZWLL24landscape}, EFX allocations on multigraphs~\cite{AAMMS26multigraphs}, and MMS~\cite{CM26mms}. Most relevant to us, Sun, Wang, and Li~\cite{SWL26orientations} recently and independently proposed, under the name \emph{relevance constraints} and for general additive valuations, the same eligibility-aware proportional share that we use. They show offline that proportional orientations may fail to exist and are NP-hard to recognize, that $\PropOne$ orientations always exist and can be computed in polynomial time, and that $\PropX$ orientations may fail to exist; for identical valuations, our offline results in \Cref{sec:offline} are special cases of theirs. All of these works are offline. Our focus is the online problem, for which no prior results exist.

In online fair division, items arrive sequentially and must be allocated irrevocably~\cite{AAG+15online,AW19strategy,BKP+24fair,HHI+24class,KMN+26online,HPP+19achieving}; see~\cite{AW20online} for a survey. The unrestricted picture for $\PropOne$ is now well understood. Under identical valuations, $\PropOne$ is achievable online for both goods and chores~\cite{ChenT2026competitive}. Under general additive valuations, exact $\PropOne$ for goods is achievable when each agent's total value is known in advance~\cite{neoh2026online}; without this information, natural greedy rules give no bounded guarantee~\cite{choo2025approximateproportionalityonlinefair}, no online algorithm achieves a constant factor against an adaptive adversary~\cite{NeohT2026closing}, and the best deterministic guarantee is $O(\log(mn))$-$\PropOne$, with a matching lower bound growing in the number of items~\cite{DuS2026online}. Proportionality under predictions of each agent's total or maximum value has also been studied~\cite{banerjee2022proportionallyfaironlineallocation,choo2025approximateproportionalityonlinefair,NeohT2026closing}, as have online MMS guarantees for chores~\cite{SongTWZ2025mms} and online EF1 with a small buffer of reassignable items~\cite{AmanatidisGMP2026buffers}. None of these works allows eligibility constraints; under identical valuations, they give exact online $\PropOne$, so the $\Theta(\sqrt m)$ and $\Theta(\log m)$ losses we establish are due to eligibility alone. Two related temporal variants relax the online model in different directions: in \emph{temporal} fair division, the arrival order is known in advance and approximate envy-freeness must hold at every step~\cite{ELL+25temporal,CES25temporal,ChoiL2026temporal,ChoiLT2026manna}, possibly with assignments deferred within a fixed window~\cite{CL26temporal}; in \emph{repeated} fair division, the same set of items recurs over time~\cite{ILN+24repeated,CN24repeatedly}. Fair division has also been combined with scheduling structure in the form of interval scheduling with machines as agents~\cite{BorstSV2026interval}, where the constraint is non-overlap on a machine rather than eligibility.

\paragraph{Restricted assignment in scheduling.}
Restricted assignment is a central model in scheduling and one of the most prominent eligibility-constrained special cases of unrelated-machine scheduling. In the chore interpretation, the global objective is makespan minimization: Lenstra, Shmoys, and Tardos~\cite{lenstra1990approximation} gave a $2$-approximation for unrelated machines and showed that no factor below $3/2$ is possible unless $\mathrm{P}=\mathrm{NP}$; the best known polynomial-time guarantee is $2-1/m$~\cite{ShchepinV05}, even for restricted assignment, with better bounds known in quasi-polynomial time and for special cases~\cite{jansen2020quasi,chakrabarty20141,annamalai2019lazy,DBLP:conf/soda/BamasLMRS24}. In the goods interpretation, the corresponding global objective is max-min fair allocation, or Santa Claus~\cite{bansal2006santa}. While the general unrelated version admits an $n^\varepsilon$-approximation for every constant $\varepsilon>0$~\cite{chakrabarty2009allocating}, constant-factor approximations are known for restricted assignment~\cite{bansal2006santa,feige2008allocations,DBLP:journals/talg/AsadpourFS12,annamalai2017combinatorial,HaxellS2023improved}. Recent work relates the chore and goods viewpoints for global scheduling objectives~\cite{DBLP:conf/soda/BamasLMRS24}, although the reductions fail in the restricted-assignment setting; our equivalence for envy-based fairness is analogous, but applies to local eligibility-aware fairness notions.

Online variants of these global objectives behave differently: makespan minimization admits a tight $\mathcal{O}(\log m)$ bound even for restricted assignment~\cite{DBLP:journals/jacm/AspnesAFPW97}, whereas adversarial online max-min allocation is already pessimistic for identical valuations~\cite{DBLP:conf/esa/AzarE97}. Although the feasibility model is the same, our objectives are local rather than global: each machine is compared to its own eligibility-aware benchmark, or to other machines only through jobs relevant under eligibility constraints. Thus offline LP-rounding machinery remains useful, but both the objectives and the standard methods of online load balancing  differ from what is needed for our local fairness guarantees.

\section{Preliminaries}
\label{sec:preliminaries}
We phrase the problem in the language of restricted assignment: machines correspond to agents and jobs to items, and we use the two vocabularies interchangeably. We consider the problem of allocating $n$ jobs among a set $\M$ of $m$ machines. Each job $j \in [n] \coloneqq \{1,2,\ldots,n\}$ has a size $p_j > 0$ and an eligibility set $\M(j) \subseteq \M$ of machines that may receive it. We write $k_j \coloneqq |\M(j)|$. An allocation $\sigma$ assigns each job $j$ to some machine $\sigma(j) \in \M(j)$, and we write
\[
L_i(\sigma) \coloneqq \sum_{j\in [n] \,:\, \sigma(j) = i} p_j
\]
for the total size assigned to machine $i\in \M$.

We study two readings of the model. In the \emph{chore} setting, jobs are burdens: $p_j$ is a disutility, or processing time, $L_i(\sigma)$ is the load of machine $i$, and each machine wants its load to be small. In the \emph{goods} setting, jobs are desirable: $p_j$ is a utility or value, $L_i(\sigma)$ is the total utility of machine $i$, and each machine wants it to be large. %

\paragraph{Proportional share.}
We define the \emph{proportional share} of machine $i$ as
\[
\PS_i \coloneqq \sum_{j \in [n] \,:\, i \in \M(j)} \frac{p_j}{k_j},
\]
the amount machine $i$ receives in the natural fractional relaxation in which every job is split evenly among its eligible machines. The share accounts for eligibility directly: $\PS_i$ depends only on the jobs machine $i$ can actually receive. Since each job's size is fully distributed among its eligible machines,
\[
\sum_{i \in \M} \PS_i = \sum_{j\in[n]} p_j,
\]
so the shares partition the total size exactly. Throughout the paper, $\PS_i$ serves as our eligibility-aware benchmark: an upper bound to stay below in the chore setting, and a lower bound to reach in the goods setting.

\paragraph{Proportionality.}
An allocation $\sigma$ is \emph{proportional} ($\Prop$) if every machine meets its share: $L_i(\sigma) \le \PS_i$ for all $i \in \M$ in the chore setting, and $L_i(\sigma) \ge \PS_i$ for all $i \in \M$ in the goods setting. Because exact proportionality is generally unattainable with indivisible jobs, we work with two standard relaxations, whose form depends on the setting. For chores, the relaxations discount a job the machine \emph{received}; for goods, they credit a job the machine is eligible for but \emph{did not receive}.

In the chore setting, $\sigma$ satisfies \emph{proportionality up to one job} ($\PropOne$) if, for every machine $i$,
\[
	L_i(\sigma) - p_{\max}(i, \sigma) \, \le\  \PS_i,
\]
where $p_{\max}(i, \sigma) \coloneqq \max_{j \, : \, \sigma(j)=i}p_j$ denotes the size of the largest job allocated to $i$ or $0$ if $i$ does not receive a job.
It satisfies the stronger \emph{proportionality up to any job} ($\PropX$) if $L_i(\sigma)-p_j\le \PS_i$ for every $i\in\M$ and every job $j$ with $\sigma(j)=i$.

In the goods setting, $\sigma$ satisfies $\PropOne$ if, for every machine $i\in\M$, either $L_i(\sigma)\ge \PS_i$, or there is a job $j$ with $i\in\M(j)$ and $\sigma(j)\ne i$ such that
\[
	L_i(\sigma)+p_j \, \ge\,  \PS_i.
\]
It satisfies $\PropX$ if this holds for every such job $j$, that is, if $L_i(\sigma)+p_j\ge \PS_i$ for every $i\in\M$ and every job $j$ with $i\in\M(j)$ and $\sigma(j)\ne i$.
Note that only jobs machine $i$ is eligible for may be credited; allowing arbitrary jobs would defeat the purpose of an eligibility-aware notion. In both settings, $\Prop$ implies $\PropX$, and $\PropX$ implies $\PropOne$. We defer the eligibility-aware envy notions to \Cref{sec:envy}, where they are introduced together with the corresponding results.

\paragraph{Approximate proportionality.}
Our online results are stated in terms of an approximate version of $\PropOne$, similar to the notion of \cite{choo2025approximateproportionalityonlinefair}.
Let $\alpha \ge 1$. In the chore setting, an allocation $\sigma$ satisfies \emph{$\alpha$-$\PropOne$} if for every machine $i \in \M$ that receives at least one job,
\[
L_i(\sigma) - p_{\max}(i, \sigma) \; \le \; \alpha \cdot \PS_i.
\]
In the goods setting, $\sigma$ satisfies $\alpha$-$\PropOne$ if for every machine $i \in \M$, either $L_i(\sigma) \ge \tfrac{1}{\alpha}\,\PS_i$, or there is a job $j$ with $i \in \M(j)$ and $\sigma(j) \ne i$ such that
\[
L_i(\sigma) + p_{j} \; \ge \; \frac{1}{\alpha} \cdot \PS_i.
\]
We may omit $\sigma$ from our notation whenever the allocation is clear from the context.

\subsection{Offline allocation of chores and goods}\label{sec:offline}
We first consider the offline problem, where the entire instance is known in advance. The results here are complete: in both settings, $\PropOne$ is the strongest proportionality guarantee that survives eligibility, and it is achievable in polynomial time. These facts were obtained independently, for general additive valuations and mixed items, by Sun, Wang, and Li~\cite{SWL26orientations}; we include short proofs via LP rounding for completeness. We defer detailed proofs to \Cref{app:offline}.

Although the goods and chore settings share their notation, their fairness notions do not coincide: an allocation may be fair in one reading and unfair in the other, in either direction.

\begin{restatable}{observation}{ObsPropOneGoodsNeqChores}
\label{obs:propone-goods-neq-chores}
There is an instance of restricted assignment and an allocation that satisfies \PropOne{} in the chore setting but not in the goods setting. Moreover, there is an instance and an allocation that satisfies \PropOne{} in the goods setting but not in the chore setting.
\end{restatable}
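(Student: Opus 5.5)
Both parts can be settled with small explicit instances using unit-size jobs. The mechanism that separates the two readings is this: in the chore setting, the slack for a machine only concerns jobs it \emph{received}, while in the goods setting, the credit only concerns jobs it \emph{did not receive}. So I would build one instance where a machine loses many eligible jobs without any receiving machine becoming overloaded, and one where a machine is overloaded while everyone else is compensated by a single credited job.

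\emph{Chore-\PropOne{} but not goods-\PropOne{}.} Take $m=5$ machines and three unit jobs $j_1,j_2,j_3$ with $\M(j_t)=\{2,t+2\}$. Assign each job $j_t$ to machine $t+2$. Then $\PS_2 = 3/2$ and $\PS_{t+2}=1/2$.
\begin{itemize}
\item In the chore reading, machine $t+2$ has $L_{t+2}-p_{\max}=1-1=0\le \PS_{t+2}$. Machines $1$ and $2$ have load $0$, so every machine satisfies the condition.
\item In the goods reading, machine $2$ has $L_2=0<3/2$. Every eligible unrassigned job gives only $L_2+p_j=1<3/2$, so \PropOne{} fails at machine $2$.
\end{itemize}

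\emph{Goods-\PropOne{} but not chore-\PropOne{}.} Take $m=3$ machines and three unit jobs, each eligible for all machines, and assign all of them to machine $1$. Each share is $\PS_i=1$.
\begin{itemize}
\item In the chore reading, machine $1$ has $L_1-p_{\max}=3-1=2>1=\PS_1$, so \PropOne{} fails.
\item In the goods reading, machine $1$ has $L_1=3\ge 1$. Machines $2$ and $3$ each have load $0$ and are eligible for an unreceived unit job, giving $0+1\ge 1$. Hence the allocation is goods-\PropOne{}.
\end{itemize}

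There is no real obstacle here. The only care needed is choosing parameters so that the inequalities are strict where they must fail; a single-job credit or discount must not close the gap, which is why three jobs are used in each instance. Verification is then a direct check of the definitions from \Cref{sec:preliminaries}.
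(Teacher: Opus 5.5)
Your proof is correct and takes the same approach as the paper: small explicit unit-size instances checked directly against the chore and goods \PropOne{} definitions. Your second instance is essentially the paper's, which gives just two unit goods to one of three agents, so your remark that three jobs are needed does not hold there. For the first part, the paper uses three agents and four unit jobs eligible for everyone, assigned two, two and zero; this shows the separation already occurs without eligibility restrictions, though your star-shaped instance works equally well.
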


The two settings therefore require separate treatment. Already offline, eligibility rules out the stronger proportionality notions.

\begin{restatable}{observation}{ObsOfflineNoPropX}\label{obs:offline-no-propx}
In restricted-assignment chore and goods allocation, \PropX{} is not always achievable. Consequently, exact proportionality is not always achievable either.
\end{restatable}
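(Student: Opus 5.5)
The plan is to exhibit, for each setting, one small instance on which \emph{every} allocation violates \PropX{}. The second sentence of \Cref{obs:offline-no-propx} then follows at once, because $\Prop$ implies $\PropX$ in both settings, as noted in the preliminaries. Since the number of allocations is finite, each counterexample reduces to a case analysis over who receives the few ``large'' jobs.

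\emph{Chores.} I would take two machines $A,B$ and three jobs:
\begin{itemize}
\item a job $g$ with $p_g=2$ and $\M(g)=\{A,B\}$;
\item a job $a$ with $p_a=\varepsilon$ and $\M(a)=\{A\}$;
\item a job $b$ with $p_b=\varepsilon$ and $\M(b)=\{B\}$,
\end{itemize}
where $0<\varepsilon<1$. Then $\PS_A=\PS_B=1+\varepsilon$, and jobs $a$ and $b$ are forced onto $A$ and $B$ respectively. By symmetry we may assume $g$ goes to $A$. Then $A$ holds $\{g,a\}$ with $L_A=2+\varepsilon$, and removing the smallest job $a$ leaves $2>1+\varepsilon=\PS_A$. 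So \PropX{} fails for every allocation. (\PropOne{} still holds, since removing $g$ leaves $\varepsilon$, consistent with the rest of the section.) The idea is that a job eligible for many machines contributes only a fraction of its size to each share, while a small private job guarantees that the receiving machine cannot discount the big job.

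\emph{Goods.} The \PropX{} condition for goods is violated only when some agent $i$ has a missed eligible good $j$ with $L_i+p_j<\PS_i$. The construction therefore has to force some agent to miss a \emph{small} good while receiving little. The natural analogue of the chore example fails: with two agents and one big good, the agent without the big good can take all the small goods, and its only missed good is then the big one, which satisfies the condition. So I would look for a cyclic structure in which this escape is blocked.

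My first attempt is a family of big and tiny goods placed on the pairs of a small eligibility structure. The counting argument would run as follows.
\begin{enumerate}
\item Argue by pigeonhole that some agent receives no big good.
\item Show that this agent cannot simultaneously collect every small good it is eligible for, because those small goods are shared with an agent that itself needs them.
\item Conclude that this agent misses a small good and lies below its share even after adding that good.
\end{enumerate}
Tuning the sizes (big goods of value $2$, tiny goods of value $\varepsilon$) would make each share close to the number of big goods divided by the pair size. The key is to choose the eligibility sets so that the number of big goods is smaller than the number of agents who need one.

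This goods step is the main obstacle. Simple triangles with one big and two tiny goods per edge still admit \PropX{} allocations via cyclic assignment, so I expect to need either a larger structure (for example, more agents than big goods, with each tiny good shared across agents who each lack a big good) or a short computer-aided search over small instances to find the right gadget.
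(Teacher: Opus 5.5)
Your chore counterexample is correct and matches the paper's up to scaling. The paper uses a shared job of size $1$ with $\varepsilon<1/2$, you use size $2$ with $\varepsilon<1$. Passing from \PropX{} to \Prop{} via the implication is also how the paper argues.

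The goods half, however, has a genuine gap: you never exhibit an instance. The plan stops at step 2. You would need to show that the agent without a big good also misses an eligible tiny good, and that is left to an unspecified eligibility gadget or a computer search. The statement claims \PropX{} fails in both settings, and your consequence for exact proportionality in the goods setting depends on this same counterexample. So only the chore half of the observation is proved.

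The missing idea is that no eligibility structure is needed for goods. Your own hint points the right way, but the condition must be more agents than goods \emph{in total}, not just more agents than big goods. Then some agent receives an empty bundle, and every eligible good it misses, in particular the smallest one, must on its own reach that agent's share.

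Concretely, this is the paper's instance: three agents and two goods of values $1$ and $\varepsilon$, both eligible for all agents. Every share is $(1+\varepsilon)/3$. Some agent $i$ receives nothing, so $L_i=0$, and the small good is eligible for $i$ but allocated elsewhere. \PropX{} would then require $\varepsilon\ge(1+\varepsilon)/3$, which fails for $\varepsilon<1/2$; the paper takes $\varepsilon<1/3$.

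So the ``escape'' you were trying to block with a cyclic structure is blocked simply by forcing an empty bundle. This also shows the goods obstruction exists even without eligibility constraints.
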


In contrast, $\PropOne$ remains achievable.
The proof for the chore setting is a direct application of the rounding argument of Lenstra, Shmoys, and Tardos~\cite{lenstra1990approximation}.
After normalizing each eligible entry by the proportional share of its machine, the fractional allocation $x_{ij}=1/k_j$ satisfies the assignment LP with unit machine bounds.
Rounding yields an integral allocation in which~each machine violates its normalized bound by at most one assigned job in the chore interpretation; rescaling gives $\PropOne$.
The goods case is obtained analogously using the rounding theorem of Bez{\'a}kov{\'a} and Dani~\cite{bezakova2005allocating} based on the work of Lenstra, Shmoys and Tardos \cite{lenstra1990approximation}. See \Cref{app:offline}.

\begin{restatable}{theorem}{ThmOfflinePropOne}\label{thm:offline-propone}
For every instance of restricted-assignment chore and goods allocation, there exists a \PropOne{} allocation, and such an allocation can be computed in polynomial time.
\end{restatable}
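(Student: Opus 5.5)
We prove both parts by LP rounding, with the chore case following Lenstra, Shmoys, and Tardos~\cite{lenstra1990approximation} and the goods case following Bez{\'a}kov{\'a} and Dani~\cite{bezakova2005allocating}. In each case we normalize by the proportional shares and use the fractional allocation $x_{ij}=1/k_j$ for $i\in\M(j)$ as a feasible LP solution.

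\emph{Chores.} First dispose of machines with $\PS_i=0$: since all $p_j>0$, this happens only when no job is eligible for $i$, so $i$ receives nothing and is trivially fine. Drop such machines. For the remaining ones, consider the polytope
\[
\sum_{i\in\M(j)} x_{ij}=1 \ \forall j,\qquad \sum_{j: i\in\M(j)} p_j x_{ij}\le \PS_i \ \forall i,\qquad x\ge 0,
\]
with $x_{ij}$ defined only for eligible pairs. The point $x_{ij}=1/k_j$ satisfies every machine constraint with equality, so the polytope is nonempty. Take a vertex $x^*$, computable in polynomial time by LP. The standard counting argument says the support graph of $x^*$ (bipartite, jobs versus machines, edges where $0<x^*_{ij}<1$) is a pseudoforest. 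Integral jobs are assigned directly. The fractional jobs admit a matching that covers every fractional job and uses each machine at most once: orient each tree component toward a leaf job, and match each cycle alternately. This step is exactly LST.

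Every job is then assigned to an eligible machine. Machine $i$ receives its integral jobs, whose total size is at most $\PS_i$ minus the fractional contributions, plus at most one extra job $j'$. Hence $L_i\le \PS_i+p_{j'}\le \PS_i+p_{\max}(i)$, which is \PropOne{}. Note that no rescaling of $p$ is needed, because the LP constraint already uses the machine-specific right-hand side $\PS_i$. This is the ``normalized'' view.

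\emph{Goods.} We use the max-min LP with constraints $\sum_{j} p_j x_{ij}\ge \PS_i$, which is again feasible with $x_{ij}=1/k_j$ at equality. Here the difficulty is that LST rounding guarantees each machine loses at most one fractional job. We need to argue that
\[
L_i \ge \PS_i - p_{j}
\]
for some eligible job $j$ not given to $i$. Bez{\'a}kov{\'a}--Dani rounding of a vertex solution ensures that machine $i$ gets all the integral jobs with $x^*_{ij}=1$, and additionally all but at most one of its fractional jobs. If $i$ lost a fractional job $j$, then $j$ is eligible for $i$, is not assigned to $i$, and $L_i+p_j\ge \sum_j p_jx^*_{ij}\ge \PS_i$. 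If nothing was lost, then $L_i\ge \PS_i$ directly.

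The main obstacle is the matching step in the goods case. We need every machine to receive all its fractional jobs except at most one. Equivalently, in each pseudoforest component the jobs must be assigned so that each machine misses at most one incident fractional edge. The plan is to root each tree at a machine and send each job to its parent or child so that each machine misses only its parent-job. The counting needs care: trees may have more jobs than can be ``given'' freely. Each job has degree at least $2$ in the support, so every job has a child machine, and each non-root machine has exactly one parent-job. Assigning every job to one of its children makes each non-root machine miss at most its parent-job, and the root miss nothing. On a cycle, give each job to the next machine around the cycle, and handle the trees hanging off the cycle as above. I would verify this orientation argument carefully, including the polynomial-time vertex computation, and then conclude.
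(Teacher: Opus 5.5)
Your overall route is the paper's. You use the LP with the feasible point $x_{ij}=1/k_j$, round a vertex via Lenstra--Shmoys--Tardos for chores and via Bez\'akov\'a--Dani for goods, and conclude with the same two inequalities $L_i\le \PS_i+p_{\max}(i,\sigma)$ and $L_i+p_j\ge \sum_{j'}p_{j'}x^*_{ij'}\ge \PS_i$. For chores, your right-hand side $\PS_i$ is just the paper's normalization $q_{ij}=p_j/\PS_i$ with bound $1$, written unscaled. The chore half and the goods inequality chain are fine.

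The problem is the step you single out as the main obstacle and try to prove yourself: the orientation you propose for goods is backwards. Suppose you root a tree at a machine and send every job to one of its \emph{children}. Then every machine loses all of its child jobs and can gain only its parent job. For instance, take a root machine $r$ with two child jobs $j_1,j_2$, each having its own child machine: $r$ loses both $j_1$ and $j_2$. So your claim that the root misses nothing is false, and likewise any non-root machine with two child jobs misses both. What you describe is precisely the LST matching for chores, where each machine \emph{receives} at most one fractional job. Goods need the opposite property, that each machine \emph{misses} at most one. The same defect carries over to your unicyclic components: a cycle machine would miss its successor cycle job plus all jobs hanging below it.

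The repair is to reverse the direction. Root each tree component at a machine, and root the trees of a unicyclic component at their attachment vertex on the cycle. Send each cycle job to the next machine along the cycle, and send every other fractional job to its \emph{parent} machine. Every machine then receives all of its child jobs and misses at most one fractional job, namely its parent job or its successor on the cycle. In this direction you do not even need that fractional jobs have degree at least $2$, since every job has a parent machine. With this correction, or by simply invoking the Bez\'akov\'a--Dani rounding theorem as the paper does, your proof is complete and essentially identical to the paper's.
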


\section{Online Proportionality for Chores}
\label{sec:chore_online}
In the online model, jobs arrive sequentially and must be assigned irrevocably upon arrival. In contrast to the offline setting, the algorithm no longer has access to the full instance in advance and must make decisions without knowing future arrivals. We prove that the optimal approximation factor for \PropOne{} in restricted-assignment chore allocation is $\Theta(\sqrt m)$.

We start with unit-size jobs, where the difficulty comes solely from the eligibility constraints. We give a simple greedy algorithm based on $\alpha$-slack and prove that it achieves $O(\sqrt m)$-\PropOne{}. We complement this with a matching $\Omega(\sqrt m)$ lower bound for deterministic algorithms. Finally, we extend the upper bound to arbitrary processing times by replacing the simple greedy rule with a Fit/Shadow algorithm that can handle large arriving jobs.

\subsection{Unit Processing Times}
\label{sec:chores-unit}

We first consider unit-size jobs, where all difficulty comes from eligibility. A tempting rule is to assign each arriving job to an eligible machine with maximum current $\PropOne$-slack. This direct analogue of least-loaded machine balancing is too myopic and can fail to achieve an $o(m)$-approximation; see \Cref{app:greedy-fails}. We therefore use a slightly more conservative slack notion. For a parameter $\alpha\ge 1$, the $\alpha$-slack of machine $i$ is
\[
s_\alpha(i) \coloneqq \alpha\cdot \PS_i-L_i(\sigma).
\]
For unit-size jobs, controlling this slack before the last large-eligibility job is assigned to a machine is enough to recover $\alpha$-\PropOne{} after removing that job.

\paragraph{Greedy \boldmath$\alpha$-slack algorithm.}
Fix $\alpha=\sqrt m+1$. When a job $j$ arrives, assign it to an eligible machine with maximum $\alpha$-slack, breaking ties arbitrarily. The proportional share used in the slack includes the arriving job $j$, but not future jobs.

\begin{theorem} \label{thm:chore_online_unit_size_greedy_ratio}
Let $\alpha=\sqrt m+1$. In restricted-assignment chore allocation for jobs with unit processing times, Greedy $\alpha$-slack assignment yields an $\alpha$-\PropOne{} allocation.
\end{theorem}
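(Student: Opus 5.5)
The plan is to maintain the invariant that every machine has $\alpha$-slack at least $-1$ at all times. Since all jobs have unit size and $\PS_i$ never decreases as jobs arrive, this invariant at the end gives $L_i\le \alpha\PS_i+1$. Removing the largest (unit) job then yields $L_i-p_{\max}(i,\sigma)\le\alpha\PS_i$ for every machine that receives a job, which is $\alpha$-\PropOne{}.

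Preserving the invariant reduces to one local condition. When job $j$ arrives, every eligible machine first gains $\alpha/k_j$ of slack, and ineligible machines are unchanged. The chosen machine then loses exactly $1$. So the invariant survives as long as the chosen machine has slack at least $0$ after adding $j$'s share, i.e.\ some $i\in\M(j)$ had slack at least $-\alpha/k_j$ before $j$ arrived. Greedy picks the eligible machine of maximum slack, so it is enough that such a machine exists. I would split on $k_j$:

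\begin{itemize}
\item If $k_j\le\alpha$, each eligible machine's slack was at least $-1\ge-\alpha/k_j$ by the invariant, so we are done.
\item If $k_j>\alpha=\sqrt m+1$, it suffices that fewer than $k_j$ machines have negative slack. Then some eligible machine has slack $\ge0$ before the arrival, and a fortiori after adding $\alpha/k_j$.
\end{itemize}

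So the heart of the proof is a global claim: at every time, at most $\sqrt m$ machines have negative $\alpha$-slack.

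For this claim I would use a potential $\Phi=\sum_i f(s_\alpha(i))$ that counts negative slack heavily. Let $f(s)=C$ for $s<0$, with $C$ a large constant (of order $\sqrt m$ in the natural units), and let $f$ decrease roughly exponentially for $s\in[0,\alpha]$ to $0$ at $s\ge\alpha$. The jump at $0$ makes each negative-slack machine cost $C$. If I can show $\Phi\le\sqrt m\,C$ throughout (it starts near $m f(0)$, so I would calibrate $f(0)\ll C/\sqrt m$), the claim follows.

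The main step, and the one I expect to be the obstacle, is the per-arrival change of $\Phi$. Assigning $j$ to the greedy machine $i^*$ lowers $i^*$'s slack by $1$, which can cost up to a jump of $C$ if $i^*$ drops below zero. Meanwhile all $k_j$ eligible machines gain $\alpha/k_j$ each. Two facts drive the accounting:
\begin{itemize}
\item Machines with slack in $[-\alpha/k_j,0)$ are lifted back to nonnegative slack, each saving $C$ minus a small $f(0)$.
\item By maximality of $i^*$, every eligible machine has slack at most $s_\alpha(i^*)$. So when $i^*$ is near zero, all eligible machines are near or below zero, and their collective gains along the steep part of $f$ should pay for $i^*$'s loss.
\end{itemize}

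I would first try to prove $\Delta\Phi\le0$ whenever $i^*$ crosses zero, by showing that many eligible machines sit in the steep region. If this fails, I would fall back on an amortized argument. It would be charged against the total mass $\sum_{i\in B}(L_i-\alpha\PS_i)\le\sum_i L_i-\alpha\sum_{i\in B}\PS_i$, where $B$ is the set of negative-slack machines, using $\sum_i\PS_i=\sum_i L_i$.
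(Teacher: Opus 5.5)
Your reduction is correct and is essentially the paper's. The invariant $s_\alpha(i)\ge -1$ (the paper instead looks at the last job with $k_j>\alpha$ on each machine) reduces the theorem to showing that fewer than $k_j$ machines have negative $\alpha$-slack whenever a job with $k_j>\alpha$ arrives. But that counting claim is the whole difficulty, and your proposal does not prove it. Moreover, the potential you sketch does not go through as described.

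With your calibration $f(0)\ll C/\sqrt{m}$, the step ``$\Delta\Phi\le 0$ whenever $i^*$ crosses zero'' already fails on the first job. Let it be eligible for $k_j=\lfloor\alpha\rfloor+1\le\sqrt{m}+2$ machines, all at slack $0$. The chosen machine drops to $-(k_j-\alpha)/k_j<0$, raising $\Phi$ by $C-f(0)$. Each of the other $k_j-1$ eligible machines lowers $\Phi$ by at most $f(0)$, since $f\ge 0$. So you need $C\le k_jf(0)\le(\sqrt{m}+2)f(0)$, whereas the counting needs $C$ of order $\Phi^0/\sqrt{m}=\sqrt{m}\,f(0)$. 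Thus $C$ is pinned near $\sqrt{m}\,f(0)$. At $\alpha=\sqrt{m}+1$ the per-step inequality then becomes essentially an equality that needs a contribution from every eligible machine.

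Your flat branch $f\equiv C$ on negative slack gives zero contribution from every eligible machine whose slack is below $-\alpha/k_j$ and stays negative. The machines in $[-\alpha/k_j,0)$ you hope to lift need not exist. For example, flatten the paper's own potential below zero (drop its linear term on negative slack). Then a crossing step with $s_\alpha(i^*)$ just below $(k_j-\alpha)/k_j$, one eligible machine stuck below $-\alpha/k_j$, and all other eligible machines at nonnegative slack increases $\Phi$ by about $\alpha/(k_j\sqrt{m})>0$.

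Two further problems remain. Your cutoff $f\equiv 0$ on $[\alpha,\infty)$ breaks non-crossing steps: if all eligible machines are above $\alpha$ and $i^*$ drops just below it, $\Phi$ rises and nobody pays. Your plan does not address non-crossing steps at all. The amortized fallback also gives no bound on $|B|$: each term $L_i-\alpha\PS_i$ with $i\in B$ is at most $1$ by your invariant, and nothing ties the right-hand side to $\sqrt{m}$.

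The missing idea is the paper's exact potential $\Phi_i=\mathds{1}_{s_\alpha(i)<0}+\int_{s_\alpha(i)}^\infty\sqrt{m}^{-\rho(t)}\,dt$, with $\rho(t)=1$ for $t\le 0$ and $\rho(t)=\lceil t\rceil$ for $t>0$. It has three features:
\begin{enumerate}
\item a unit jump at $0$;
\item slope $\sqrt{m}^{-1}$ on all of $(-\infty,1]$, so negative machines keep paying at the same rate as machines just above zero;
\item a slope that drops by a factor $\sqrt{m}$ per unit interval above $1$.
\end{enumerate}
Let $\widehat s=\lfloor s_\alpha(i^*)+\alpha/k_j\rfloor$ (or $0$ if $s_\alpha(i^*)<0$). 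Greedy maximality and the monotone slopes give each of the other $k_j-1$ eligible machines a decrease of at least $\frac{\alpha}{k_j}\sqrt{m}^{-(\widehat s+1)}$. Meanwhile $i^*$ loses $\frac{k_j-\alpha}{k_j}<1$ of slack, so it crosses at most one breakpoint: either a slope change by a factor $\sqrt{m}$, or the unit jump $=\sqrt{m}\cdot\sqrt{m}^{-1}$ at zero. Its term therefore increases by at most $(\sqrt{m}+\frac{k_j-\alpha}{k_j})\sqrt{m}^{-(\widehat s+1)}$. The net change is $\sqrt{m}^{-(\widehat s+1)}(\sqrt{m}+1-\alpha)=0$, which is exactly where $\alpha=\sqrt{m}+1$ comes from. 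Finally, $\Phi^0=\sqrt{m}/(1-\sqrt{m}^{-1})$ while every negative machine contributes more than $1/(1-\sqrt{m}^{-1})$, so fewer than $\sqrt{m}$ machines are ever negative.
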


For the analysis, we use a potential function $\Phi$ which penalizes machines with small positive $\alpha$-slack exponentially, and gives an additional unit penalty to machines with negative $\alpha$-slack.

As the load from a job that has at most $\alpha$ eligible machines is covered by its own proportional share amplified by $\alpha$, the problematic jobs are those that have more than $\alpha$ eligible machines.
The machine $i$ receiving such a job $j$ incurs a full unit of load while its proportional share only increases by the small fraction $\alpha/k_j$, resulting in a net decrease of machine $i$'s $\alpha$-slack and an increase in $\Phi$.
However, when the job is allocated according to the greedy criterion, the $\alpha$-slacks of the other $k_j - 1$ eligible machines are not much larger than that of $i$.
Hence, the increase in proportional share on these other machines yields a substantial decrease of $\Phi$ that fully negates the increase, implying that $\Phi$ never becomes larger than its initial value.

Since machines with negative $\alpha$-slack incur a large discrete penalty, it follows that there cannot be $\sqrt{m}$ machines that have negative $\alpha$-slack at the same time.

To construct our potential function $\Phi \coloneqq \sum_{i \in \M} \Phi_i$, we define the potential term of a machine~$i$~as
\[
    \Phi_i \coloneqq \mathds{1}_{s_\alpha(i) < 0} + \mathds{1}_{s_\alpha(i) < 0} \cdot \sqrt{m}^{-1} \cdot \vert s_\alpha(i) \vert + \sum_{\hat{s} \in \bZ_{\geq 0}: s_\alpha(i) < \hat{s} + 1} \min \{1, \hat{s} + 1 - s_\alpha(i)\} \cdot \sqrt{m}^{-(\hat{s} + 1)}
\]
which can equivalently be expressed as
\begin{align*}
    \Phi_i = \mathds{1}_{s_\alpha(i) < 0} + \int_{s_\alpha(i)}^{\infty} \sqrt{m}^{-\rho(t)} dt
    \qquad \text{with} \qquad
    \rho(t) = \begin{cases}
        1 & t \leq 0\\
        \lceil t \rceil & t > 0.
    \end{cases}
\end{align*}

\Cref{fig:chore_online_unit_size_plot_phi_i} shows a plot of the term $\Phi_i$ as a function of the $\alpha$-slack $s_\alpha(i)$.
\begin{figure}[!t]
    \centering
    \begin{tikzpicture}[scale=0.8]
        \colorlet{blue}{blue!80!black}
        \begin{axis}[
            axis lines = middle,
            axis line style={-{Latex}, thick},
            tick style={black, thick},
			major tick style={black, thick},
			minor tick style={black, thick},
			tick label style={black},
            xlabel = {$s_\alpha(i)$},
            ylabel = {$\Phi_i$},
            ylabel style={at={(axis description cs:0.08,0.98)}, anchor=south east},
            xlabel style={at={(axis description cs:1.03,0)}, anchor=north west},
            xmin=-1.5, xmax=5.5,
            ymin=0, ymax=1.35,
            xtick={-1,0,1,2,3,4,5},
            ytick={0,1/9,1/3,1,11/9},
            yticklabels={
                0,
                $\frac{(\sqrt{m})^{-2}}{1 - (\sqrt{m})^{-1}}$,
                $\frac{(\sqrt{m})^{-1}}{1 - (\sqrt{m})^{-1}}$,
                $\frac{1}{1 - (\sqrt{m})^{-1}}$,
                $\frac{1}{1 - (\sqrt{m})^{-1}} + (\sqrt{m})^{-1}$},
            yticklabel style={xshift=-1.2em},
            clip=false,
            every axis plot/.append style={very thick},
        ]
        \addplot [
            domain=-1:0,
            color=blue,
            ]
            coordinates {
            (-1,11/9)(0,1)
            };
        \addplot [ %
            only marks,
            color=blue,
            mark=*,
            mark options={fill=white}
            ]
            coordinates {(0,1)
            };
        \addplot [ %
            only marks,
            color=blue,
            mark=*
            ]
            coordinates {(0,1/3)
            };
        \addplot [
            domain=0:5,
            color=blue,
            ]
            coordinates {
            (0,1/3)(1,1/9)(2,1/27)(3,1/81)(4,1/243)(5,1/729)
            };
        \end{axis}
    \end{tikzpicture}
    \caption{Plot of the potential term $\Phi_i$ as a function of the $\alpha$-slack of machine $i$.}
    \label{fig:chore_online_unit_size_plot_phi_i}
\end{figure}

\begin{lemma} \label{lem:chore_online_unit_size_few_machines_have_negative_s_alpha}
    If $\Phi$ is at most its initial value,
    then less than $\sqrt{m}$ machines have negative $\alpha$-slack.
\end{lemma}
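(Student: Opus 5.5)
The plan is to compare the initial value of $\Phi$ with a lower bound on $\Phi$ that holds whenever at least $\sqrt m$ machines have negative $\alpha$-slack.

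\textbf{Initial value.} Initially no job has arrived, so $\PS_i = 0$, $L_i = 0$ and $s_\alpha(i)=0$ for every machine. At slack $0$ the indicator vanishes. The integral gives
\[
\int_0^\infty \sqrt m^{-\rho(t)}\,dt=\sum_{\ell\ge1}\sqrt m^{-\ell}=\frac{\sqrt m^{-1}}{1-\sqrt m^{-1}}=\frac{1}{\sqrt m-1},
\]
which matches the value marked in \Cref{fig:chore_online_unit_size_plot_phi_i}. Hence the initial potential is $m/(\sqrt m-1)$. The case $m=1$ needs a separate remark: $\sqrt m-1=0$, so the potential is infinite and the hypothesis is vacuous. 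In that case $\PS_1=L_1$ and the slack is $(\alpha-1)L_1\ge0$, so no machine has negative slack and the claim holds. We therefore assume $m\ge2$.

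\textbf{Lower bound on $\Phi$.} Every term $\Phi_i$ is nonnegative, because it is an indicator plus an integral of a positive function over a nonempty ray. For a machine with $s_\alpha(i)<0$, the indicator contributes $1$ and the integral is at least its value at slack $0$. So
\[
\Phi_i\ \ge\ 1+\frac{1}{\sqrt m-1}=\frac{\sqrt m}{\sqrt m-1}.
\]
For every other machine, $\Phi_i$ is at least $0$. A sharper bound is available for these machines: $\Phi_i$ decreases in the slack and tends to $0$, but is strictly positive. We only need the crude bound $0$.

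\textbf{Conclusion.} Suppose at least $\sqrt m$ machines have negative slack. Then
\[
\Phi\ \ge\ \sqrt m\cdot\frac{\sqrt m}{\sqrt m-1}=\frac{m}{\sqrt m-1},
\]
which already equals the initial value. To obtain a contradiction we need strict excess. Any further machine contributes $\Phi_i>0$, since the integrand $\sqrt m^{-\rho(t)}$ is positive on the whole ray $(s_\alpha(i),\infty)$.

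\textbf{Main obstacle.} The delicate point is the case where every machine is negative. If at least $\sqrt m$ machines are negative and at least one machine is not negative, that machine adds a strictly positive amount and $\Phi$ exceeds its initial value. If all $m$ machines are negative, then $m\ge\sqrt m$ and $\Phi\ge m\sqrt m/(\sqrt m-1)$. This exceeds $m/(\sqrt m-1)$ because $\sqrt m>1$.

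Both cases contradict the hypothesis that $\Phi$ is at most its initial value. Hence fewer than $\sqrt m$ machines have negative $\alpha$-slack.
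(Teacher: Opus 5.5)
Your proof is correct and follows the paper's argument. Like the paper, you compute the initial value $m/(\sqrt m-1)$, bound each negative-slack term from below by $1+1/(\sqrt m-1)$, and use that the remaining terms are nonnegative. The only difference is where the strict inequality comes from: in the paper, a negative slack adds the positive amount $|s_\alpha(i)|/\sqrt m$ to the integral, so each negative-slack term already strictly exceeds $\sqrt m/(\sqrt m-1)$ and your case split is unnecessary.
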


\begin{proof}
    Before any jobs arrive, the value of $\Phi$ is
    \[
        \Phi = m \cdot \sum_{\hat{s} = 0}^{\infty} \sqrt{m}^{-(\hat{s} + 1)} %
        = \frac{\sqrt{m}}{1 - \sqrt{m}^{-1}}.
    \]
    Now consider any configuration in which $\smash{\Phi \leq \frac{\sqrt{m}}{1 - \sqrt{m}^{-1}}}$. The contribution of every machine $i$ with negative $\alpha$-slack to $\Phi$ is greater than
    \[
        \Phi_i > 1 + \sum_{\hat{s} \in \bZ_{\geq 0}} \sqrt{m}^{-(\hat{s} + 1)} %
        = 1 + \sum_{\hat{s} = 1}^\infty \sqrt{m}^{-\hat{s}}%
        = \frac{1}{1 - \sqrt{m}^{-1}}.
    \]
    As the contribution of any other machine to $\Phi$ is nonnegative, there have to be strictly fewer than $\sqrt{m}$ machines with negative $\alpha$-slack for $\smash{\Phi \leq \frac{\sqrt{m}}{1 - \sqrt{m}^{-1}}}$ to hold.
\end{proof}

\begin{lemma} \label{lem:chore_online_unit_size_potential_does_not_increase_in_greedy}
    $\Phi$ does not increase when a new job arrives and is assigned to the eligible machine with maximum $\alpha$-slack.
\end{lemma}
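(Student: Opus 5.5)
The plan is a case analysis on the size $k \coloneqq k_j$ of the eligibility set of the arriving unit job $j$. Write $\Phi_i = f(s_\alpha(i))$ with
\[
f(s)=\mathds{1}_{s<0}+\int_s^\infty g(t)\,dt, \qquad g(t)=\sqrt m^{-\rho(t)}.
\]
Two properties of $f$ drive the argument. First, $f$ is nonincreasing. Second, the density $g$ is nonincreasing, with $g\equiv \sqrt m^{-1}$ on $(-\infty,1]$, and for $t<t'$ with $t'-t\le 1$ we have $\rho(t')\le\rho(t)+1$, hence $g(t)\le \sqrt m\, g(t')$.

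When $j$ arrives, $\PS_l$ increases by $1/k$ for every $l\in\M(j)$. So every eligible machine's slack rises by $\delta\coloneqq\alpha/k$, and the chosen machine $i$ additionally loses $1$. Non-eligible machines are unaffected.

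\textbf{Case $k\le\alpha$.} Then $\delta\ge 1$, so every slack weakly increases. Since $f$ is nonincreasing, $\Phi$ does not increase.

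\textbf{Case $k>\alpha$.} Let $s=s_\alpha(i)$ before the assignment. By the greedy rule, every other eligible machine $l$ has $s_l\le s$. The increase at $i$ is $f(s+\delta-1)-f(s)$. Each $l\neq i$ contributes a decrease of at least $\int_{s_l}^{s_l+\delta}g$, plus possibly a jump bonus that we simply drop. Because $g$ is nonincreasing and $s_l\le s$, this decrease is at least $\int_s^{s+\delta}g\ge\delta\,\min_{[s,s+\delta]}g$. I then split on whether $i$'s slack crosses zero.

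\textbf{Subcase without crossing} ($s<0$ or $s+\delta-1\ge 0$). The increase at $i$ is $\int_{s+\delta-1}^{s}g$. Using the comparison $g(t)\le\sqrt m\,g(t')$ on intervals of total length at most $1$, this is at most $(1-\delta)\sqrt m\,\min_{[s,s+\delta]}g$. It then suffices to show $(k-1)\delta\ge(1-\delta)\sqrt m$. This rearranges to
\[
k(\alpha-\sqrt m)+\alpha(\sqrt m-1)\ge 0,
\]
which holds since $\alpha\ge\sqrt m$.

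\textbf{Subcase with crossing} ($0\le s<1-\delta$). This is where the unit jump at zero must be paid, and I expect it to be the main obstacle. Here all eligible slacks are at most $s<1-\delta$, so every interval involved, $[s+\delta-1,s]$ and each $[s_l,s_l+\delta]$, lies in $(-\infty,1]$, where $g\equiv\sqrt m^{-1}$. The increase at $i$ is then exactly $1+(1-\delta)/\sqrt m$. The decrease is at least $(k-1)\delta/\sqrt m$. The required inequality
\[
(k-1)\frac{\alpha}{k}\ \ge\ \sqrt m+1-\frac{\alpha}{k}
\]
is equivalent to $\alpha\ge\sqrt m+1$. This is exactly where the choice $\alpha=\sqrt m+1$ is tight, which is why I would verify this computation carefully.

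The only remaining configuration in the case $k>\alpha$ is $1-\delta\le s<1$. There $s+\delta-1\ge0$, so there is no crossing and it falls under the first subcase. Combining all cases shows that $\Phi$ does not increase.
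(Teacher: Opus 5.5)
Your proof is correct and follows the same potential argument as the paper. The increase of $\Phi_i$ on the receiving machine is paid for by the $k_j-1$ other eligible machines, using two facts: greedy gives them slack at most $s$, and the density $\sqrt m^{-\rho(t)}$ drops by at most a factor $\sqrt m$ over one unit of slack. The only difference is bookkeeping. The paper discretizes via the level $\widehat{s}_\alpha(i^*)$ and uses the single bound $\bigl(\sqrt m+\frac{k_j-\alpha}{k_j}\bigr)\sqrt m^{-(\widehat{s}_\alpha(i^*)+1)}$ for the receiving machine. Your split on the zero crossing instead makes explicit that only the jump case needs $\alpha=\sqrt m+1$, since the non-crossing case already works for $\alpha\ge\sqrt m$.
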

\begin{proof}
    Let $j$ be the new job. If $k_j \leq \alpha$, then~$j$ does not cause any $\alpha$-slack to decrease and we are done. In the following, we therefore assume that $k_j > \alpha$.

    For every machine $i$, let $s_\alpha(i)$ be the $\alpha$-slack before the new job arrives and let $s_\alpha^{\text{new}}(i)$ be the $\alpha$-slack when $j$ is taken into account. Let $i^* \in \M(j)$ be the machine that $j$ is allocated to. In particular, $i^*$ is a machine with maximum $\alpha$-slack among the machines in $\M(j)$. Since the $\alpha$-slack and the load only change for machines in $\M(j)$, the difference between the new and the old potential is
    \[
        \Phi^{\text{new}} - \Phi^{\text{old}} = \sum_{i \in \M(j)} \Phi^{\text{new}}_i - \Phi^{\text{old}}_i.
    \]
    For $i \in \M(j)$, let
    \[
        \widehat{s}_\alpha(i) \coloneqq \begin{cases}
        0 & s_\alpha(i) < 0\\
        \left\lfloor s_\alpha(i) + \frac{\alpha}{k_j} \right\rfloor & s_\alpha(i) \geq 0
        \end{cases}.
    \]
    Note that $i^*$ is a machine with maximum $\widehat{s}_\alpha(i)$ among the machines in $\M(j)$.

    Now, for any machine $i \neq i^*$, the $\alpha$-slack of machine $i$ increases by $\smash{\frac{\alpha}{k_j}}$. By the construction of $\Phi_i$, we have
    \[
        \Phi^{\text{new}}_i - \Phi^{\text{old}}_i \leq -\frac{\alpha}{k_j} \cdot \sqrt{m}^{-(\widehat{s}_\alpha(i) + 1)} %
        \leq -\frac{\alpha}{k_j} \cdot \sqrt{m}^{-(\widehat{s}_\alpha(i^*) + 1)}.
    \]
    For $i^*$, the $\alpha$-slack decreases by $\frac{k_j - \alpha}{k_j}$. The new $\alpha$-slack of machine $i^*$ is
    \[
        s_\alpha^{\text{new}}(i^*) = s_\alpha(i^*) - \frac{k_j - \alpha}{k_j}.
    \]
    In particular, it holds that
    \[
        \left\lfloor s_\alpha^{\text{new}}(i^*) \right\rfloor = \left\lfloor s_\alpha(i^*) - \frac{k_j - \alpha}{k_j}\right\rfloor
        = \widehat{s}_\alpha(i^*) - 1.
    \]

    Let $\smash{\Delta = s_\alpha(i^*) - \widehat{s}_\alpha(i^*) \leq \frac{k_j - \alpha}{k_j}}$. The new $\alpha$-slack of machine $i^*$ can be bounded by analyzing the difference in potential:
    \begin{align*}
        \Phi^{\text{new}}_{i^*} - \Phi^{\text{old}}_{i^*} &\leq \begin{cases}
            \frac{k_j - \alpha}{k_j} \cdot \sqrt{m}^{-\widehat{s}_\alpha(i^*)} & \text{if } \widehat{s}_\alpha(i^*) \geq s_\alpha(i^*),\\[1ex]
            \Delta \cdot \sqrt{m}^{-(\widehat{s}_\alpha(i^*) + 1)} + \left(\frac{k_j - \alpha}{k_j} - \Delta\right) \cdot \sqrt{m}^{-\widehat{s}_\alpha(i^*)} & \text{if } s_\alpha(i^*) > \widehat{s}_\alpha(i^*) > 0,\\[1ex]
            \Delta \cdot \sqrt{m}^{-(\widehat{s}_\alpha(i^*) + 1)} + 1 & \text{if } s_\alpha(i^*) > \widehat{s}_\alpha(i^*) = 0.
        \end{cases}\\[2ex]
        &\leq \begin{cases}
            \sqrt{m}^{-\widehat{s}_\alpha(i^*)} & \text{if } \widehat{s}_\alpha(i^*) \geq s_\alpha(i^*),\\[1ex]
            \frac{k_j - \alpha}{k_j} \cdot \sqrt{m}^{-(\widehat{s}_\alpha(i^*) + 1)} + \sqrt{m}^{-\widehat{s}_\alpha(i^*)} & \text{if } s_\alpha(i^*) > \widehat{s}_\alpha(i^*) > 0,\\[1ex]
            \frac{k_j - \alpha}{k_j} \cdot \sqrt{m}^{-(\widehat{s}_\alpha(i^*) + 1)} + \sqrt{m}^{-\widehat{s}_\alpha(i^*)} & \text{if } s_\alpha(i^*) > \widehat{s}_\alpha(i^*) = 0.
        \end{cases}\\[2ex]
        &\leq \frac{k_j - \alpha}{k_j} \cdot \sqrt{m}^{-(\widehat{s}_\alpha(i^*) + 1)} + \sqrt{m}^{-\widehat{s}_\alpha(i^*)}\\[1ex]
        &= \left(\sqrt{m} + \frac{k_j - \alpha}{k_j}\right) \cdot \sqrt{m}^{-(\widehat{s}_\alpha(i^*) + 1)}.
    \end{align*}

    The total potential difference is thus at most
    \begin{align*}
        \Phi^{\text{new}} - \Phi^{\text{old}} &= \Phi^{\text{new}}_{i^*} - \Phi^{\text{old}}_{i^*} + \sum_{i \in \M(j), i \neq i^*} \Phi^{\text{new}}_i - \Phi^{\text{old}}_i\\
        &\leq \left(\sqrt{m} + \frac{k_j - \alpha}{k_j}\right) \cdot \sqrt{m}^{-(\widehat{s}_\alpha(i^*) + 1)} + (k_j - 1) \cdot \left(-\frac{\alpha}{k_j} \cdot \sqrt{m}^{-(\widehat{s}_\alpha(i^*) + 1)}\right)\\
        &= \sqrt{m}^{-(\widehat{s}_\alpha(i^*) + 1)} \cdot \left(\sqrt{m} + \frac{k_j - \alpha}{k_j} - (k_j - 1) \cdot \frac{\alpha}{k_j}\right)\\
        &= \sqrt{m}^{-(\widehat{s}_\alpha(i^*) + 1)} \cdot (\sqrt{m} + 1 - \alpha)\\
        &= 0
    \end{align*}
    where we used that $\alpha = \sqrt{m} + 1$. This shows that the potential never increases.
\end{proof}
Finally, we have the tools to prove \cref{thm:chore_online_unit_size_greedy_ratio}.

\begin{proof}[Proof of \cref{thm:chore_online_unit_size_greedy_ratio}]
By \Cref{lem:chore_online_unit_size_potential_does_not_increase_in_greedy}, the greedy algorithm maintains the invariant $\Phi \leq \Phi^0$, where $\Phi^0$ is the initial value of the potential.
Consider any machine $i$ that receives at least one job, and let $j$ be the
last job assigned to $i$ that has more than $\alpha$ eligible machines.
Immediately before assigning $j$, fewer than $\sqrt{m}<\alpha<k_j$ machines have negative $\alpha$-slack by \Cref{lem:chore_online_unit_size_few_machines_have_negative_s_alpha}. Hence, since the algorithm chooses a machine with maximum $\alpha$-slack among the eligible machines, machine $i$ has nonnegative $\alpha$-slack before receiving~$j$, that is, $L_i\le \alpha\cdot\PS_i$, where $L_i$ and $\PS_i$ are computed just before $j$ arrives.
Removing the processing time of $j$ restores this inequality as
the proportional share $\PS_i$ can never decrease. Hence $L_i - 1 \leq \alpha \cdot \PS_i$, and
since all jobs have unit processing time, this is exactly $\alpha$-\PropOne{}.

Any jobs allocated to $i$ after $j$ have at most $\alpha$ eligible machines and
therefore increase the right hand side by at least as much as the left hand side, preserving the inequality.
Any other jobs arriving after $j$ can only increase the proportional share of machine $i$ and do not increase its load.
Hence, the $\alpha$-\PropOne{} guarantee still holds for the final allocation.
\end{proof}

To complement this upper bound, we show that no deterministic online algorithm can achieve $o(\sqrt m)$-\PropOne{}, even for unit-size jobs. The adversary exploits the ability to target precisely those machines that already have comparatively high load.

\begin{restatable}{theorem}{ThmChoreOnlineLB}
\label{thm:chore-online-LB}
In restricted-assignment chore allocation, no deterministic online algorithm can guarantee $\alpha$-\PropOne{} with $\alpha=o(\sqrt m)$, even for unit-size jobs. This bound holds even with prior knowledge of all final proportional shares.
\end{restatable}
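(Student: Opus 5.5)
The plan is an adaptive adversary with unit jobs. It maintains a shrinking set of \emph{active} machines $A_1\supseteq A_2\supseteq\dots$, and every released job is eligible for exactly the current active set. Such a job adds $1/|A_r|$ to the proportional share of every active machine and nothing to inactive ones. The adversary always \emph{deactivates machines of smallest current load} inside the active set, so load concentrates on the surviving machines. Proportional shares, by contrast, grow only slowly, because every job has large eligibility.

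The first concrete scheme has one job per round. In round $r$ the active set has size $m-r+1$ and one job eligible for all of $A_r$ is released. Afterwards the adversary deactivates an active machine of minimum load, preferring one with load $0$. As long as some active machine is empty, the machines that received jobs are never removed. After $m-\sqrt m$ rounds, the $\sqrt m$ survivors therefore jointly hold essentially all released jobs, and some survivor $i$ has load $L_i\ge \sqrt m-1$. Its share is $\PS_i=\sum_{r}1/(m-r+1)\approx H_m-H_{\sqrt m}=\Theta(\log m)$. This already forces $\alpha\ge (L_i-1)/\PS_i=\Omega(\sqrt m/\log m)$, and a short averaging step justifies the claim that the survivors hold all released jobs.

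Removing the $\log m$ factor is the main obstacle. The harmonic sum arises because late rounds, with small active sets, contribute large share increments. I would first try to restructure the schedule into $\sqrt m$ phases. In phase $t$ the active set has size about $m-(t-1)\sqrt m$ and $\sqrt m$ jobs are released, after which the $\sqrt m$ least-loaded active machines are deactivated. Each phase then contributes only about $\sqrt m/m$ to every active share, so with all but the last $O(\sqrt m)$-sized phases the total share is $O(1)$. I would track the potential ``total load held by active machines'' and argue that discarding the least-loaded machines loses at most an average fraction. Then, once the active set reaches size $\Theta(\sqrt m)$, the active machines hold $\Omega(m)$ load, or the pigeonhole gives some machine load $\Omega(\sqrt m)$ earlier. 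Either way some machine has $L_i-1=\Omega(\sqrt m)$ while $\PS_i=O(1)$, giving $\alpha=\Omega(\sqrt m)$. Balancing the phase lengths against the deactivation sizes so that both ``share $O(1)$'' and ``load $\Omega(\sqrt m)$'' hold simultaneously is where the real calculation lies. If the averaging loses too much, I would fall back to deactivating machines by load classes instead of one by one.

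For the strengthening with known final shares, the plan is to use symmetry. A machine's final share depends only on the round in which it is deactivated, never on its identity. So the multiset of final shares is fixed in advance, and which machine gets which share is decided adaptively. To make the share vector literally fixed, I would pad with further jobs whose eligibility is chosen so that every machine ends with the same predetermined share. These padding jobs are released only after the violating load has been forced, so they cannot help the algorithm. I would still need to check that the padding does not raise the critical machine's share above $O(1)$.
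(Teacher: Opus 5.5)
Your construction has a genuine gap. An adversary in which every job is eligible for exactly the current active set cannot force the regime you are aiming for. It fails against the algorithm that sends each job to a least-loaded eligible machine.

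\begin{itemize}
\item Under that rule, the loads of active machines always differ by at most one. Assigning a job to a minimum-load machine raises it by one, and deactivation only shrinks the set.
\item A released job raises the average active load by exactly $1/a$, the same amount it adds to every active share.
\item Deactivating a minimum-load machine raises the average by at most $1/(a-1)$.
\end{itemize}
Hence every survivor satisfies $L_i-1\le \PS_i+\sum_{b=a_f}^{m-1}1/b$, where $a_f$ is the final active size.

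In your one-job-per-round scheme this gives $L_i-1\le 2\PS_i+1/\sqrt m$, with both sides $\Theta(\log m)$. The claim that the survivors hold essentially all released jobs breaks after about $m/2$ rounds. At that point a spreading algorithm has left no empty active machine, and you are forced to deactivate loaded ones. This is essentially the classical $\Omega(\log m)$ makespan construction, but here the shares grow in step with the loads. So even your intermediate $\Omega(\sqrt m/\log m)$ bound is not established.

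In the phase scheme, $\PS_i\approx\ln(m/a_f)$. So $\PS_i=O(1)$ forces $a_f=\Omega(m)$, the additive term is $O(1)$, and hence $L_i=O(1)$. The hoped-for $\Omega(m)$ load on the survivors also fails. Discarding a $\sqrt m/a_t$ fraction in each phase compounds, so $\Theta(\sqrt m)$ survivors hold only $O(\sqrt m\log m)$ load. No rebalancing of phase lengths or load classes helps: the regime ``share $O(1)$, load $\Omega(\sqrt m)$'' is unreachable in this framework.

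The same inequality points to the missing idea. A ratio of $\Omega(\sqrt m)$ must come from making the shares tiny, of order $1/\sqrt m$, while forcing only $L_i-1=1$. The shrink from $m$ to $\sqrt m$ candidate machines must happen in one step, with almost no jobs released. This is the paper's argument:
\begin{itemize}
\item Release $\sqrt m$ unit jobs eligible for all machines, so every share is $1/\sqrt m$.
\item If some machine takes two of them, it already has $L_i-1\ge 1>\alpha\,\PS_i$ for $\alpha<\sqrt m$.
\item Otherwise they occupy a set $S$ of $\sqrt m$ distinct machines. A single job with $\M(j)=S$ gives its recipient $L_i-1=1$ against $\PS_i=2/\sqrt m$, so $\alpha\ge\sqrt m/2$.
\end{itemize}
Your padding idea for known shares matches the paper's. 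There, $\sqrt m-1$ unit jobs eligible exactly for $\M\setminus S$ bring every machine's share to $2/\sqrt m$, so the share vector reveals nothing about $S$. But padding only helps on top of a construction that actually forces the violation.
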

\begin{proof}
Assume for now that $\sqrt{m}$ is an integer. We will get rid of this assumption later.

Fix a deterministic online algorithm. Consider the following adversarial instance with unit-length jobs.
First, the adversary sends $\sqrt{m}$ jobs for which all $m$ machines are eligible.
After these $\sqrt{m}$ jobs, if some machine received at least two of them, then after
removing one job its remaining load is at least $1$, while its proportional share is
$\sqrt{m}/m=1/\sqrt{m}$.
Thus the allocation already violates
$\alpha$-\PropOne{} for every $\alpha<\sqrt{m}$.
Hence, we may assume that these jobs are assigned to distinct machines.

Let $S$ be the set of machines that received one of these first $\sqrt{m}$ jobs.
The adversary now releases one final job $j^*$ with $\M(j^*) = S$.
Let $i \in S$ be the machine receiving $j^*$.
Then, after removing one job from $i$, the remaining load is $L_i-1=1$.
On the other hand,
\[
    \PS_i = \frac{\sqrt{m}}{m}+\frac{1}{\sqrt{m}}
    = \frac{2}{\sqrt{m}}.
\]
Thus, $\alpha$-\PropOne{} requires
\[
    1 \leq \alpha \cdot \PS_i =  \frac{2\alpha}{\sqrt{m}},
\]
and therefore $\alpha \geq \frac{\sqrt{m}}{2}$.

If the algorithm knows the final proportional shares of all machines, the adversary can release an additional set of $\sqrt{m} - 1$ unit-size jobs for which exactly the machines \emph{not} in $S$ are eligible.
After adding these jobs, the proportional share of every machine is equal to ${2}/{\sqrt{m}}$.
Hence, the algorithm cannot identify the machines in $S$ before the $(\sqrt{m} + 1)$-th job arrives.

To get rid of our assumption that $\sqrt{m}$ is an integer, the adversary can run the above instance on $(\lfloor \sqrt{m} \rfloor)^2$ machines and ignore the remaining ones, achieving a lower bound of $\Omega(\lfloor \sqrt{m} \rfloor) = \Omega(\sqrt{m})$.
\end{proof}

\subsection{General Processing Times}
\label{sec:chore-online-general}

We now extend the $\Theta(\sqrt m)$ upper bound to arbitrary processing times. The unit-size greedy rule no longer suffices: a single large job can destroy the $\alpha$-slack of the machine receiving it. Our algorithm is still mostly greedy based on the slacks of the eligible machines towards violating $\alpha$-\PropOne{} for $\alpha = 8 \sqrt{m}$. We call this quantity the $\alpha$-\PropOne-slack of machine~$i$ which is given by
\[
    s_{\alpha\PropOne}(i) \coloneqq \alpha \cdot \PS_i - L_i(\sigma) + p_{\max}(i, \sigma).
\]

\begin{restatable}{theorem}{ThmChoreOnlineGeneral}
\label{thm:general_upper_bound}
There is a deterministic online algorithm for restricted-assignment chore allocation with general processing times that achieves $O(\sqrt{m})$-\PropOne{}.
\end{restatable}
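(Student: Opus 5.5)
Our plan follows the Fit/Shadow strategy outlined in the overview, with $\alpha = 8\sqrt m$. When job $j$ arrives, we first update all shares to include $p_j/k_j$. \textbf{Fit mode}: if some $i\in\M(j)$ satisfies $s_{\alpha\PropOne}(i)\ge p_j$, we assign $j$ to such a machine, chosen with maximum slack. Then $\alpha\PS_i - L_i - p_j + p_{\max}(i)\ge 0$, and the new largest job is at least the old one, so $\alpha$-\PropOne{} is maintained at $i$. \textbf{Shadow mode}: otherwise, we assign $j$ to a machine $i$ whose previous largest job satisfies $p_{\max}(i)\le \alpha\, p_j/(2k_j)$, roughly. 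Intuitively, after $j$ becomes the new largest job on $i$, the old $p_{\max}(i)$ now counts in $L_i-p_{\max}$. It is paid for by the share increase $\alpha p_j/k_j$ that $j$ itself contributes to $\PS_i$. If $i$ was $\alpha$-\PropOne{} before, with $L_i - p_{\max}\le \alpha\PS_i$, then afterwards $L_i + p_j - p_j = L_i \le \alpha \PS_i + p_{\max}\le \alpha(\PS_i + p_j/k_j)$. So shadowing preserves the invariant exactly. The whole proof therefore reduces to showing that whenever no fit-machine exists, at least one eligible machine satisfies the shadow condition.

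The key claim is: \emph{if no machine in $\M(j)$ fits, then some machine in $\M(j)$ has $p_{\max}\le \alpha p_j/k_j$.} Suppose not. Then all $k_j$ eligible machines hold a job larger than $\alpha p_j/k_j$, yet all have $\alpha$-\PropOne-slack below $p_j$. For small $k_j\le\alpha$ the shadow condition holds trivially, since $j$ alone raises the slack of the receiving machine by $\alpha p_j/k_j\ge p_j$, so it fits. So the interesting case is $k_j>\alpha$. Here we mimic \Cref{lem:chore_online_unit_size_potential_does_not_increase_in_greedy}, but with a job-scaled potential. For each machine we measure its slack in units of the current job size, roughly $\Phi_i = \mathds{1}[\text{slack}<\text{threshold}] + \int \sqrt m^{-\rho(t/p)}$. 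We also add a local penalty term charging machines whose largest job is large relative to their share.

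We will argue that, as in \Cref{lem:chore_online_unit_size_few_machines_have_negative_s_alpha}, the potential bound implies fewer than $\sqrt m$ machines are "bad" at any scale. Here "bad" means low slack relative to $p$ together with a big max job. Since $k_j>\alpha>\sqrt m$, some eligible machine is then good, i.e., it fits or shadows. To establish the non-increase of the potential, we distinguish the modes. Fit moves are handled like the unit case: the $k_j-1$ non-receivers gain $\alpha p_j/k_j$ of share at slack levels not much above the receiver's, and this compensates for the receiver's loss, which is about $p_j$ scaled. Shadow moves cost nothing at the receiver, since its slack relative to the invariant is unchanged. Meanwhile the others gain.

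The main obstacle is the multi-scale issue. Slacks of machines are comparable only relative to a given job size, but later jobs of very different sizes will reread the same slacks at a different scale. A potential normalized by $p_j$ is therefore not a single global quantity. My first attempt is to define $\Phi$ as a sum over geometric size classes $2^\ell$, with class-$\ell$ slack truncated at $O(2^\ell)$. A job of size $p$ then affects class $\ell$ with weight decaying in $|\ell-\log p|$. The constant 8 in $\alpha$ would absorb the geometric sums over scales and the factor 2 lost in the shadow condition. If this fails, we fall back to a per-job argument. In that version, the "local penalty" is charged only to machines in $\M(j)$, and we show directly by double counting share contributions that $\sqrt m$ machines with large max jobs and small slack would force total $\alpha$-share exceeding the total job volume.
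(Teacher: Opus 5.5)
There is a genuine gap. Your reduction is fine: in both modes the receiving machine keeps nonnegative $\alpha$-\PropOne{}-slack, so the theorem rests on the claim that a shadow machine exists whenever no machine fits. That claim is the whole technical content, and the proposal does not prove it. You name the multi-scale obstacle and sketch a multi-class potential, but its definition and monotonicity are never pinned down. Your fallback cannot work, because no static counting argument can establish the claim: the bad configuration is itself consistent with $\alpha$-\PropOne{}. Take $\alpha$ integral and $m\ge 2\alpha+1$. Let $T$ be a set of $\alpha$ machines and $B$ a disjoint set of $k>\alpha$ machines, and give each $i\in B$ exactly $\alpha+1$ unit jobs eligible for $\{i\}\cup T$. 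Then $\PS_i=1$ and $s_{\alpha\PropOne}(i)=0$ for $i\in B$, while the machines in $T$ are empty with large slack. A unit job eligible for exactly $B$ has updated slack $\alpha/k<1$ and $p_{\max}(i)=1>\alpha/k$ on every eligible machine, so it can neither fit nor shadow anywhere. Only the run of the algorithm excludes this, since it would have sent those jobs to $T$. So the proof must be an amortized argument over the entire history.

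The paper does not use a global multi-class potential. For each arriving $j^*$ it fixes a potential measured in units of $p_{j^*}$, replays the whole history, and shows non-increase at every earlier step. Three ingredients your sketch lacks or contradicts make this work.

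\emph{Fit threshold.} The paper fits only when the updated slack is at least $2p_j$, not $p_j$. This guarantees $s_{\text{new}}(i)\ge \hat s(i)/2$, so the receiver drops at most one geometric level of $\rho$; the slope ratio $2\sqrt m$ is then paid for by $\alpha=8\sqrt m$. With your threshold the receiver can fall from $\hat s=p_j=8\cdot 2^L p_{j^*}$ to $0$, costing at least $4/\sqrt m$. The other machines, possibly all at slack about $p_j$, release at most $\tfrac{\alpha p_j}{p_{j^*}}(2\sqrt m)^{-(1+L)}=32\,\sqrt m^{-L}$, which is exponentially less. So fit moves are \emph{not} handled like the unit case.

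\emph{Penalty size.} The penalty must be only $\Theta(\sqrt m/k_{j^*})$. An earlier job with $p_j$ just above $\alpha p_{j^*}/k_{j^*}$ may free only $O(\sqrt m/k_{j^*})$ potential on the other machines. Hence what one can prove is that fewer than half of $\M(j^*)$ are bad, not fewer than $\sqrt m$; a constant, unit-case-style penalty cannot be paid for when $k_{j^*}\gg\sqrt m$.

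\emph{Shadow moves are not free.} A shadow move raises $p_{\max}$ of a possibly low-slack receiver and can switch on its penalty. The paper pays for this in two ways. First, it chooses a shadow candidate of maximum updated slack, so all candidates have slack below $4p_{j^*}$ whenever the receiver does. Second, it proves the \emph{strengthened} statement that more than half of $\M(j)$ are shadow candidates for every non-fitting job $j$, and uses it as the induction hypothesis for earlier jobs. Then more than half of $\M(j)$ absorb the share $\alpha p_j/k_j$ at the steep slope. Your key claim, that some machine is a candidate, is too weak to serve as its own induction hypothesis, and your shadow rule leaves the choice among candidates unspecified.
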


\paragraph{Fit/Shadow algorithm.}
When a job $j^*$ arrives with $k_{j^*}\le\alpha$, assign it to an arbitrary eligible machine and exclude it from all subsequent slack and $p_{\max}$ computations.
For a job $j^*$ with $k_{j^*}>\alpha$,
we determine the hypothetical $\alpha$-\PropOne-slack for all eligible machines \emph{after} adding $\frac{\alpha \cdot p_{j^*}}{k_{j^*}}$ to their proportional shares. We then find the eligible machine~$i$ for which this updated $s_{\alpha\PropOne}(i)$ is largest. We distinguish two cases:
\begin{enumerate}
    \item \textbf{Case 1 (Fit):} If the largest updated slack satisfies $s_{\alpha\PropOne}(i) \geq 2p_{j^*}$, assign $j^*$ to machine~$i$. We say that $j^*$ \emph{fits} on $i$.

    \item \textbf{Case 2 (Shadow):} Otherwise, consider the eligible machines with $p_{\max}(i, \sigma) \leq \alpha p_{j^*}/k_{j^*}$.
    Among these machines, assign $j^*$ to one with maximum updated $\alpha$-\PropOne{}-slack.
    We say that~$j^*$ \emph{shadows} the previous largest job on that machine.
\end{enumerate}

We will now give a proof sketch for \Cref{thm:general_upper_bound}. The full proof is deferred to \Cref{app:general-upper-bound-online-chores}.

\begin{proof}[Proof of \Cref{thm:general_upper_bound} (sketch)]

    Jobs with $k_{j^*}\le\alpha$ cannot decrease any machine's $\alpha$-\PropOne-slack, so it suffices to analyze the virtual instance that excludes them.

    Initially, all slacks are zero. Let $j^*$ be a job with $k_{j^*} > \alpha$.

    We first observe that in both the fit and the shadow case, the $\alpha$-\PropOne-slack of the machine $i$ to which $j^*$ is allocated remains nonnegative: in the fit case, the fit condition implies that $i$ has more than enough $\alpha$-\PropOne-slack to absorb the new job. In the shadow case, we use the fact that $p_{j^*}$ may be subtracted in the \PropOne{} computation instead of the previous $p_{\max}(i, \sigma)$. This prevents the $\alpha$-\PropOne-slack from decreasing by more than $p_{\max}(i, \sigma)$ due to the load of $j^*$. By the shadow condition, the increase in scaled proportional share caused by $j^*$ is large enough to compensate for this decrease.

    It remains to show that the algorithm never gets stuck: whenever no eligible machine can fit a job, at least one eligible machine must satisfy the shadow condition. We will prove the stronger invariant that every job that does not fit can shadow the largest jobs on more than half of its eligible machines.

    To prove this invariant, we use induction over the arriving jobs. The induction step consists of proving the stronger invariant for a job $j^*$ under the assumption that it held for all previous jobs. As this assumption is empty for the first job, no additional induction start is needed.

    \begin{figure}[tbh]
        \centering
        \begin{tikzpicture}[baseline,scale=0.8]
            \colorlet{blue}{blue!80!black}
            \colorlet{red}{red!85!black}
            \node[anchor=west] at (1.3, 1.6) {\small $\displaystyle \begin{aligned}
                \Phi^{j^*}_i \coloneqq \int_{s_{\alpha\PropOne}(i) / p_{j^*}}^\infty &\textcolor{blue}{(2 \sqrt{m})^{- \rho(t)}}\\
                &\textcolor{red}{{}+ \left(\frac{4 \sqrt{m}}{k_{j^*}} - \frac{1}{2\sqrt{m}}\right) \cdot \mathds{1}_{t \in (2, 4)} \cdot \mathds{1}_{p_{\max}(i) > \frac{\alpha \cdot p_{j^*}}{k_{j^*}}}} dt \end{aligned}$};
            \begin{groupplot}[
                group style={group size=1 by 2, vertical sep=0pt},
                width=13cm,
                axis line style={-{Latex}, thick},
                tick style={black, thick},
                major tick style={black, thick},
                minor tick style={black, thick},
                tick label style={black},
                xlabel = {$s_{\alpha \PropOne}(i)$},
                xlabel style={at={(axis description cs:1,.05)}, anchor=west},
                xmin=0, xmax=40,
                xtick={0,1,2,4,8,16,32},
                xticklabels={{},$p_{j^*}$,{},$4 p_{j^*}$,$8 p_{j^*}$,$16 p_{j^*}$,$32 p_{j^*}$},
                clip=false,
            ]
            \nextgroupplot[
                ymin=4/3, ymax=17/9,
                height=5/6*5cm,
                ytick={5/3,5/3+1/9},
                yticklabels={
                    $\frac{2 \cdot (\sqrt{m})^{-1}}{1 - (\sqrt{m})^{-1}} + \frac{8 \sqrt{m}}{k_{j^*}}$,
                    {}},
                ylabel = {$\Phi^{j^*}_i$},
                ylabel style={at={(axis description cs:0.11,0.98)}, anchor=south east, rotate=-90},
                every axis plot/.append style={very thick},
                axis x line=none,
                axis y line=left,
            ]
            \addplot [
                color=red,
                ]
                coordinates {
                (0,5/3+1/9)(2,5/3)(2.5 - 1.5*0.035,4/3 + 0.035)
                };
            \nextgroupplot[
                ymin=0, ymax=2/3,
                height=5cm,
                ytick={0,1/9,1/3,5/9},
                yticklabels={
                    0,
                    $\frac{2 \cdot (\sqrt{m})^{-2}}{1 - (\sqrt{m})^{-1}}$,
                    $\frac{2 \cdot (\sqrt{m})^{-1}}{1 - (\sqrt{m})^{-1}}$,
                    $\frac{2}{\sqrt{m}} \cdot \left(1 + \frac{1}{1 - \sqrt{m}^{-1}}\right)$,},
                every axis plot/.append style={very thick},
                axis x line=bottom,
                axis y line=left,
                y axis line style={-},
            ]
            \draw[white,ultra thick,fill=white] (axis cs:-1,2/3 - 0.015) rectangle (axis cs:1,2/3 + 0.035);
            \draw[black,very thick] (axis cs:-.75,2/3 + 0.03) -- ++(axis cs:1.5,0.01); %
            \draw[black,very thick] (axis cs:-.75,2/3 - 0.02) -- ++(axis cs:1.5,0.01); %
            \addplot [
                color=blue,
                ]
                coordinates {
                (0,5/9)(8,1/9)(16,1/27)(32,1/81)(40,1/81 - 1/486)
                };
            \addplot [
                color=red,
                ]
                coordinates {
                (3.5 - 1.5*0.015,2/3 - 0.015)(4,1/3)
                };
            \end{groupplot}
        \end{tikzpicture}
        \caption{Plot of the potential term $\smash{\Phi^{j^*}_i}$ as a function of the $\alpha$-\PropOne-slack. Machines start on the blue graph and switch to the red graph when they receive a job larger than $\alpha \cdot p_{j^*} / k_{j^*}$.}
        \label{fig:chore_online_general_size_plot_phi_j_star_i}
    \end{figure}

    The induction step uses a job-dependent potential $\Phi^{j^*}=\sum_{i\in\M}\Phi^{j^*}_i$, tailored to the current job~$j^*$. As shown in \Cref{fig:chore_online_general_size_plot_phi_j_star_i}, the potential term $\Phi^{j^*}_i$ consists of a main term similar to the potential function in the unit-size analysis, plus a local penalty for machines whose largest job is too large to be shadowed by $j^*$.

    We will show that the stronger invariant holds for $j^*$ as long as $\Phi^{j^*}$ is bounded by its initial value and that $\Phi^{j^*}$ does not increase when any of the previous jobs are allocated.

    For the first statement, let $i$ be a machine on which $j^*$ can neither fit nor shadow. By the fit and shadow conditions, we have $s_{\alpha \PropOne}(i) < 2 p_{j^*}$ and $p_{\max}(i, \sigma) > \alpha p_{j^*}/k_{j^*}$. But then the potential term $\Phi^{j^*}_i$ includes the full penalty term. If this is the case for at least half of the eligible machines, then the total potential from these machines exceeds the initial value of $\Phi^{j^*}$.

    It remains to show that $\Phi^{j^*}$ does not increase when the previous jobs are allocated.

    First, observe that a potential term $\Phi^{j^*}_i$ only increases when the $\alpha$-\PropOne-slack of machine $i$ decreases or when the penalty term is activated while $i$ has rather small $\alpha$-\PropOne-slack $s_{\alpha \PropOne}(i) < 4 p_{j^*}$. Both can only happen when $i$ receives a job.

    Let $i$ be a machine receiving a previous job $j$.

    If the $\alpha$-\PropOne-slack of machine $i$ decreased due to $j$, then $j$ fit on $i$. In particular, $i$ is a machine with maximum $\alpha$-\PropOne-slack among the eligible machines of $j$. Moreover, by the fit condition, the $\alpha$-\PropOne-slack of machine $i$ decreases by at most a factor $2$ due to the load of $j$. Therefore, the increase in $\Phi^{j^*}_i$ is small enough to be well compensated for by the decrease in the potential terms of the other eligible machines.

    For $j$ to activate the penalty term of machine $i$, it needs to be sufficiently large, namely $p_j > \alpha \cdot p_{j^*} / k_{j^*}$. Therefore, $j$ substantially increases the $\alpha$-\PropOne-slacks of the other eligible machines. On the eligible machines $i'$ whose $\alpha$-\PropOne-slack is small beforehand, say $s_{\alpha \PropOne}(i') < 4 p_{j^*}$, this yields a substantial decrease of the potential terms.

    If the potential term of machine $i$ actually increases due to the activation of the penalty term, its $\alpha$-\PropOne-slack is itself smaller than $4 p_{j^*}$. In the fit case, the behavior of our algorithm guarantees that all other eligible machines satisfy the same bound. In the shadow case, the induction hypothesis guarantees the stronger invariant for $j$ which implies that more than half of the eligible machines satisfy the same bound. In both cases, the potential decrease on the respective machines is sufficiently large to pay for the whole penalty term of machine $i$.

    It follows that the potential $\Phi^{j^*}$ does not increase when the previous jobs are allocated. This completes the proof.
\end{proof}

\section{Online Proportionality for Goods}
\label{sec:goods_setting}
We now turn to online goods. In contrast to chores, the optimal approximation factor is logarithmic %
in $m$. We give an $O(\log m)$-approximation and prove that this is best possible, even for unit values.

\begin{restatable}{theorem}{ThmLogmGoods}
\label{thm:log-m-upper-bound-online-goods}
There is a deterministic online algorithm for restricted-assignment goods allocation that achieves $O(\log m)$-\PropOne{} for arbitrary values.
\end{restatable}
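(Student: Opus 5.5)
The plan follows the outline in the overview. First handle unit values with a greedy deficit rule and an exponential potential. Then reduce arbitrary values to unit values by bucketing into value classes.

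\textbf{Unit values.} Fix $\alpha = c\log m$ for a suitable constant $c$. For agent $i$ define the $\alpha$-deficit $d_\alpha(i) \coloneqq \PS_i/\alpha - L_i$, where $\PS_i$ includes the arriving good. Each arriving good $j$ goes to an eligible agent with maximum $d_\alpha(i)$.

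Assigning $j$ has two effects. It raises $d_\alpha$ by $1/(\alpha k_j)$ on every other eligible agent. It lowers the receiver's deficit by $1 - 1/(\alpha k_j)$. Use the potential $\Phi = \sum_i \beta^{d_\alpha(i)}$ with $\beta = 2$ (or $e$), perhaps truncated at deficit $0$.

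When $j$ arrives, the other $k_j-1$ eligible agents gain at most $\sum_{i\in\M(j)} \beta^{d_\alpha(i)}\,(\beta^{1/(\alpha k_j)}-1) \approx k_j\,\beta^{d^*}\ln\beta/(\alpha k_j)$, since every eligible deficit is at most the maximum $d^*$. The receiver loses $\beta^{d^*}(1-\beta^{-(1-1/(\alpha k_j))}) = \Omega(\beta^{d^*})$. For $\alpha$ a large enough constant, the potential therefore does not increase. Because $\alpha \ge 1$, this holds whenever $d^* \ge 0$; agents with negative deficit contribute at most $1$ each under truncation.

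So $\Phi \le O(m)$ throughout, and every deficit satisfies $d_\alpha(i) \le \log_\beta O(m)$. Rescaling, take $\alpha' = \Theta(\log m)\cdot\alpha$. Then $L_i \ge \PS_i/\alpha - O(\log m)$.

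\textbf{Main obstacle.} This additive $O(\log m)$ loss must become multiplicative $\alpha$-\PropOne{}, where only one eligible unassigned good (value $1$) may be credited. If $L_i \ge 1$ and $\PS_i \ge \Omega(\alpha\log m)$, the additive bound converts to a multiplicative one. Small shares need a separate argument; it is cleaner to choose the algorithm so the additive bound is $O(1)$. The overview suggests exactly this: deficits are kept within one unit at the $\Theta(\log m)$ scale.

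So I would instead bound $d_{\alpha}(i) \le 1$ with $\alpha = \Theta(\log m)$, using the potential $\Phi=\sum_i m^{d_\alpha(i)}$. The receiver's drop is then a factor $m^{-1+1/(\alpha k_j)}$, while the others gain a factor $m^{1/(\alpha k_j)} = e^{\ln m/(\alpha k_j)} \approx 1+1/(c k_j)$. The total gain is about $\beta^{d^*}/c$ against a receiver drop of nearly $\beta^{d^*}$, so $\Phi$ does not increase. Starting from $\Phi = m$, we get $m^{d_\alpha(i)} \le m$ for every $i$, hence $d_\alpha(i) \le 1$.

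Now $\PS_i/\alpha \le L_i + 1$. If $i$ is eligible for an unassigned good, crediting it gives $\alpha$-\PropOne{}. Otherwise $i$ received every eligible good, so $L_i \ge \PS_i$. The delicate point is to check that the potential step still holds when deficits are negative or when $k_j$ is small. For small $k_j$, $\alpha k_j$ is large, so the gain factor $1/(\alpha k_j)$ is tiny and the step is easy.

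\textbf{Arbitrary values.} Round each value down to a power of $2$. For each class $r$, run the unit-value algorithm independently on the goods of value in $[2^r, 2^{r+1})$. Within a class, $\PS_i^{(r)}/\alpha \le L_i^{(r)} + 2^{r+1}$, losing a factor $2$ from rounding.

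Summing over classes gives $\PS_i/\alpha \le L_i + \sum_{r} 2^{r+1}[\text{class }r\text{ nonempty for }i]$. This is a geometric sum bounded by $4\cdot 2^{r_{\max}(i)}$, where $r_{\max}(i)$ is the top class eligible for $i$. If the largest eligible good is unassigned to $i$, it covers this up to a constant. If $i$ received it, then $L_i \ge 2^{r_{\max}}$ covers it up to a constant. If no good of class $r_{\max}$ remains unassigned to $i$, use the class-$r_{\max}$ bound with $L_i^{(r)} \ge \PS_i^{(r)}$.

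I expect combining the per-class additive terms with the single credited good to need care. Adjusting constants in $\alpha$ should absorb the factor-$O(1)$ losses, giving $O(\log m)$-\PropOne{}.
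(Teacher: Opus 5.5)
Your final argument is correct and is essentially the paper's proof: greedy $\alpha$-deficit assignment with $\alpha=\Theta(\log m)$, the potential $\sum_i m^{d_\alpha(i)}$ started at $m$ to get $d_\alpha(i)\le 1$, and power-of-two bucketing with a geometric-sum bound. Your worry about negative deficits and small $k_j$ goes away once you factor out $m^{d^*}$, since the step then only needs $\alpha>1$. Two small corrections: the per-class inequality should read $\PS_i^{(r)}/\alpha \le 2L_i^{(r)}+2^{r+1}$, and your first two cases (whether or not $i$ received its largest eligible good) already complete the combination step, so the third case is unnecessary.
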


The algorithm is built in two steps. We first analyze unit-value goods. The key rule is an analogue of Greedy $\alpha$-slack assignment for unit-value chores from \Cref{sec:chores-unit}: instead of assigning to a machine with maximum $\alpha$-slack, we assign to an agent with maximum $\alpha$-deficit. Despite this similarity, the achievable guarantee is exponentially better. We then extend the result to arbitrary values by a natural bucketing scheme.

\medskip
\noindent\textbf{Greedy \boldmath$\alpha$-deficit assignment.}
The case $m = 1$ is trivial.
For $m \geq 2$, let $\alpha = \log_2(m)$.
When a good $j$ arrives, assign it to an eligible agent $i\in\M(j)$ with
maximum $\alpha$-deficit $d_\alpha(i) \coloneqq \frac1\alpha\PS_i-L_i$, where $\PS_i$ includes the arriving good $j$ but not
future goods.

Weighing the load higher than the proportional share is important: using the natural $1$-deficit $\PS_i-L_i$ can be as bad as $\Theta(m)$; see \Cref{lem:goods_online_specific_lower_bound_slack_greedy} in \Cref{app:deficit-greedy-fails}.

Our analysis uses the following technical lemma whose proof we defer to \cref{app:goods-technical-lemma}.
\begin{restatable}{lemma}{TechnicalLemmaGoodsOnline}
    \label{lem:goods-online-show-potential-difference-factor-negative}
    Let $k \geq 2$ and $m \geq 3$ be integers such that $m \geq k$. Then
    \[
        (k - 1) \cdot 2^{\frac{1}{k}} + \frac{2^{\frac{1}{k}}}{m} - k < 0.
    \]
\end{restatable}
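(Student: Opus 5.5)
We need to prove $(k-1)2^{1/k} + 2^{1/k}/m < k$ for integers $k \ge 2$, $m \ge 3$, $m \ge k$. The plan is to use a first-order bound on $2^{1/k}$ and then handle the few small values of $k$ separately.

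Since the left-hand side is decreasing in $m$, and $m \ge \max\{3,k\}$, it suffices to show the inequality for $m = \max\{3,k\}$. The inequality is equivalent to
\[
2^{1/k}\Bigl(k-1+\tfrac1m\Bigr) < k, \qquad\text{i.e.}\qquad 2^{1/k} < \frac{k}{k-1+1/m}.
\]

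Writing $x = 1/k$ and using convexity of $t\mapsto 2^t$ on $[0,1]$, we get $2^{x}\le 1+x$. Equality holds only at $x\in\{0,1\}$, so the inequality is strict for $k\ge 2$. It therefore suffices to show
\[
\Bigl(1+\tfrac1k\Bigr)\Bigl(k-1+\tfrac1m\Bigr)\le k.
\]
Expanding the left side gives $k - \tfrac1k + \tfrac1m + \tfrac1{km}$, so this reduces to
\[
\tfrac1m+\tfrac1{km}\le \tfrac1k \iff k+1\le m.
\]
This handles all cases with $m\ge k+1$.

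Two cases remain: $m=k$ (which forces $k\ge3$), and $k=2$ with $m=3$ (already covered, since $3\ge k+1$).

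For $m=k$, the target is $2^{1/k}(k-1+1/k)<k$, i.e. $2^{1/k}<\frac{k^2}{k^2-k+1}$. Since $\ln\frac{k^2}{k^2-k+1}=-\ln\bigl(1-\tfrac{k-1}{k^2}\bigr)\ge \tfrac{k-1}{k^2}$, it suffices to have $\frac{\ln 2}{k}<\frac{k-1}{k^2}$, i.e. $k\ln 2<k-1$, i.e. $k>1/(1-\ln 2)\approx 3.26$. This settles $k\ge4$.

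The one remaining value is $k=m=3$, which I check directly: $2^{1/3}(2+1/3)\approx 1.26\cdot 2.333\approx 2.94<3$.

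The only delicate point is the boundary $m=k$, where the linear bound on $2^{1/k}$ is too weak. There I switch to the logarithmic comparison and finish with a single numerical check.
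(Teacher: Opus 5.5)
Your proof is correct, but it is organized differently from the paper's. The paper uses a single, sharper linear bound: the chord of $t \mapsto 2^t$ over $[0,\tfrac12]$, which gives $2^{1/k} \le 1 + (2\sqrt2-2)/k$ for all $k \ge 2$. After expanding, it splits only into $k=2$ (using $m \ge 3$) and $k \ge 3$ (using $1/m \le 1/k$ and $1/(mk) \le 1/(3k)$). It finishes with the fairly tight numerical fact $\frac{14}{9}\sqrt2 - \frac{20}{9} < 0$, i.e.\ $392 < 400$.

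You instead use the cruder chord over $[0,1]$, $2^{1/k} < 1 + 1/k$, and observe that it suffices exactly when $m \ge k+1$. This isolates the diagonal $m = k$ as the only delicate case. You then settle it with $-\ln(1-y) \ge y$ for $k \ge 4$ and a direct check at $k = 3$. The paper's route is one uniform estimate. Yours shows where the difficulty actually lies and keeps comfortable margins throughout.

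Two small points. The $k=m=3$ check is stated only numerically; make it exact by noting $2^{1/3} < 9/7$ (since $2\cdot 7^3 = 686 < 729 = 9^3$), so $2^{1/3}\cdot \tfrac73 < 3$. The preliminary reduction to $m = \max\{3,k\}$ is harmless but unnecessary, because your split into $m \ge k+1$ versus $m = k$ already covers every admissible pair.
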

We now analyze greedy $\alpha$-deficit assignment for equal-value goods:

\begin{theorem}
\label{thm:goods-unit-upper}
For restricted-assignment online allocation of equal-value goods to $m \geq 2$ agents, Greedy $\alpha$-deficit assignment achieves
$\alpha$-\PropOne{} for $\alpha = \log_2(m)$.
\end{theorem}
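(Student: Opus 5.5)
The plan follows the chore analysis in \Cref{sec:chores-unit}, with the sign reversed and an exponential potential on the $\alpha$-deficit. Normalize all values to $1$ and set $d_\alpha(i)=\frac1\alpha\PS_i-L_i$. Define
\[
\Psi \coloneqq \sum_{i\in\M} 2^{d_\alpha(i)}.
\]
Initially $\Psi=m$. The goal is to show that $\Psi$ never increases, except possibly by a controlled amount. Granting this, every agent satisfies $2^{d_\alpha(i)}\le \Psi\le m$, hence $d_\alpha(i)\le \log_2 m$ at all times. Compared with the target, this is off by one unit (after rescaling), so I will refine the potential slightly. I would use $\Psi=\sum_i 2^{d_\alpha(i)}$ but with a threshold chosen so that a deficit exceeding $1$ forces $\Psi>m$. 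Alternatively, I would argue directly on the final allocation, as follows. An agent $i$ with $L_i+1<\frac1\alpha\PS_i$ that is eligible for some good it did not receive has $d_\alpha(i)>1$. Whenever the greedy rule gave such a good to another agent $i'$, that agent had deficit at least as large as $i$'s.

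\textbf{Key step: monotonicity.} Suppose a good $j$ with $k=k_j$ eligible agents arrives and goes to $i^*$, an eligible agent of maximum deficit $d^*$. If $k=1$, then $i^*$'s deficit changes by $\frac1\alpha-1\le 0$, so nothing increases. For $k\ge2$, every other eligible agent gains $\frac{1}{\alpha k}$ in deficit, and $i^*$ changes by $\frac1{\alpha k}-1$. The change in $\Psi$ is therefore at most
\[
2^{d^*}\Bigl((k-1)\bigl(2^{\frac1{\alpha k}}-1\bigr)+2^{\frac1{\alpha k}-1}-1\Bigr),
\]
using $d_\alpha(i)\le d^*$ for the other eligible agents. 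To connect this with \Cref{lem:goods-online-show-potential-difference-factor-negative}, I would instead use base $2^{\alpha}=m$ per unit of deficit, i.e.\ $\Psi=\sum_i m^{d_\alpha(i)}$. The per-step factor then becomes $(k-1)(2^{1/k}-1)+(2^{1/k}/m-1)$. This equals exactly $(k-1)2^{1/k}+2^{1/k}/m-k$, which the lemma shows is negative for $m\ge3$, $2\le k\le m$. The case $m=2$ is checked by hand, since then $\alpha=1$.

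\textbf{Concluding.} With $\Psi=\sum_i m^{d_\alpha(i)}$ nonincreasing from its initial value $m$, we get $m^{d_\alpha(i)}\le m$ for every agent, i.e.\ $d_\alpha(i)\le 1$ at all times. This means $L_i+1\ge \frac1\alpha\PS_i$. One point needs care: the credited good must be one that $i$ is eligible for but did not receive. If $d_\alpha(i)\le 0$ we are done. Otherwise $\PS_i>0$, so $i$ is eligible for some good. If $i$ received every good it is eligible for, then $L_i\ge \PS_i\ge\frac1\alpha \PS_i$, because $\PS_i$ counts each such good with weight at most $1$; this contradicts a positive deficit. Hence some eligible good was given away, and $\alpha$-\PropOne{} follows.

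\textbf{Main obstacle.} The main difficulty is getting the base of the exponential right, so that the bound $d\le1$ comes out exactly and the per-arrival factor matches \Cref{lem:goods-online-show-potential-difference-factor-negative}. The bound $d_\alpha(i')\le d^*$ for the non-receiving agents also only goes in the right direction because these terms increase; that is what makes the greedy choice essential.
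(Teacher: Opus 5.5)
Your proposal is correct and follows the paper's proof: the potential $\sum_i m^{d_\alpha(i)}$, the greedy choice to bound the terms of the non-receiving eligible agents, \Cref{lem:goods-online-show-potential-difference-factor-negative} to show the per-arrival change is negative, and $m^{d_\alpha(i)}\le m$ to get $d_\alpha(i)\le 1$, with $m=2$ handled separately. Your explicit treatment of $k_j=1$ (outside the lemma's range) and your argument that an agent with positive deficit has an eligible good it did not receive are two points the paper leaves implicit; the exploratory base-$2$ attempt at the start can simply be dropped.
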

\begin{proof}
    For $m = 2$ agents, greedy $1$-deficit allocation clearly maintains \PropOne.

    Let $m \geq 3$.
    By scaling, assume that all goods have value $1$. We use a potential function that exponentially penalizes agents for which the $\alpha$-deficit $d_\alpha(i)$ is large. We define $\Phi \coloneqq \sum_{i \in \M} \Phi_i$ with
    \begin{align*}
        \Phi_i &\coloneqq m^{d_\alpha(i)}. %
    \end{align*}

    We will show that $\Phi$ does not increase when a new good arrives and is allocated to the eligible agent with largest $\alpha$-deficit.
    From this, it follows that every configuration that can occur during the algorithm's execution satisfies $\Phi \leq m$. In particular, it holds for every agent $i$ that $\Phi_i \leq m$ which we can rewrite as
    \[
        \frac{1}{\alpha}\PS_i - L_i \leq 1.
    \]
    As every good has value $1$, this is exactly the desired $\alpha$-\PropOne{} guarantee.

    It remains to show that $\Phi$ does not increase when a new good arrives.
    Let $i^*$ be the agent to which a good $j$ is allocated.
    Then $j$ increases the proportional shares of all agents in $\M(j)$ by $\frac{1}{k_j}$ and the load of agent $i^*$ by $1$.
    The other loads and proportional shares stay unchanged.
    Thus, $\Phi$ changes as follows:
    \[
        \Phi^{\text{new}} - \Phi^{\text{old}} = \sum_{i \in \M(j) \setminus \{i^*\}} \left(m^{d_\alpha(i) + \frac{1}{\alpha \cdot k_j}} - m^{d_\alpha(i)}\right) + m^{d_\alpha(i^*) + \frac{1}{\alpha \cdot k_j} - 1} - m^{d_\alpha(i^*)}
    \]
    By the greedy criterion, $i^*$ is an agent for which $d_\alpha(i)$ is maximum.
    Therefore, we can bound the change in $\Phi$ as follows:
    \begin{align*}
        \Phi^{\text{new}} - \Phi^{\text{old}} &= \sum_{i \in \M(j) \setminus \{i^*\}} \left(m^{d_\alpha(i) + \frac{1}{\alpha \cdot k_j}} - m^{d_\alpha(i)}\right) + m^{d_\alpha(i^*) + \frac{1}{\alpha \cdot k_j} - 1} - m^{d_\alpha(i^*)}\\
        &\leq \sum_{i \in \M(j) \setminus \{i^*\}} m^{d_\alpha(i^*)} \cdot \left(m^{\frac{1}{\alpha \cdot k_j}} - 1\right) - m^{d_\alpha(i^*)} \cdot \left(1 - m^{\frac{1}{\alpha \cdot k_j} - 1}\right)\\
        &= m^{d_\alpha(i^*)} \cdot \left((k_j - 1) \cdot m^{\frac{1}{\alpha \cdot k_j}} + m^{\frac{1}{\alpha \cdot k_j} - 1} - k_j\right)\\
        &= m^{d_\alpha(i^*)} \cdot \left((k_j - 1) \cdot 2^{\frac{1}{k_j}} + \frac{2^{\frac{1}{k_j}}}{m} - k_j\right).
    \end{align*}
    By \cref{lem:goods-online-show-potential-difference-factor-negative}, the second factor is negative since we have $m \geq 3$.
    Hence, $\Phi$ does not increase when a new good arrives and is allocated to the eligible agent with largest $\alpha$-deficit. This completes the proof.
\end{proof}

We extend the equal-value result to arbitrary values by bucketing goods according to rounded powers of two; the details are deferred to \Cref{app:ub-online-goods}.

\paragraph{Bucketing algorithm.} (for arbitrary values): Upon the arrival of a good $j$ with utility~$p_j$, we round its value up to the nearest power of $2$, denoted by $p'_j = 2^{\lceil \log_2 p_j \rceil}$. We maintain a separate instance for each subset of goods with equal rounded utilities and apply the greedy $\alpha$-deficit algorithm for equal-value goods to each virtual instance.

\medskip
We complement the upper bound with a matching lower bound.

\begin{theorem} \label{thm:good_setting_online_lb}
    No deterministic online algorithm %
    can guarantee to find a restricted-assignment goods allocation that is $\alpha$-\PropOne{} for any $\alpha<\ln(m) - 1$,
    even if all items have utility $1$.
\end{theorem}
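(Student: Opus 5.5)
The plan is a simple adaptive adversary with unit-value goods. It shrinks the eligibility set by one agent per round, always removing the agent that just received a good, so that some agent collects a harmonic-sized proportional share while receiving nothing.

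Fix a deterministic online algorithm and set $S_1 \coloneqq \M$, so $|S_1| = m$. For $t = 1, 2, \ldots, m-1$, the adversary releases one good $j_t$ of value $1$ with $\M(j_t) = S_t$. Let $i_t \in S_t$ be the agent the algorithm assigns it to, and set $S_{t+1} \coloneqq S_t \setminus \{i_t\}$. Then $|S_t| = m - t + 1$, and each released good goes to a distinct agent. After $m-1$ rounds exactly one agent $i^\star$ remains in $S_m$. This agent was eligible for every released good and received none of them, so $L_{i^\star} = 0$ and
\[
  \PS_{i^\star} = \sum_{t=1}^{m-1} \frac{1}{m-t+1} = \sum_{k=2}^{m} \frac{1}{k} = H_m - 1 .
\]
The adversary stops here.

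Next we check $\alpha$-\PropOne{} at $i^\star$. All goods have value $1$, so the best possible credit is a single good $j$ with $i^\star \in \M(j)$ and $\sigma(j) \neq i^\star$. Hence $\alpha$-\PropOne{} at $i^\star$ requires
\[
  0 + 1 \ \ge\ \tfrac{1}{\alpha}\,\PS_{i^\star},
  \qquad\text{that is,}\qquad
  \alpha \ge H_m - 1 .
\]
Using the standard bound $H_m > \ln(m+1) > \ln m$, this gives $\alpha > \ln(m) - 1$. So any $\alpha < \ln(m) - 1$ is violated, and since the algorithm was arbitrary, the theorem follows.

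There is no real obstacle in this argument. The only points needing care are that adaptivity is legitimate against a deterministic algorithm (the adversary can simulate it), and the degenerate small cases. For $m \le 3$ we have $\ln(m) - 1 < 1$, while $\alpha \ge 1$ by definition, so the statement is vacuous there and no separate case analysis is needed.
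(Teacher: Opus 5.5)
Your proof is correct and follows essentially the same route as the paper. The paper uses the same adaptive adversary, which shrinks the eligibility set by removing each recipient in turn until one agent is left with no good and proportional share $H_m - 1 \ge \ln(m) - 1$, so that a single unit credit forces $\alpha \ge H_m - 1$.
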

\begin{proof}
    Fix a deterministic online algorithm.
    Consider the following adversarial instance with unit-value goods.
    The first good is compatible with all agents.
    After seeing which agent receives it, the adversary releases a second good compatible with all other agents. After seeing which agent receives the second good, the adversary releases a third good which is compatible with all agents except the agents that received the first two goods and so on. The adversary proceeds in this way until one agent remains that has not received a good.
    This remaining agent does not receive any good and their largest compatible good has value $1$. However, their proportional share is equal to
    \begin{align*}
        \frac{1}{m} + \frac{1}{m - 1} + \dots + \frac{1}{2} &= H_m - 1 \geq \ln(m) - 1. \qedhere
    \end{align*}
\end{proof}

\section{Envy-based Fairness under Eligibility Constraints}
\label{sec:envy}
We now turn to envy-based fairness. Under eligibility constraints, envy comparisons are not canonical: a job assigned to one machine may be irrelevant to another machine that could never process it. We therefore consider two natural eligibility-aware notions. The strong notion compares two machines only through jobs eligible for both; the weak notion ignores, on the envying side, jobs that the envied machine could not process.

The following definitions are the ones used throughout the section. For machines $i,i'\in\M$, let $C(i,i') \coloneqq \{j \in [n]:i,i'\in\M(j)\}$ be the set of jobs for which both machines are eligible. In the strong notion, $i$ compares only the parts of the two bundles consisting of jobs in $C(i,i')$. In the weak notion for chores, $i$ compares all jobs assigned to $i'$ with only those jobs of $i$ that $i'$ could also have received.

\begin{definition}[Strong envy]
For two machines $i,i' \in \M$, let
$
C(i,i') \coloneqq \{j \in [n] : i,i' \in \M(j)\}
$
denote the set of jobs for which both machines are eligible. Given an allocation
$\sigma$, define
$
L_{i,i'}(\sigma)
\coloneqq \sum_{\substack{\sigma(j)=i\\ j \in C(i,i')}} p_j .
$
Machine $i$ \emph{strongly envies} machine $i'$ if
$
L_{i,i'}(\sigma) >
L_{i',i}(\sigma).
$
\end{definition}

An allocation is \emph{strongly envy-free} if no machine strongly envies another machine. It satisfies \emph{strong EF1} if, for every pair $i,i'\in\M$, either $i$ does not strongly envy $i'$, or there is a job $j$ assigned to $i$ with $j\in C(i,i')$ such that removing $j$ eliminates the strong envy. It satisfies \emph{strong EFX} if the same holds for every job $j$ assigned to $i$ with $j\in C(i,i')$.

\begin{definition}[Weak envy]
Let $i,i' \in \M$.
We say that machine~$i$ \emph{weakly envies} machine $i'$ if
$
L_{i,i'}(\sigma) > L_{i'}(\sigma).
$
\end{definition}

An allocation is \emph{weakly envy-free} if no machine weakly envies another
machine. It satisfies \emph{weak EF1} if, for every pair $i,i' \in \M$, either $i$ does not
weakly envy $i'$, or there is a job $j$ assigned to $i$ with $i'\in\M(j)$ such
that removing $j$ eliminates the weak envy. It satisfies \emph{weak EFX} if the same
holds for every job $j$ assigned to $i$ with $i'\in\M(j)$.

The envy definitions for the goods setting are analogous.
\begin{definition}[Envy for goods]
    Given an allocation $\sigma$ of goods, agent $i$ strongly envies agent $i'$ if
    $
    L_{i,i'}(\sigma) < L_{i',i}(\sigma).
    $
    Agent $i$ weakly envies agent $i'$ if
    $
    L_{i}(\sigma) < L_{i',i}(\sigma).
    $
\end{definition}

Again, an allocation is \emph{strongly (weakly) envy-free} if no agent strongly (weakly) envies another agent. It satisfies \emph{strong (weak) EF1} if, for every pair $i, i' \in \M$, either $i$ does not strongly (weakly) envy $i'$, or there is a good $j$ assigned to $i'$ with $j \in C(i, i')$ such that removing $j$ eliminates the strong (weak) envy.

\paragraph{Remark.}
In the graphical and restricted-additive models discussed in \Cref{sec:related-work}, envy is measured with the agent's own valuation, which is zero on items the agent is not eligible for. For goods under identical valuations, envy and EF1 in that reading coincide with our weak notions, since $v_i(\sigma^{-1}(i')) = L_{i',i}(\sigma)$; in particular, the offline existence of weak EF1 also follows from~\cite{DEGK25orientations}. EFX does not coincide: there, envy must vanish after removing \emph{any} good from the envied bundle, including goods the envying agent values at zero~\cite[Lemma~2.11]{KSS24tale}, so that an agent may not envy at all a bundle containing a good she is not eligible for. Our weak EFX removes only goods the envying machine is eligible for and is therefore weaker; our results do not bear on the stronger variant. For chores, the valuation-based reading yields yet another notion, incomparable in spirit to ours since it charges a machine for jobs the envied machine could never have taken; EF1 orientations may fail to exist for chores in that reading~\cite{SWL26orientations}, whereas weak EF1 always exists in ours.

For envy-based notions, identical valuations make goods and chores equivalent, even under eligibility constraints. This contrasts with the proportionality-based notions studied above.

\begin{restatable}[Equivalence of goods and chores for envy-based notions]{proposition}{PropEquivalenceGoodsChores}
\label{prop:envy-goods-chores-equivalence}
Consider a restricted-assignment chore instance with jobs $J$ and processing times $(p_j)_{j\in J}$. Replace each chore $j$ by a good of value $p_j$, preserving the eligibility sets and the allocation. Then an allocation satisfies strong, respectively weak, EF, EF1, or EFX in the chore instance if and only if the corresponding allocation satisfies the same strong, respectively weak, notion in the goods instance.
\end{restatable}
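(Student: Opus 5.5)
The plan is to exhibit an explicit bijection between ordered pairs, and then check the statement definition by definition. Envy by $i$ towards $i'$ in the chore instance should correspond exactly to envy by $i'$ towards $i$ in the goods instance. I will fix an allocation $\sigma$, which is the same in both instances. Since eligibility sets are preserved, the sets $C(i,i')=C(i',i)$ and all quantities $L_i(\sigma)$ and $L_{i,i'}(\sigma)$ are literally identical in both readings. Everything therefore reduces to comparing the defining inequalities and the sets of removable items.

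\textbf{Envy-freeness.} For the strong notion, chore machine $i$ strongly envies $i'$ iff $L_{i,i'}(\sigma)>L_{i',i}(\sigma)$. Goods agent $i'$ strongly envies $i$ iff $L_{i',i}(\sigma)<L_{i,i'}(\sigma)$, which is the same inequality. For the weak notion, chore machine $i$ weakly envies $i'$ iff $L_{i,i'}(\sigma)>L_{i'}(\sigma)$. Goods agent $i'$ weakly envies $i$ iff $L_{i'}(\sigma)<L_{i,i'}(\sigma)$, again identical. Ranging over all ordered pairs, this gives the equivalence for strong and weak EF.

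\textbf{Relaxations EF1 and EFX.} In the chore notions, the removable jobs for the pair $(i,i')$ are the jobs $j$ with $\sigma(j)=i$ and $j\in C(i,i')$ (strong), respectively $i'\in\M(j)$ (weak). For the weak case I will use that $\sigma(j)=i$ forces $i\in\M(j)$, so this set also equals $\{j:\sigma(j)=i,\ j\in C(i,i')\}$. In the goods notions, for envying agent $i'$ and envied agent $i$, the removable goods are those assigned to $i$ in $C(i',i)$, which is the same set. Removing such a $j$ decreases $L_{i,i'}(\sigma)$ by exactly $p_j$ and leaves $L_{i',i}(\sigma)$ and $L_{i'}(\sigma)$ unchanged, because $j$ is not assigned to $i'$. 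Hence the post-removal inequality is the same expression in both readings. The existential quantifier over $j$ gives EF1 and the universal quantifier gives EFX.

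\textbf{Main obstacle.} The only real care needed is bookkeeping: the roles of envier and envied swap, and the removed item moves from the envier's bundle (chores) to the envied bundle (goods). One must check that the eligibility filters coincide, in particular the weak case, where the chore side filters by $i'\in\M(j)$ and the goods side filters by $C(i',i)$. The paper does not spell out weak EFX for goods, so I will take it as the evident analogue, namely ``for every good assigned to $i$ in $C(i',i)$''.
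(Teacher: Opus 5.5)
Your proposal is correct and follows the same route as the paper: chore machine $i$ envying $i'$ is the same inequality as goods agent $i'$ envying $i$, and the sets of removable items coincide, so EF, EF1, and EFX transfer in both the strong and weak versions. You make explicit what the paper leaves terse, notably that $\sigma(j)=i$ forces $i\in\M(j)$ (so the weak-chore filter $i'\in\M(j)$ agrees with the goods filter $j\in C(i',i)$), and that removing $j$ from $i$ leaves $L_{i',i}(\sigma)$ and $L_{i'}(\sigma)$ unchanged.
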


This equivalence contrasts sharply with proportionality-based notions, where goods and chores lead to different guarantees; see
\Cref{sec:chore_online,sec:goods_setting}. In particular, simple
examples show that a \PropOne{} allocation for goods need not be \PropOne{} for chores, and vice versa; see \Cref{obs:propone-goods-neq-chores}.

The proof is by reversing the direction of envy and is deferred to \Cref{app:goods=chores}. In the remainder of this section, we state the results for chores only; by \Cref{prop:envy-goods-chores-equivalence}, they transfer directly to goods.

Together, the results in this section show that strong guarantees with respect to envy-based fairness notion are largely impossible under eligibility constraints, whereas weak EF1 remains achievable online and weak EFX is achievable offline but not online. For convenience, \Cref{fig:envy-notions} summarizes the envy-based results from this section. %

\providecolor{badBorder}{HTML}{C0392B}
\providecolor{badFill}{HTML}{FDF0EF}
\providecolor{midBorder}{HTML}{B9770E}
\providecolor{midFill}{HTML}{FEF8EC}
\providecolor{goodBorder}{HTML}{1E8449}
\providecolor{goodFill}{HTML}{EDF9F1}
\providecolor{arrowGray}{HTML}{5D6D7E}
\providecolor{inkDark}{HTML}{2C3E50}

\begin{figure}[t]
\centering
\begin{tikzpicture}[
    node distance=0.45cm and 0.55cm,
    notion/.style={
        draw, rounded corners=3pt, align=center,
        minimum width=2.75cm, minimum height=0.85cm,
        line width=0.8pt, inner ysep=3.5pt, inner xsep=5pt,
        drop shadow={shadow xshift=0.6pt, shadow yshift=-0.9pt,
                     opacity=0.15, fill=black},
        fill=white
    },
    nonexist/.style={notion, draw=badBorder,  fill=badFill},
    offline/.style ={notion, draw=midBorder,  fill=midFill},
    both/.style    ={notion, draw=goodBorder, fill=goodFill},
    implies/.style={
        -{Stealth[length=2.4mm, width=1.8mm]},
        line width=0.9pt, draw=arrowGray,
        shorten >=2pt, shorten <=2pt
    }
]

\node[nonexist] (sefx) {%
    {\footnotesize\bfseries\color{inkDark} Strong EFX}\\[-5pt]
    {\tiny\color{badBorder} $\times$\, does not exist}};

\node[nonexist, below left=of sefx] (sef1) {%
    {\footnotesize\bfseries\color{inkDark} Strong EF1}\\[-5pt]
    {\tiny\color{badBorder} $\times$\, does not exist}};

\node[offline, below right=of sefx] (wefx) {%
    {\footnotesize\bfseries\color{inkDark} Weak EFX}\\[-5pt]
    {\tiny {\color{goodBorder}\checkmark\, offline} \;\,
           {\color{badBorder}$\times$\, online}}};

\node[both, below right=of sef1] (wef1) {%
    {\footnotesize\bfseries\color{inkDark} Weak EF1}\\[-5pt]
    {\tiny\color{goodBorder} \checkmark\, offline \;\, \checkmark\, online}};

\draw[implies] (sefx) -- (sef1);
\draw[implies] (sefx) -- (wefx);
\draw[implies] (sef1) -- (wef1);
\draw[implies] (wefx) -- (wef1);

\end{tikzpicture}
\caption{The four eligibility-aware envy notions, ordered by logical strength ($A \rightarrow B$ means that every allocation satisfying $A$ also satisfies $B$), together with our offline and online existence results.} %
\label{fig:envy-notions}
\end{figure}

\subsection{Strong Envy}

The strong notion is arguably the closest analogue of envy under eligibility constraints: machine $i$ compares itself to machine $i'$ only through jobs that both machines could have processed. However, this notion is too demanding: strong EF1 may fail to exist even offline, and does not admit any approximation guarantee in the online setting. Proofs %
are deferred to \Cref{app:strong-ef1}.

\begin{restatable}{theorem}{ThmNoEFOneChore}
\label{thm:noEF1-chore}
In restricted-assignment chore allocation with general processing times,
strong EF1 is not always achievable, even when $m=7$.
\end{restatable}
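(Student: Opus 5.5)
The counterexample will use the biplane of order $2$, as announced in the overview. This is the symmetric $2$-$(7,4,2)$ design: $7$ points and $7$ blocks, each block of size $4$, each point in exactly $4$ blocks, every two points together in exactly $2$ blocks, and every two blocks meeting in exactly $2$ points. We take the $m=7$ machines to be the points and add one job $j_B$ with $\M(j_B)=B$ for every block $B$. Then $C(i,i')=\{j_B, j_{B'}\}$, where $B,B'$ are the two blocks through $i$ and $i'$.

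The first step is to extract the local obstruction that the design forces. Suppose machine $i$ receives two block-jobs $j_B,j_{B'}$. Then $B\cap B'=\{i,i'\}$ for a unique $i'$, and $C(i,i')=\{j_B,j_{B'}\}$. So $L_{i,i'}=p_B+p_{B'}$ while $L_{i',i}=0$, and removing either job leaves positive strong envy, which violates strong EF1.

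With only these $7$ jobs, however, a system of distinct representatives exists. So the design alone does not suffice, and I will add jobs with non-unit sizes; this is where general processing times enter. My first attempt is a single heavy job $h$ with $\M(h)=\M$ and size $H$ larger than all block-jobs combined. Some machine $i$ must take $h$. Since $h\in C(i,i')$ for every $i'$, machine $i$ strongly envies every $i'$ with $L_{i',i}<L_{i,i'}$. Removing $h$ is the only way to repair this, so strong EF1 for the pair $(i,i')$ requires
\[
\sum_{j\in C(i,i')\setminus\{h\},\ \sigma(j)=i} p_j \;\le\; \sum_{j\in C(i,i'),\ \sigma(j)=i'} p_j .
\]
I then trace the consequences. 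If $i$ also holds a block-job $j_B$, then each of the three other points $i'\in B$ must receive the unique other block $B_{i'}\ni i,i'$, and the three blocks $B_{i'}$ are exactly the remaining blocks through $i$. If instead $i$ holds no block-job, the four blocks through $i$ go to other machines, and the pair constraints with $i$ become mild. So the single heavy job probably does not force a contradiction by itself.

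I will therefore tune extra jobs with different weights so that both branches fail. The candidates are a second heavy job, or duplicated block-jobs with distinct sizes $p_B$ chosen generically so that no two subsets of block-jobs have equal sums. The case analysis then runs up to the automorphism group of the biplane ($\mathrm{PSL}(2,7)$, of order $168$, transitive on points and on flags). This reduces the analysis to: who receives the heavy job(s), and whether that machine also receives a block-job. In each branch the pigeonhole argument of the first step, or a pair inequality of the form displayed above, should produce a violating pair.

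The main obstacle is choosing the weights so that every branch genuinely closes; I expect the pure unit-size design to be strong-EF1 feasible. Concretely, I would first fix the instance (design blocks plus one all-eligible heavy job, with block sizes set so that each block's $4$ machines face tight pair inequalities). I would then verify, possibly by a finite exhaustive check over the at most $4^7\cdot 7$ allocations modulo symmetry, that no allocation is strong EF1, adjusting the weights if a surviving allocation appears.
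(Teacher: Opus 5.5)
Your setup matches the paper's: the seven block-jobs of the biplane of order $2$, and the local obstruction that a machine holding $j_B$ and $j_{B'}$ strongly envies the second point of $B\cap B'$ beyond repair. You also correctly note that the design alone admits strong EF1 allocations. The gap is that you never fix an instance and never give an impossibility argument; the decisive step is deferred to ``tuning weights'' and a computer search. Worse, the instance you lean toward cannot work. Take one job $h$ eligible for all machines, of any size $H>0$, and arbitrary positive block sizes. Choose a block $B_0$ of minimum size and a point $i\in B_0$, and allocate as follows. Give $h$ and $j_{B_0}$ to $i$. Give each $i'\in B_0\setminus\{i\}$ the other block $B_{i'}$ through $i$ and $i'$. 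Match the three blocks avoiding $i$ to the three points outside $B_0$; each such block contains two of these points and each point lies in two such blocks, so a perfect matching exists. Every machine other than $i$ holds a single job and so meets the strong EF1 condition toward everyone. Toward $i'\in B_0\setminus\{i\}$, removing the larger of $h$ and $j_{B_0}$ leaves at most $p_{B_0}\le p_{B_{i'}}=L_{i',i}$. Toward $x\notin B_0$, the only job of $i$ in $C(i,x)$ is $h$, and removing it leaves $0$. So this allocation is strong EF1 whatever the weights. Two all-eligible jobs larger than all block-jobs combined fail too. A machine holding only a heavy job can be the partner of a double-block machine without being envied. In the paper's labelling, machines $1$ and $2$ take the heavy jobs, machine $4$ takes jobs $1,4$, machine $3$ takes jobs $6,7$, and jobs $5,2,3$ go to machines $5,6,7$; this is strong EF1. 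Heavy extra jobs are too easy to neutralize, because removing a heavy job erases all envy of its holder.

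The missing idea is to make the extra all-eligible jobs \emph{lighter} than the block-jobs: the paper adds two jobs of size $p\in(1/2,1)$. Suppose a machine $i$ holds two block-jobs with partner $i'$. One light job on $i'$ cannot cancel the obstruction, since $1>p$, so $i'$ must hold both light jobs. But then $i'$ still envies, after any removal, each of the at least three machines holding no job it is eligible for. Hence every machine holds exactly one block-job. If both light jobs sit on one machine $i$, it violates strong EF1 toward each of the at least three machines whose block avoids $i$. If they sit on two machines $i_1\neq i_2$, the paper uses the envy graph of the block assignment, with an arc $i\to i'$ iff $i'\in B_i$ and $i\notin B_{i'}$. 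The design forces equal in- and out-degree at every vertex, rules out $2$-cycles, and (by a short case check) leaves at most one isolated vertex. So all but at most one machine lie on directed cycles of length at least $3$. Consequently one of the two, say $i_1$, has an arc to a third machine $i'$; then $L_{i_1,i'}=1+p$ while $L_{i',i_1}=0$, so the envy survives removing either job of $i_1$. This structural argument for the one-block-per-machine case is the real content of the proof, and your proposal has no counterpart to it. An exhaustive check would also be a valid proof, but only once a working instance has been specified.
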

\begin{proof}[Proof sketch]
We construct an instance using the complement of the Fano plane:  %
take seven unit-size jobs and seven machines, and choose the eligibility sets to be the seven $4$-element complements of the lines of the Fano plane. Equivalently, these eligibility sets form the unique symmetric $2$-$(7,4,2)$ design: every job has four eligible machines, every machine is eligible for four jobs, any two jobs have exactly two common eligible machines, and any two machines are eligible for exactly two common jobs. We then add two jobs of size $p\in(1/2,1)$ for which all machines are eligible.

The incidence structure forces any strong-EF1 allocation into a very rigid
form. If some machine receives two unit-size jobs, then the unique common
compatibilities of these jobs force both small jobs to be placed in a way that creates strong envy elsewhere. Hence any strong-EF1 allocation would have to assign exactly one unit-size job to each machine. In that case, the directed strong-envy graph induced by the unit-size jobs is almost fully covered by cycles. Since there are only two small jobs, at least one machine on such a cycle receives no small job while its predecessor on the cycle does, causing the predecessor to strongly envy it even after removing one job. Thus no allocation satisfies strong EF1.
\end{proof}

In the online setting, %
even finite %
approximations of strong EF1 are impossible to achieve
for general processing times.
For $\alpha>0$, an allocation satisfies strong $\alpha$-EF1 if, for every pair $i,i'\in\M$, either $i$ does not strongly envy $i'$, or there is a job $j\in C(i,i')$ assigned to $i$ such that $L_{i,i'}-p_j\le \alpha L_{i',i}$, where $L_{i,i'}$ denotes the load assigned to $i$ among jobs compatible with both $i$ and $i'$.

\begin{restatable}{theorem}{ThmNoAlphaEFOne}
For every $\alpha>0$, no deterministic online algorithm can guarantee strong $\alpha$-EF1 in restricted-assignment chore allocation with general processing times, even when $m=3$.
\end{restatable}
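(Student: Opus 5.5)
The plan is an explicit adaptive adversary on three machines $a,b,c$. Job sizes will range over very different scales, for example $\varepsilon \ll 1 \ll M$ with ratios far larger than $\alpha$, so that a single small job on the envied side cannot restore the bound. The mechanism that produces a violation is this. Suppose machine $i$ holds two jobs of $C(i,i')$ while $i'$ holds no job of $C(i,i')$, or only jobs of total size below $1/\alpha$ times the smaller one. Then removing any single job $j\in C(i,i')$ from $i$ leaves $L_{i,i'}-p_j>\alpha L_{i',i}$, and strong $\alpha$-EF1 fails. The adversary will repeatedly release a job whose eligibility set is a pair $\{i,i'\}$, or all three machines, chosen so that one of the two possible assignments immediately creates such a configuration. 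This forces the algorithm's hand and lets the adversary steer the instance.

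I would begin concretely as follows. Release $j_1$ of size $1$, eligible for all machines; by symmetry it goes to $a$. Release $j_2$ of size $1$ with $\M(j_2)=\{a,b\}$. If it goes to $a$, then $a$ holds $j_1,j_2\in C(a,b)$ while $b$ holds nothing, which is a violation; so it goes to $b$. Symmetrically, release $j_3$ with $\M(j_3)=\{a,c\}$, which must go to $c$.

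Now every machine holds exactly one unit job, but the pairwise common sets are asymmetric:
\begin{itemize}
\item $C(a,b)\ni j_1,j_2$ and $C(a,c)\ni j_1,j_3$;
\item $C(b,c)$ contains only $j_1$, which lies on $a$.
\end{itemize}
The next phase exploits this asymmetry with a job of a new scale. I would release a job of size $M\gg\alpha$ eligible for $\{b,c\}$, say. Whichever of $b$ and $c$ receives it now holds a large load in the common set with the other. Then follow-up jobs on the pairs $\{a,b\}$ and $\{a,c\}$, of size $\varepsilon$ or $M$ as appropriate, are chosen so that every continuation leaves some machine with two jobs in a common set against a negligible opposing load.

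The main obstacle is closing this case tree. After the forced first three steps, the algorithm still has genuine choices, and I must check that every branch ends in a violation within a bounded number of further jobs. First I would try to choose sizes that grow geometrically by a factor larger than $2\alpha$ per phase. With that choice, at each step only the largest jobs in a common set matter: a machine facing a new big job $j$ eligible for a pair $\{i,i'\}$ with $L_{i,i'}\approx L_{i',i}$ small must take it, after which a second big job on the same pair breaks $\alpha$-EF1 for whichever machine receives it. The reason is that the two big jobs either land on the same machine, which is a direct violation, or on different machines. In the latter case a third big job on the pair, or a job eligible for all three machines, creates the asymmetry. If a pair alone does not suffice, because alternating assignments keep the pair balanced, I would use the third machine as a lever. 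That lever is a job eligible for all three machines, placed at a scale where balance on one pair forces imbalance on another. This is in the spirit of the rigid incidence argument in the proof of \Cref{thm:noEF1-chore}, but made adaptive.
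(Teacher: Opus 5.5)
Your violation mechanism and your forced opening ($j_2$ to $b$, $j_3$ to $c$) are correct. However, the proposal stops exactly where the proof has to happen: you never specify the adversary beyond this point, and the mechanisms you sketch cannot close the case tree. The claim that a second big job on a pair breaks strong $\alpha$-EF1 for whichever machine receives it is false, because the algorithm just puts it on the other machine. Geometric scaling does not help either. A job eligible only for $\{x,y\}$ affects only the comparison between $x$ and $y$. So for pair jobs, each pair is an independent two-machine instance, and assigning each job to the machine with the smaller restricted load maintains even exact strong EF1. Hence no sequence of follow-up jobs on $\{a,b\}$ and $\{a,c\}$ can leave a machine with two common jobs against a negligible opposing load. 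Jobs eligible for all three machines are therefore indispensable. But such a job can only force a violation if, when it arrives, \emph{every} machine $i$ has a partner $i'$ with $L_{i,i'}>\alpha L_{i',i}$; for $\alpha\ge 1$ these strict dominances must form a directed $3$-cycle. Otherwise the algorithm places the job on a machine that dominates nobody, and one checks that no violation arises. Your idea that balance on one pair forces imbalance on another does not produce this structure.

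The missing idea is to build this cycle deliberately and only then release the triple job, as the paper does.
\begin{itemize}
\item Fix targets so that each machine is the target of exactly one pair, e.g.\ target $1$ for $\{1,2\}$, target $2$ for $\{2,3\}$, target $3$ for $\{1,3\}$.
\item Release a unit job for each pair. Whenever it lands on the non-target, release a correction job of size $p=\alpha+1$ for the same pair. Your own mechanism forces this job onto the target: otherwise the non-target holds two jobs of the common set while the target holds none.
\item Now, for every pair, the target has restricted load $1$ against $0$, or $p$ against $1$.
\item Finally release one job of size $p$ eligible for all three machines. The receiving machine $i$ is the target of some pair $\{i,i'\}$. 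Since the new job is a largest job of $i$ in $C(i,i')$, removing any single job leaves restricted load at least $1>\alpha\cdot 0$, respectively at least $p>\alpha\cdot 1$. This is a violation.
\end{itemize}
Your opening with $j_1$ on $a$ can be completed in the same spirit. Orient the pairs as $b$ over $c$, $c$ over $a$, and $a$ over $b$, forcing each orientation with at most two jobs of rapidly growing size, and then release a largest triple job. But this cyclic construction is exactly what your proposal is missing.
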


\subsection{Weak Envy}
The weak notion is more permissive: jobs incompatible with the envied machine are ignored on the envying side, while the full load of the envied machine is still counted. This guarantees the existence of EFX allocations offline and makes EF1 (but not EFX) achievable online.
The online results largely carry over from the unrestricted temporal setting for which they were shown by Elkind et al.~\cite{ELL+25temporal}; we show that offline EFX can be achieved by sorting the jobs in order of non-increasing processing times and then running the online algorithm.
For completeness, we give full proofs for all results in \Cref{app:weak-envy}.

\begin{restatable}{theorem}{ThmWeakEFOneGreedy}
There is a deterministic online algorithm that maintains weak EF1 for restricted-assignment chore allocation for arbitrary processing times.
An allocation satisfying weak EFX exists and can be computed offline in polynomial time.
\end{restatable}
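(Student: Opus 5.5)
Both parts rest on a single greedy rule, in the spirit of the temporal algorithm of Elkind et al.: assign each arriving job $j$ to a machine $i\in\M(j)$ of minimum current load $L_i$ (the full load, not a restricted one). The claim is that after every step, for every pair $i,i'$ with $L_{i,i'}>L_{i'}$, removing the \emph{last} job $j$ that $i$ received with $i'\in\M(j)$ eliminates the weak envy. This is weak EF1 online. For the offline statement, we run the same rule with jobs sorted by non-increasing $p_j$ and strengthen the claim: removing \emph{any} job $j$ of $i$ with $i'\in\M(j)$ eliminates the weak envy. That gives weak EFX, and the algorithm clearly runs in polynomial time.

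For the online invariant, fix $i,i'$ and let $j$ be the last job assigned to $i$ with $i'\in\M(j)$. When $j$ arrived, $i'$ was eligible, so the greedy choice gives $L_i^{\text{before}}\le L_{i'}^{\text{before}}$. Every job counted in $L_{i,i'}$ other than $j$ arrived before $j$, so
\[
L_{i,i'}-p_j\le L_i^{\text{before}}\le L_{i'}^{\text{before}}\le L_{i'}.
\]
The last step uses that loads never decrease. Jobs $i$ receives after $j$ have $i'\notin\M(\cdot)$ and therefore do not enter $L_{i,i'}$. If $i$ has received no job compatible with $i'$, then $L_{i,i'}=0$ and there is no envy. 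This argument works at every time step, so the invariant is maintained online for arbitrary processing times. The point to check is that the weak notion charges $i'$ with its full load, while $i$ uses only jobs compatible with $i'$; this asymmetry is exactly what makes the greedy comparison at the arrival of $j$ suffice.

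For weak EFX offline, the jobs arrive in non-increasing order of size. Any job $j'$ of $i$ with $i'\in\M(j')$ arrived no later than the last such job $j$, so $p_{j'}\ge p_j$. Hence
\[
L_{i,i'}-p_{j'}\le L_{i,i'}-p_j\le L_{i'},
\]
which gives weak EFX.

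I expect no serious obstacle here. The only subtlety is remembering that the minimum-load rule must compare full loads. A rule comparing restricted loads would not yield the chain of inequalities above.
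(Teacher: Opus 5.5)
Your proposal is correct and is essentially the paper's proof. It uses the same minimum-load greedy rule (after Elkind et al.), the same distinguished job $j$ (the last job on $i$ compatible with $i'$), and the same observation that under non-increasing arrival order this $j$ is the smallest such job, which gives weak EFX. The only cosmetic difference is in how you reach $L_{i,i'}-p_j\le L_{i'}$. You apply the greedy comparison $L_i\le L_{i'}$ directly when $j$ arrives. The paper instead starts from the last moment at which $i$ did not envy $i'$, and notes that while $i$ envies $i'$, greedy never gives $i$ a job compatible with $i'$.
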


\begin{restatable}%
{theorem}{ThmNoWeakEFX}
No deterministic online algorithm can guarantee weak EFX in
restricted-assignment chore allocation, even when $m=2$.
\end{restatable}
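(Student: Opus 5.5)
The plan is a small adaptive adversary argument with two machines, in which every job is eligible for both machines. With full eligibility we have $C(1,2)$ equal to the set of all jobs, so $L_{i,i'}=L_i$. Weak EFX then reduces to the classical condition: for every job $j$ assigned to machine $i$, we need $L_i - p_j \le L_{i'}$. The adversary may stop the sequence as soon as the allocation violates this condition, since the theorem only asks that some final allocation fails weak EFX.

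Fix a deterministic online algorithm. The adversary proceeds as follows.
\begin{enumerate}
\item It first releases a unit-size job eligible for both machines. By symmetry, assume it goes to machine~$1$.
\item It then releases a second unit-size job, again eligible for both machines.
\begin{itemize}
\item If this job is also placed on machine~$1$, the adversary stops. Machine~$1$ has load $2$, and removing either of its jobs leaves load $1 > 0 = L_2$, so weak EFX is violated.
\item Hence we may assume the second job goes to machine~$2$, leaving loads $(1,1)$.
\end{itemize}
\item Finally, the adversary releases a job of size $2$ eligible for both machines. Let $a$ be the machine receiving it, and $a'$ the other machine.
\begin{itemize}
\item Machine $a$ now holds jobs of sizes $1$ and $2$, so $L_a=3$.
\item Removing the unit job, which machine $a'$ is eligible for, leaves $3-1=2 > 1 = L_{a'}$.
\item So the weak EFX condition fails for that job, whichever machine is chosen.
\end{itemize}
\end{enumerate}
This shows that no deterministic online algorithm can maintain weak EFX with $m=2$.

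For contrast, it is worth noting that an offline weak EFX allocation exists for this instance, namely $\{2\}$ on one machine and $\{1,1\}$ on the other. This matches the preceding theorem and confirms that the failure is caused by the online constraint.

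There is no real obstacle in this argument. The only point requiring care is verifying the definitions. Under full eligibility, weak envy compares $L_{i,i'}=L_i$ with $L_{i'}$, and EFX quantifies over jobs $j$ on $i$ with $i'\in\M(j)$, which here means all jobs on $i$. We should also make sure the violation is witnessed by a job the envied machine is eligible for, which is automatic here. A reader might expect the lower bound to use eligibility, but it is not needed: the instance is already unrestricted. If one prefers a construction that exercises eligibility, the same pattern works with extra restricted jobs, but the simple version suffices.
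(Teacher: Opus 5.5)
Your proof is correct and is essentially the paper's own argument. Both use two machines and three jobs eligible for both, of sizes $1,1,2$ (the instance from Elkind et al.). The two unit jobs are forced onto different machines, and wherever the size-$2$ job goes, removing the unit job still leaves load $2>1$.
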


\section{Conclusion}
\label{sec:conclusion}

We initiate the study of online proportional and envy-based fairness under restricted assignment constraints. Eligibility creates a sharp chore--goods separation for proportionality, while the choice of eligibility-aware notion becomes decisive for envy-based fairness. Several questions remain open.

The offline results identify \PropOne{} as the strongest exact proportionality notion that is always achievable under restricted assignment~(\Cref{sec:offline}; see also~\cite{SWL26orientations}): \Prop{} and \PropX{} may fail, while \PropOne{} always exists. This leaves open the best possible approximation factors for the stronger notions. In particular, what are the optimal guarantees for $\alpha$-\Prop{} and $\alpha$-\PropX{} in the offline and online settings?

For envy-based fairness, our results show that strong and weak notions behave very differently: strong EF1 may fail to exist offline, whereas weak EF1 is achievable online and weak EFX separates offline from online. A natural question is whether strong EF1 admits meaningful approximation guarantees, or whether alternative eligibility-aware envy notions interpolate between the strong and weak definitions. More broadly, it would be an interesting research direction to investigate meaningful fairness notions for other constraint settings.

Our online results are for adversarial arrivals and irrevocable assignments. It would be interesting to understand whether stronger guarantees become possible under additional information or milder online models, such as randomized algorithms, random-order arrivals, predictions of maximum item values~\cite{choo2025approximateproportionalityonlinefair,NeohT2026closing,DuS2026online}, known final proportional shares, or limited recourse~\cite{AmanatidisGMP2026buffers}; for known final proportional shares, we show that this may not help in the online chore setting~(\Cref{thm:chore-online-LB}). Such models may help separate the difficulty caused by online uncertainty from the difficulty caused by eligibility constraints.

\paragraph{Acknowledgments.} We thank Sander Borst, Sarah Morell, and Karol Węgrzycki for helpful discussions on this topic. This work was supported by the Deutsche Forschungsgemeinschaft (DFG, German Research Foundation) under Germany's Excellence Strategy – EXC 3036 – project no.\ 533607631.

\paragraph{Use of AI tools.}
The authors used an AI assistant for language editing and for checking the related-work discussion. All results, proofs, and technical content are the authors' own, and all suggested edits were reviewed and approved by the authors.

\newpage
\appendix
\crefalias{section}{appendix}
\crefalias{subsection}{appendix}
\crefalias{subsubsection}{appendix}

\section*{Appendices}
\section{Omitted Proofs from \Cref{sec:preliminaries} (Offline Goods and Chores)}
\label{app:offline}

We collect the omitted proofs for the offline proportionality results from \Cref{sec:offline}. Throughout this appendix, the full instance is known in advance and a single allocation must be computed.

Although the goods and chore settings share their notation, their fairness notions do not coincide: an allocation may be fair in one reading and unfair in the other, in either direction.

\ObsPropOneGoodsNeqChores*
\begin{proof}
    For the first statement, consider $3$ agents and $4$ unit-size chores for which all agents are eligible.
    Allocating two chores each to two agents satisfies \PropOne, but this allocation would not satisfy \PropOne{} if the chores were goods.

    For the second statement, consider $3$ agents and $2$ unit-size goods for which all agents are eligible.
    Assigning both goods to the same agent satisfies \PropOne, but this allocation would not satisfy \PropOne{} if the goods were chores.
\end{proof}

Next we show that, even in this fully informed setting, the stronger proportionality notions are not always achievable: there are instances with no \PropX{} allocation, and hence with no exact proportional allocation. We then show that \PropOne{} remains achievable and can be computed in polynomial time.

\ObsOfflineNoPropX*
\begin{proof}
For the chore setting, consider two machines $i_1,i_2$ and three jobs $j_1,j_2,j_3$ with $p_{j_1}=1$ and $p_{j_2}=p_{j_3}=\varepsilon$, where $\varepsilon\in(0,1/2)$, and let $\M(j_1)=\{i_1,i_2\}$, $\M(j_2)=\{i_1\}$, and $\M(j_3)=\{i_2\}$.

Then both machines have proportional share $\PS_{i_1}=\PS_{i_2}=1/2+\varepsilon$.
The jobs $j_2$ and $j_3$ are forced to machines $i_1$ and $i_2$, respectively, and $j_1$ must be assigned to one of them. Suppose, without loss of generality, that $j_1$ is assigned to $i_1$. Then $L_{i_1}=1+\varepsilon$, and removing the small job
$j_2$ leaves load
\[
L_{i_1}-p_{j_2}=1 \;>\; 1/2+\varepsilon=\PS_{i_1},
\]
which fails because $\varepsilon<1/2$. Thus $\PropX{}$ is violated; the case in which $j_1$ is assigned to $i_2$ is symmetric. Hence no allocation satisfies \PropX{}. Since every \Prop{} allocation is also \PropX{}, the same instance rules out \Prop{} as well.

For the goods setting, consider three machines $i_1,i_2,i_3$ and two jobs $j_1,j_2$ with $p_{j_1}=1$ and $p_{j_2}=\varepsilon$, where $\varepsilon\in(0,1/3)$, and let $\M(j_1)=\M(j_2)=\{i_1,i_2,i_3\}$.
Then every machine has proportional share $\PS_{i}=1/3+\varepsilon/3$.
Since there are only two jobs, in every allocation some machine $i$ receives no job, so $L_i=0$. Both jobs are eligible for $i$ and allocated elsewhere, so $\PropX{}$ requires in particular
\[
L_i+p_{j_2}=\varepsilon \;\ge\; 1/3+\varepsilon/3=\PS_i,
\]
which fails. Thus no allocation satisfies $\PropX{}$. Since every \Prop{} allocation is also \PropX{}, the same instance rules out \Prop{} as well.
\end{proof}
\ThmOfflinePropOne*
\begin{proof}
We use the rounding argument of Lenstra, Shmoys, and
Tardos~\cite{lenstra1990approximation} for chores and the modification by Bez{\'a}kov{\'a} and Dani~\cite{bezakova2005allocating} for goods.
Machines with $\PS_i=0$ are not eligible for any job and can be ignored. For every job $j$ and every eligible machine $i$, define the scaled processing time $q_{ij} \coloneqq p_j/\PS_i$.

In the chore setting, consider the assignment LP with machine bounds normalized to one:
\begin{align*}
\sum_{i\in\M(j)} x_{ij} &= 1 && \text{for all } j,\\
\sum_{j:i\in\M(j)} q_{ij}x_{ij} &\le 1 && \text{for all } i,\\
x_{ij} &\ge 0 && \text{for all } i\in\M(j).
\end{align*}
This LP is feasible, since the fractional allocation $x_{ij}=1/|\M(j)|$ gives,
for every machine $i$,
\[
\sum_{j:i\in\M(j)} q_{ij}x_{ij}
= \frac{1}{\PS_i}\sum_{j:i\in\M(j)} \frac{p_j}{|\M(j)|}
=1.
\]

The rounding theorem of Lenstra, Shmoys, and Tardos rounds a basic solution of this LP in polynomial time to an integral assignment $\sigma$ such that, for every machine $i$,
\[
\sum_{j:\sigma(j)=i} q_{ij}\le 1+\max_{j:\sigma(j)=i} q_{ij}.
\]
Multiplying by $\PS_i$ yields
\[
L_i(\sigma)=\sum_{j:\sigma(j)=i}p_j
\le \PS_i + p_{\max}(i, \sigma).
\]
Equivalently, $L_i(\sigma) - p_{\max}(i, \sigma) \le \PS_i$ for every machine
$i$, so $\sigma$ satisfies \PropOne{}. The rounding algorithm runs in polynomial
time.

In the goods setting, the proportionality constraints in the assignment LP change direction and become
\begin{align*}
    \sum_{j:i\in\M(j)} q_{ij}x_{ij} &\ge 1 && \text{for all } i.
\end{align*}
As in the chore setting, the fractional allocation $x_{ij}=1/|\M(j)|$ is a feasible solution.
The rounding theorem by Lenstra, Shmoys and Tardos with the modification of Bez{\'a}kov{\'a} and Dani rounds a basic solution in polynomial time to an integral assignment $\sigma$ such that every agent $i$ receives all goods that are partially allocated to them in the basic solution except for at most one such good $j$.
It follows that
\[
    \sum_{j':\sigma(j')=i} q_{ij'}\ge 1-q_{ij} \ge 1-\max_{j':\sigma(j')\neq i} q_{ij'}
\]
for every agent $i$ which implies that $\sigma$ satisfies \PropOne{}. The algorithm runs in polynomial time.
\end{proof}

\section{Omitted Proofs from \cref{sec:chore_online} (Online Chore Setting)}
\label{app:chore-online}

\subsection{Simple Greedy strategy fails}
\label{app:greedy-fails}

While restricted-assignment techniques transfer surprisingly well in the offline case, the online setting exhibits a sharp separation from classical restricted assignment.
This separation is already visible for natural greedy rules.
In online makespan minimization for restricted assignment, assigning each job to a least-loaded eligible machine gives an $O(\log m)$-competitive algorithm.
A natural fairness analogue is to assign each arriving job to an eligible machine that would have maximum slack towards violating \PropOne{} after allocating the new job. This \PropOne-slack is given by
\[
    \PS_i - L_i + p_{\max}(i, \sigma).
\]
The following lemma shows that this rule fails dramatically:
even for unit processing times, it cannot achieve an $o(m)$-approximation.

\begin{lemma}\label{lem:greedy-fails}
    Let $\alpha \leq \frac{2}{15}m$. The algorithm that assigns each arriving %
    job to an eligible machine with maximum hypothetical $\PropOne{}$-slack, computed after hypothetically allocating the arriving job to the respective machine, does not achieve $\alpha$-\PropOne{}, even for unit-size jobs. %
\end{lemma}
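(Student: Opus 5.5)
The plan is to exploit a simple fact: for unit-size jobs, the hypothetical \PropOne-slack collapses to the ordinary $1$-slack. If a unit job $j$ is hypothetically placed on an eligible machine $i$, then $p_{\max}$ becomes $1$. The hypothetical slack is therefore $\PS_i + 1/k_j - (L_i+1) + 1 = \PS_i + 1/k_j - L_i$, where $L_i$ is the load before $j$. Since $1/k_j$ is the same for every eligible machine, the rule reduces to the following: assign $j$ to an eligible machine maximizing $s(i)\coloneqq \PS_i - L_i$, with ties broken adversarially. (If arbitrary tie-breaking is not allowed, we will perturb by $\varepsilon$-many dummy jobs.) So the task is to build a unit-size instance on which this $1$-slack greedy lets some machine reach $L_i - 1 > \alpha\,\PS_i$ for every $\alpha\le \tfrac{2}{15}m$.

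The key bookkeeping tool is a conservation law. A job with eligibility set $\M(j)$ adds exactly $1$ to $\sum_{i\in\M(j)}\PS_i$ and exactly $1$ to the total load of $\M(j)$. Hence it leaves $\sum_{i\in\M(j)} s(i)$ unchanged and leaves all slacks outside $\M(j)$ untouched. Negative slack mass can therefore be created on a set $S$ only by jobs whose eligibility set is larger than $S$ but which are assigned into $S$. Such jobs push positive slack $1/k_j$ onto many machines and put almost a full unit of negative slack on one machine. Concentrating this mass is then done by narrow jobs: within a set all of whose members have negative slack, greedy is forced to hand load to a machine whose slack is already negative.

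Concretely, I will use three phases.

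\textbf{Phase 1.} Release about $cm$ jobs eligible for all $m$ machines, for a suitable constant $c$. Greedy spreads them so that a set $S$ of roughly $cm$ machines each holds one job. Each of these machines has slack about $-1 + c$ and $\PS_i \approx c$.

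\textbf{Phase 2.} Release a peeling sequence restricted to $S$, in the style of the lower bound in \Cref{thm:good_setting_online_lb}. The first job is eligible for all of $S$; each subsequent job is eligible for the still-``surviving'' machines, where the machine that just received a job is the one kept, so the chosen target keeps receiving jobs. Because all slacks in $S$ are negative and greedy picks the maximum, I need to arrange the eligibility sets so that the target always has maximal slack among the eligible machines. The natural way to do this is to drop from each eligibility set the machines whose slack has risen above the target's, i.e., those that received shares without receiving load.

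\textbf{Phase 3 (accounting).} Along the way, the target's $\PS$ grows only by $\sum 1/k_j$, which is $O(1)$ when the eligibility sets shrink slowly, or at most harmonic-type. Its load, however, grows by one with each of the $\Theta(m)$ assigned jobs. The target's slack can keep decreasing as long as it stays at least the slack of the other eligible machines. Those machines were each driven down to about $-1+c$ in Phase~1 and only gain fractional shares afterwards, so I will show that the target's slack can be pushed down to roughly $-\Theta(m)$ while its $\PS$ stays $O(1)$.

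The main obstacle is making Phase~2 rigorous. I must guarantee, at every step, that the target machine is a maximizer of $s(i)$ among the eligible machines, while the target receives linearly many jobs and accumulates only constant proportional share. The difficulty is that every job the target receives raises the slack of all other eligible machines by $1/k_j$. To handle this, I will first try a layered schedule: keep the target's eligibility partners as a large pool of Phase-1 machines whose slacks are each $\approx -1+c$. I will charge each target job against the conserved slack sum of the pool, using the conservation law to bound how many jobs the target can absorb before some partner's slack exceeds the target's. Optimizing $c$ in this count should give the violation $L_i - 1 > \tfrac{2}{15}m\cdot\PS_i$.
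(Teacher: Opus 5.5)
Your reduction is correct: for unit jobs the hypothetical \PropOne-slack equals $\PS_i-L_i+1/k_j$, so the rule maximizes $s(i)=\PS_i-L_i$. Your conservation law also holds. The gap is in Phases~2--3, and it is not a matter of rigor: the plan cannot produce a super-constant violation.

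Whenever the target $t$ receives a job $j$ with $k_j\ge 2$, greedy forces $s(t)\ge s(i)$ for all $i\in\M(j)$. So afterwards $s(t)>\max_{i\in\M(j)\setminus\{t\}}s(i)-1$. Jobs with $k_j=1$ leave $s(t)$ unchanged, and jobs that $t$ does not receive only raise it.

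This kills the peeling sequence at its second job. Once $t$ takes the first job, every other machine of $S$ has slack larger by exactly $1$. Hence $t$ can be chosen again only as the sole eligible machine, which adds $1$ to both $L_t$ and $\PS_t$.

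More generally, suppose the partners of $t$ are Phase-1 machines, whose slack is $c-1$ and only rises. Then $s(t)>c-2$ forever, i.e.\ $L_t-1<\PS_t+1-c$. Because your Phase-1 jobs are eligible for all $m$ machines, every machine has $\PS_t\ge c$. Hence $(L_t-1)/\PS_t<1/c=O(1)$, however the layering is arranged. Reaching load $\Theta(m)$ with $\PS_t=O(1)$ would need $\Omega(m)$ co-eligible partners already at slack $-\Omega(m)$. That requires a separate chip-firing-type construction, which you do not supply, and the constant $\tfrac{2}{15}$ would not come out of ``optimizing $c$''.

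The missing idea is to make the target's share tiny instead of its load large. In the paper, Phase~1 releases $\tfrac25 m$ jobs eligible only for the first $\tfrac45 m$ machines. So the last $\tfrac15 m$ machines keep $\PS=0$, while the $\tfrac25 m$ loaded machines have load $1$ and share below $1$ throughout.

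These loaded machines act as blockers. For each of the last $\tfrac15 m$ machines $i^*$, a job eligible for the blockers and $i^*$ must go to $i^*$: its hypothetical \PropOne-slack is $<1-2+1=0$ on a blocker and $>0$ on $i^*$. So $i^*$ gets load $1$ at a share cost of only $1/(\tfrac25 m+1)$.

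A final job eligible only for the last $\tfrac15 m$ machines gives one of them load $2$ with $\PS_i<\tfrac{15}{2m}$. Hence $L_i-p_{\max}(i,\sigma)=1>\alpha\PS_i$ for every $\alpha\le\tfrac{2}{15}m$. All greedy choices there are strict or symmetric, so no tie-breaking assumption is needed, and neither is your unspecified $\varepsilon$-perturbation.
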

\begin{proof}
    Let $m$ be any number of machines that is divisible by $5$.
    Consider the following instance with unit processing times on $m$ machines. First, let $\frac{2}{5} m$ jobs arrive for which only the first $\frac{4}{5} m$ machines are eligible.
    By the greedy criterion, each of these jobs will be assigned to a different machine; assume w.l.o.g.\ that they are assigned to the first $\frac{2}{5}m$ machines. At this stage, each of the first $\frac{4}{5}m$ machines has proportional share $\frac{1}{2}$.

    Next, for each machine $i^*$ among the last $\frac{1}{5} m$ machines, release one job whose eligible machines are the first $\frac{2}{5} m$ machines and $i^*$. These jobs increase the proportional shares of the first $\frac{2}{5} m$ machines by less than $\frac{1}{2}$ in total. Allocating such job to a machine $i$ among the first $\frac{2}{5}m$ machines would give
    \begin{align*}
        \PS_i - L_i + p_{\max}(i, \sigma) &< 1 - 2 + 1 %
        = 0.
    \end{align*}
    In contrast, assigning it to the corresponding machine $i^*$ gives
    \begin{align*}
        \PS_{i^*} - L_{i^*} + p_{\max}(i^*, \sigma) &= \frac{1}{\frac{2}{5} m + 1} - 1 + 1 %
        > 0.
    \end{align*}
    By the greedy criterion, each of these jobs is thus assigned to the respective machine $i^*$.

    Finally, release one job for which only the last $\frac{1}{5}m$ machines are eligible.
    The machine receiving this job has two jobs in total, and after removing its largest job its remaining load is $1$. Its proportional share is
    \begin{align*}
        \frac{1}{\frac{2}{5} m + 1} + \frac{1}{\frac{1}{5} m} &< \frac{15}{2m}.
    \end{align*}
    Since $\alpha \leq \frac{2}{15}m$, we have $\alpha \cdot \PS_i < 1$, and therefore the allocation violates
    $\alpha$-$\PropOne$.
\end{proof}

\subsection{Upper bound online chores}
\label{app:general-upper-bound-online-chores}

\ThmChoreOnlineGeneral*

Before proving \cref{thm:general_upper_bound}, we describe again our algorithm, in particular how we handle jobs with at most $\alpha$ eligible machines.

\paragraph{The Algorithm.} We handle every new job $j^*$ as follows:

If $k_{j^*}\le \alpha$, we assign $j^*$ to an arbitrary eligible machine. Such jobs are ignored by the algorithm in all future computations; equivalently, the algorithm maintains a virtual instance consisting only of jobs with more than $\alpha$ eligible machines. 

If $k_{j^*}>\alpha$, we first compute the hypothetical $\alpha$-\PropOne{}-slack of every eligible machine after adding $\alpha p_{j^*}/k_{j^*}$ to its scaled proportional share. Let $i$ be an eligible machine with maximum updated $\alpha$-\PropOne{}-slack. We distinguish two cases.

\begin{enumerate}
    \item \textbf{Case 1 (Fit):} If the largest updated slack satisfies $s_{\alpha\PropOne}(i) \geq 2 \cdot p_{j^*}$, we assign~$j^*$ to machine $i$.
    We say that $j^*$ \emph{fits} on $i$.

    \item \textbf{Case 2 (Shadow):} Otherwise, consider the eligible machines with $p_{\max}(i, \sigma) \leq \alpha p_{j^*}/k_{j^*}$.
    Among these machines, assign $j^*$ to one with maximum updated $\alpha$-\PropOne{}-slack.
    We say that~$j^*$ \emph{shadows} the previous largest job on that machine.
\end{enumerate}

A job $j^*$ with $k_{j^*} \le \alpha$ is harmless: on every eligible machine, its contribution to the scaled proportional share is $\alpha p_{j^*}/k_{j^*}\ge p_{j^*}$, so the receiving machine's $\alpha$-\PropOne{}-slack can never be smaller without it.
Such jobs could however interfere with our analysis, for example by activating the penalty term of the receiving machine.
Just like our algorithm, we therefore ignore the impact of these jobs on the $\alpha$-\PropOne-slacks and the values $p_{\max}(i)$ of all machines until we reintroduce these jobs to formally show their harmlessness.
In particular, we abuse notation and denote by $s_{\alpha\PropOne}(i)$ and $p_{\max}(i)$ the values these terms would have if the jobs with at most $\alpha$ eligible machines did not exist.

Most of the proof of \cref{thm:general_upper_bound} is done in the following lemma:

\begin{lemma}
\label{lem:general-shadow-exists}
Let $\alpha=8\sqrt m$, and consider an arriving job $j^*$ with $k_{j^*}>\alpha$. If no eligible machine has updated $\alpha$-\PropOne{}-slack $s_{\alpha\PropOne}(i) \geq 2p_{j^*}$, then more than half of the machines $i\in\M(j^*)$ satisfy $p_{\max}(i)\le \alpha p_{j^*}/k_{j^*}$.
\end{lemma}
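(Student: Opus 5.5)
We prove \Cref{lem:general-shadow-exists} by induction over the arriving jobs with more than $\alpha$ eligible machines, following the sketch of \Cref{thm:general_upper_bound}. Fix $j^*$ and assume the statement of the lemma holds for every earlier job $j$ with $k_j>\alpha$; this is exactly what guarantees that every earlier non-fitting job could be shadowed. We use the job-dependent potential
\[
\Phi^{j^*}_i \coloneqq \int_{s_{\alpha\PropOne}(i)/p_{j^*}}^{\infty} (2\sqrt m)^{-\rho(t)} + \Bigl(\frac{4\sqrt m}{k_{j^*}}-\frac{1}{2\sqrt m}\Bigr)\mathds{1}_{t\in(2,4)}\,\mathds{1}_{p_{\max}(i)>\alpha p_{j^*}/k_{j^*}}\,dt ,
\]
with $\rho$ as in the unit-size analysis, all quantities computed in the virtual instance, and slacks taken before $j^*$ (the updated slacks differ by $\alpha p_{j^*}/k_{j^*}$, which we carry along). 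Initially all slacks are $0$ and no penalty is active, so $\Phi^{j^*}_0 = m\cdot\frac{(2\sqrt m)^{-1}}{1-(2\sqrt m)^{-1}}$, roughly $\sqrt m/2$.

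\textbf{Step 1 (bounded potential implies the claim).} Suppose at least half of $\M(j^*)$, i.e.\ at least $k_{j^*}/2$ machines, have $p_{\max}(i)>\alpha p_{j^*}/k_{j^*}$. Since no machine fits, every eligible machine has updated slack below $2p_{j^*}$, so its slack (normalized by $p_{j^*}$) lies below $2$, and its term contains the full integral of the penalty over $(2,4)$, namely $2(\frac{4\sqrt m}{k_{j^*}}-\frac1{2\sqrt m})$. Summing over $k_{j^*}/2$ machines gives at least $4\sqrt m - k_{j^*}/(2\sqrt m)\ge 4\sqrt m-\sqrt m/2$, which exceeds $\Phi^{j^*}_0$. (Negative slacks cannot occur by the nonnegativity argument from the sketch, and the main terms are nonnegative.) Hence it suffices to show $\Phi^{j^*}\le\Phi^{j^*}_0$ just before $j^*$.

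\textbf{Step 2 (monotonicity under earlier jobs).} Consider an earlier job $j$ with $k_j>\alpha$ assigned to $i$. The other eligible machines gain slack $\alpha p_j/k_j$, which only decreases their terms; $i$'s term may increase by (a) a drop in its slack, or (b) activation of the penalty. For (a), the slack drops only in the fit case, where $i$ maximizes updated slack $s\ge 2p_j$ and the loss is at most $p_j\le s/2$. As in \Cref{lem:chore_online_unit_size_potential_does_not_increase_in_greedy}, the other $k_j-1$ machines have slack at most $s$, so each decrease is at least $\frac{\alpha p_j}{k_j}(2\sqrt m)^{-\rho}$ evaluated at $i$'s level. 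The increase at $i$ spans at most one factor-$2\sqrt m$ level step over a length $p_j/p_{j^*}$. Because $(k_j-1)\alpha/k_j\ge \alpha/2=4\sqrt m$ dominates the $2\sqrt m$ level jump, this pays for it, provided we handle the granularity of $\rho$ at scale $p_{j^*}$ (the integer levels are in units of $p_{j^*}$, not $p_j$; if $p_j\gg p_{j^*}$, the slack lies far out where the terms are tiny, which we must bound carefully).

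\textbf{Step 3 (penalty activation).} Activation matters only if $p_j>\alpha p_{j^*}/k_{j^*}$ and $i$'s slack is below $4p_{j^*}$, and it costs at most $2(\frac{4\sqrt m}{k_{j^*}}-\frac{1}{2\sqrt m})$. In the fit case all other eligible machines have slack at most that of $i$, hence below $4p_{j^*}$; in the shadow case the induction hypothesis for $j$ ensures more than $k_j/2$ eligible machines satisfy the shadow condition, and we will argue they also have small slack. Each such machine gains $\alpha p_j/k_j> \alpha^2p_{j^*}/(k_jk_{j^*})$ slack in a region where the integrand is at least $(2\sqrt m)^{-4}$ plus the penalty density. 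We then sum over the roughly $k_j/2$ machines and compare against the activation cost using $\alpha=8\sqrt m$.

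I expect Step 3 to be the main obstacle. The decrease per machine may be in the penalty region only if those machines already have active penalties, so the bookkeeping between the main-term density $(2\sqrt m)^{-\rho}$ and the penalty density $\frac{4\sqrt m}{k_{j^*}}$ is delicate. In particular, the constant $-\frac{1}{2\sqrt m}$ in the penalty is presumably tuned so that the machines' gain $\frac{\alpha p_j}{k_j}\cdot\frac{k_j}{2}\ge 4\sqrt m\,p_{j^*}\cdot\frac{\alpha}{k_{j^*}}$ matches exactly. I would first try to show the gain of at least $\frac{\alpha}{2k_{j^*}}\alpha p_{j^*}$ slack mass, times the penalty density, exceeds the cost, treating the case where the receiving machines are outside $(2,4)$ separately via the main term.
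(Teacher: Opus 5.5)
\textbf{There is a genuine gap: the choice of $\rho$.} Your architecture matches the paper's: induction over jobs, a job-dependent potential with a main term plus a penalty on $(2,4)$, the counting in Step~1, and the fit/shadow split for monotonicity. The gap sits exactly where you sensed trouble. With $\rho$ taken from the unit-size analysis (levels of width $p_{j^*}$), your two compensation steps are false, not merely delicate.

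In Step~2, the receiver's drop is not confined to one level; the fit condition only gives new slack at least $\widehat{s}(i)/2$. Concretely, consider a valid run in which $m$ jobs of size $P$, eligible for all machines, are placed one per machine. Every slack is then $8\sqrt m\,P$. A further such job fits; its load lowers the receiver's slack by $P$, while its share raises every slack by only $8P/\sqrt m$. For a later $j^*$ with $p_{j^*}=\varepsilon\ll P$, the receiver falls through roughly $P/\varepsilon$ unit levels. The other machines' gains stay within the top $8/\sqrt m$ fraction of that range. So the increase beats the total decrease by a factor exponential in $P/\varepsilon$.

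In Step~3, the machines that must pay sit at normalized slack below $4$, where your main-term density can be as small as $(2\sqrt m)^{-4}$. Take $k_j\approx k_{j^*}\approx m$ and $p_j$ just above $\alpha p_{j^*}/k_{j^*}$. Each payer then gains only about $64/m$, so the total decrease is $O(m^{-2})$ against an activation cost of order $m^{-1/2}$. The payers' penalty density cannot rescue this, because the shadow condition $p_{\max}(i')\le\alpha p_j/k_j$ says nothing about whether their own penalty is active.

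\textbf{The fix is a main term that is flat where the penalty lives and dyadic beyond it:} $\rho(t)=1$ for $t\le 8$ and $\rho(t)=1+\lceil\log_2(t/8)\rceil$ for $t>8$.

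In the shadow case, take the at least $(k_j-1)/2$ shadow-eligible machines other than $i$. Each has slack before $j$ at most that of $i$ (the algorithm maximized slack among them), hence below $4p_{j^*}$. Flatness gives each of them density $(2\sqrt m)^{-1}$ over its whole gain, capped at $4p_{j^*}$. Each therefore decreases by at least $\min\{\frac{\alpha p_j}{k_jp_{j^*}},4\}\cdot\frac{1}{2\sqrt m}\ge\frac{4\alpha}{k_jk_{j^*}}$. Summed, this is at least $\alpha/k_{j^*}=8\sqrt m/k_{j^*}$, which pays for the whole penalty using the main term alone.

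In the fit case, the dyadic levels match the factor-$2$ loss, so the drop crosses at most one level boundary. With $\widehat\rho=\rho(\widehat{s}(i)/p_{j^*})$, the receiver's increase is at most $2\frac{p_j}{p_{j^*}}(2\sqrt m)^{-(\widehat\rho-1)}$. The share of $j$ lowers the terms of all $k_j$ eligible machines by at least $\alpha\frac{p_j}{p_{j^*}}(2\sqrt m)^{-\widehat\rho}$, and $\alpha\ge 4\sqrt m$ closes the inequality. This also requires the penalty region to count as at most one level up:
\begin{itemize}
\item Its total density is $4\sqrt m/k_{j^*}<1/2$. That is the only role of the $-\frac{1}{2\sqrt m}$; it is a normalization, not a fine balance.
\item A fit-case activation costs at most $\alpha/k_{j^*}\le p_j/p_{j^*}$, so it is absorbed into the same bound.
\end{itemize}

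\textbf{Step~1 then needs adjusting.} The new $\rho$ raises the initial value to $2\sqrt m\bigl(1+\frac{1}{1-1/\sqrt m}\bigr)\le 4\sqrt m+4$, so you must also count the main term. Each violating machine contributes at least $8\sqrt m/k_{j^*}+2/\sqrt m$. Over $k_{j^*}/2$ machines this totals at least $4\sqrt m+k_{j^*}/\sqrt m>4\sqrt m+8$.
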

\begin{proof}
    As jobs with at most $\alpha$ eligible machines do not influence the behavior of the algorithm, we may assume that every job has more than $\alpha$ eligible machines.
    In particular, since every job has at most $m$ eligible machines, we may assume that $\alpha < k_{j^*} \leq m$ which implies $m > 64$.

    We inductively prove the lemma by showing that the statement holds for $j^*$ if it held for all previous jobs.

    For the induction step, we use a potential function $\smash{\Phi^{j^*} = \sum_{i \in \M} \Phi^{j^*}_i}$ tailored to the job $j^*$.
    The potential term $\Phi^{j^*}_i$ of each machine $i$ consists of a \textquote{main term} and a \textquote{penalty term}:
    \begin{align*}
        \rho(t) &\coloneqq \begin{cases}
            1 & t \leq 8\\
            1 + \left\lceil \log_{2}\left(\frac{t}{8}\right) \right\rceil & t > 8
        \end{cases}\\
        \Phi^{j^*}_i &\coloneqq \int_{s_{\alpha\PropOne}(i) / p_{j^*}}^\infty (2 \sqrt{m})^{- \rho(t)} + \left(4 \cdot \frac{\sqrt{m}}{k_{j^*}} - (2 \sqrt{m})^{-1}\right) \cdot \mathds{1}_{t \in (2, 4)} \cdot \mathds{1}_{p_{\max}(i) > \frac{\alpha \cdot p_{j^*}}{k_{j^*}}} dt.
    \end{align*}
    For convenience, we repeat \Cref{fig:chore_online_general_size_plot_phi_j_star_i_appendix} which shows a plot of the term $\Phi^{j^*}_i$ as a function of the $\alpha$-$\PropOne$-slack $s_{\alpha \PropOne}(i)$.

    \begin{figure}[!t]
        \centering
        \begin{tikzpicture}[scale=0.8]
            \colorlet{blue}{blue!80!black}
            \colorlet{red}{red!85!black}
            \begin{groupplot}[
                group style={group size=1 by 2, vertical sep=0pt},
                width=13cm,
                axis line style={-{Latex}, thick},
                tick style={black, thick},
                major tick style={black, thick},
                minor tick style={black, thick},
                tick label style={black},
                xlabel = {$s_{\alpha \PropOne}(i)$},
                xlabel style={at={(axis description cs:1,.05)}, anchor=west},
                xmin=0, xmax=40,
                xtick={0,1,2,4,8,16,32},
                xticklabels={{},$p_{j^*}$,{},$4 p_{j^*}$,$8 p_{j^*}$,$16 p_{j^*}$,$32 p_{j^*}$},
                clip=false,
            ]
            \nextgroupplot[
                ymin=4/3, ymax=17/9,
                height=5/6*5cm,
                ytick={5/3,5/3+1/9},
                yticklabels={
                    $\frac{2 \cdot (\sqrt{m})^{-1}}{1 - (\sqrt{m})^{-1}} + \frac{8 \sqrt{m}}{k_{j^*}}$,
                    {}},
                ylabel = {$\Phi^{j^*}_i$},
                ylabel style={at={(axis description cs:0.11,0.98)}, anchor=south east, rotate=-90},
                every axis plot/.append style={very thick},
                axis x line=none,
                axis y line=left,
            ]
            \addplot [
                color=red,
                ]
                coordinates {
                (0,5/3+1/9)(2,5/3)(2.5 - 1.5*0.035,4/3 + 0.035)
                };
                \addlegendentry{$p_{\text{max}}(i) > \alpha \cdot p_{j^*} / k_{j^*}$}
            \addplot [color=blue,draw=none] %
                coordinates {(0,5/3+1/9)(2,5/3)(2.5 - 1.5*0.015,4/3 + 0.015)};
                \addlegendentry{$p_{\text{max}}(i) \leq \alpha \cdot p_{j^*} / k_{j^*}$}
            \nextgroupplot[
                ymin=0, ymax=2/3,
                height=5cm,
                ytick={0,1/9,1/3,5/9},
                yticklabels={
                    0,
                    $\frac{2 \cdot (\sqrt{m})^{-2}}{1 - (\sqrt{m})^{-1}}$,
                    $\frac{2 \cdot (\sqrt{m})^{-1}}{1 - (\sqrt{m})^{-1}}$,
                    $\frac{2}{\sqrt{m}} \cdot \left(1 + \frac{1}{1 - \sqrt{m}^{-1}}\right)$,},
                every axis plot/.append style={very thick},
                axis x line=bottom,
                axis y line=left,
                y axis line style={-},
            ]
            \draw[white,ultra thick,fill=white] (axis cs:-1,2/3 - 0.015) rectangle (axis cs:1,2/3 + 0.035);
            \draw[black,very thick] (axis cs:-.75,2/3 + 0.03) -- ++(axis cs:1.5,0.01); %
            \draw[black,very thick] (axis cs:-.75,2/3 - 0.02) -- ++(axis cs:1.5,0.01); %
            \addplot [
                color=blue,
                ]
                coordinates {
                (0,5/9)(8,1/9)(16,1/27)(32,1/81)(40,1/81 - 1/486)
                };
            \addplot [
                color=red,
                ]
                coordinates {
                (3.5 - 1.5*0.015,2/3 - 0.015)(4,1/3)
                };
            \end{groupplot}
        \end{tikzpicture}

        \renewcommand{\thefigure}{\ref*{fig:chore_online_general_size_plot_phi_j_star_i}}%
        \captionsetup{list=no}%
        \caption[]{Plot of the potential term $\smash{\Phi^{j^*}_i}$ as a function of the $\alpha$-\PropOne-slack. Machines start on the blue graph and switch to the red graph when they receive a job larger than $\alpha \cdot p_{j^*} / k_{j^*}$.}
        \addtocounter{figure}{-1}
        \label{fig:chore_online_general_size_plot_phi_j_star_i_appendix}
    \end{figure}

    Similar to the proof of \cref{thm:chore_online_unit_size_greedy_ratio}, we show that $\Phi^{j^*}$ does not increase when the previous jobs are allocated and that the induction statement holds for $j^*$ as long as $\Phi^{j^*}$ has at most its initial value.

    To show that the induction statement holds as long as $\Phi^{j^*}$ has at most its initial value, we will use the penalty terms of the potential function that punish machines on which $j^*$ cannot shadow and whose $\alpha$-\PropOne-slack is too small (or barely large enough) for $j^*$ to fit.

    In order to prove that the potential does not increase when previous jobs are allocated, we have to show that every decrease in $\alpha$-\PropOne-slack and every activation of a penalty term is compensated for by the increase in proportional shares on the respective other eligible machines, using the induction hypothesis for previous jobs that went into the shadow case.
    Intuitively, this is achieved as follows:

    The $\alpha$-\PropOne-slack of a machine $i$ decreases only in the fit case where it decreases by at most a factor $2$. Since each segment where the potential function has constant slope spans at least a factor $2$, the increase in $\smash{\Phi^{j^*}_i}$ per decrease in $s_{\alpha\PropOne}(i)$ can become larger only by at most a factor $\smash{2\sqrt{m} = \frac{\alpha}{4}}$ (even if the penalty term is active).
    By the behavior of the algorithm, the other eligible machines have at most the same $\alpha$-\PropOne-slack as machine $i$ before the job is allocated.
    Therefore, one quarter of their collective potential decrease per increase in $\alpha$-\PropOne-slack is already sufficient to compensate for the increase in $\smash{\Phi^{j^*}_i}$.

    If a potential term $\smash{\Phi^{j^*}_i}$ increases because a job $j$ activates the penalty term, then $j$ has to be sufficiently large while the $\alpha$-\PropOne-slack of machine $i$ is at most $4 p_{j^*}$.
    To be precise, the processing time of $j$ has to be larger than $\alpha \cdot p_{j^*} / k_{j^*}$, which is proportional to the size of the penalty term.

    By the behavior of our algorithm (both in the fit case and in the shadow case) and the induction assumption applied to $j$ (only in the shadow case), at least half of the eligible machines also have $\alpha$-\PropOne-slacks smaller than $4 p_{j^*}$.
    We will show that these machines provide enough potential decrease per increase in $\alpha$-\PropOne-slack to pay for the full penalty term of machine $i$ using the proportional share they receive from~$j$.

    We first bound the initial value of $\smash{\Phi^{j^*}}$.
    The value of $\smash{\Phi^{j^*}_i}$ for any machine $i$ before the first job arrives is
    \begin{align*}
        \Phi^{j^*}_i &= \int_{0}^\infty (2 \sqrt{m})^{- \rho(t)} dt\\
        &= \int_{0}^8 (2 \sqrt{m})^{- \rho(t)} dt + \sum_{l = 0}^\infty \int_{8 \cdot 2^l}^{8 \cdot 2^{l+1}} (2 \sqrt{m})^{- \rho(t)} dt\\
        &= \frac{4}{\sqrt{m}} + \sum_{l = 0}^\infty 8 \cdot 2^l \cdot (2 \sqrt{m})^{- (l + 2)}\\
        &= \frac{2}{\sqrt{m}} \cdot \left(2 + \sum_{l = 0}^\infty \sqrt{m}^{- (l + 1)}\right)\\
        &= \frac{2}{\sqrt{m}} \cdot \left(1 + \frac{1}{1 - \sqrt{m}^{-1}}\right).
    \end{align*}
    Summing over all $m$ machines, using $m \geq 4$ we can bound the total potential in the beginning as follows:
    \begin{align*}
        \Phi^{j^*} &= 2 \cdot \sqrt{m} \cdot \left(1 + \frac{1}{1 - \sqrt{m}^{-1}}\right)\\
        &\leq 2 \cdot \sqrt{m} \cdot \left(2 + 2 \cdot \sqrt{m}^{-1}\right)\\
        &= 4 \sqrt{m} + 4.
    \end{align*}
    Next, we lower bound the value of $\Phi^{j^*}$ in any configuration in which at least $\frac{k_{j^*}}{2}$ machines have a job that is larger than $\frac{\alpha \cdot p_{j^*}}{k_{j^*}}$ while their $\alpha$-\PropOne-slack is less than $2 \cdot p_{j^*}$.
    The potential term of every such machine $i$ includes the full penalty term and is therefore at least
    \begin{align*}
        \Phi^{j^*}_i &\geq \int_{2}^4 4 \cdot \frac{\sqrt{m}}{k_{j^*}} dt + \int_{4}^\infty (2 \sqrt{m})^{- \rho(t)} dt\\
        &\geq 8 \cdot \frac{\sqrt{m}}{k_{j^*}} + \int_{4}^8 (2 \sqrt{m})^{- \rho(t)} dt\\
        &= 8 \cdot \frac{\sqrt{m}}{k_{j^*}} + \frac{2}{\sqrt{m}}.
    \end{align*}
    Summing over the $\frac{k_{j^*}}{2}$ machines, we get
    \begin{align*}
        \Phi^{j^*} &\geq \frac{1}{2} \cdot k_{j^*} \cdot \left(8 \cdot \frac{\sqrt{m}}{k_{j^*}} + \frac{2}{\sqrt{m}}\right)\\
        &\geq 4 \sqrt{m} + \frac{k_{j^*}}{\sqrt{m}}\\
        &> 4 \sqrt{m} + 4
    \end{align*}
    where we used $k_{j^*} > \alpha \geq 4 \sqrt{m}$.
    This shows that the lemma statement holds in every configuration in which $\Phi^{j^*}$ has at most its initial value.

    It remains to show that $\Phi^{j^*}$ does not increase when a previous job is allocated.
    To this end, let $j$ be a previous job and let $i$ be the machine $j$ is allocated to.

    For any eligible machine $i' \in \M(j)$,
    we denote by
    \begin{itemize}
        \item $s_{\text{old}}(i')$ its $\alpha$-\PropOne-slack before $j$ arrives, by
        \item $\widehat{s}(i')$ its $\alpha$-\PropOne-slack including the proportional share from job $j$ but not the additional load and by
        \item $s_{\text{new}}(i')$ its $\alpha$-\PropOne-slack including the proportional share from job $j$ as well as the additional load if $j$ is allocated to $i'$.
    \end{itemize}
    In particular, we have
    \begin{align*}
        \widehat{s}(i') &= s_{\text{old}}(i') + \frac{\alpha \cdot p_j}{k_j}.
    \end{align*}
    Similarly, we denote by $\smash{(\Phi^{j^*}_{i'})_{\text{old}}}$, $\smash{\widehat{\Phi^{j^*}_{i'}}}$ and $\smash{(\Phi^{j^*}_{i'})_{\text{new}}}$ the potential term of machine $i'$ computed with the respective $\alpha$-\PropOne-slacks where $p_{\max}(i')$ is updated together with the load.
    Moreover, let $(\Phi^{j^*})_{\text{old}}$, $\widehat{\Phi^{j^*}}$ and $(\Phi^{j^*})_{\text{new}}$ be the sums of the respective potential terms.

    By the induction assumption, either $j$ fit on $i$ and we have $2 \cdot{} p_j \leq \widehat{s}(i)$ or $j$ shadowed on $i$ and we have $p_{\max}(i') \leq \frac{\alpha \cdot p_j}{k_j}$ for more than half of the machines eligible for $j$.
    We will treat these two cases separately.
    \paragraph{Case 1 (Fit): $2 \cdot{} p_j \leq \widehat{s}(i)$.}~\\
    Let
    \begin{align*}
        \widehat{\rho} &\coloneqq \rho\left(\frac{\widehat{s}(i)}{p_{j^*}}\right).
    \end{align*}
    Then since $\rho$ is nondecreasing and the algorithm assigns $j$ to an eligible machine with maximum $\widehat{s}(i)$, we have
    \begin{align*}
        \widehat{\rho} &\geq \rho\left(\frac{\widehat{s}(i')}{p_{j^*}}\right)
    \end{align*}
    for any eligible machine $i'$.

    Now the decrease in potential caused by the proportional share from $j$ on any machine~$i' \in \M(j)$ can be lower bounded as follows:
    \begin{align*}
        (\Phi^{j^*}_{i'})_{\text{old}} - \widehat{\Phi^{j^*}_{i'}} &\geq \int_{s_{\text{old}}(i') / p_{j^*}}^{\widehat{s}(i') / p_{j^*}} (2\sqrt{m})^{-\rho(t)} dt\\
        &\geq \int_{s_{\text{old}}(i') / p_{j^*}}^{\widehat{s}(i') / p_{j^*}} (2\sqrt{m})^{-\widehat{\rho}} dt\\
        &= \frac{\alpha \cdot p_j}{k_j \cdot p_{j^*}} \cdot (2\sqrt{m})^{-\widehat{\rho}}.
    \end{align*}
    Summed over all $k_j$ eligible machines, the potential decreases by at least
    \begin{align*}
        (\Phi^{j^*})_{\text{old}} - \widehat{\Phi^{j^*}} &\geq \frac{\alpha \cdot p_j}{p_{j^*}} \cdot (2\sqrt{m})^{-\widehat{\rho}}.
    \end{align*}
    On the other hand, by the inequality given by the fit case, we have $s_{\text{new}}(i) \geq \widehat{s}(i)/2$. Using this inequality and the fact that the slope of $\Phi^{j^*}_i$ on the interval $(0, 4)$ is bounded by~$-\frac{4 \sqrt{m}}{k_{j^*}}$, the increase in potential on machine $i$ is bounded by
    \begin{align*}
        (\Phi^{j^*}_{i})_{\text{new}} - \widehat{\Phi^{j^*}_{i}} &\leq \begin{cases}
            \displaystyle \int_{s_{\text{new}}(i) / p_{j^*}}^{\widehat{s}(i) / p_{j^*}} (2 \sqrt{m})^{- \rho(t)} dt + \frac{8 \sqrt{m}}{k_{j^*}} & p_j \geq \alpha \cdot \frac{p_{j^*}}{k_{j^*}} \wedge \widehat{s}(i) \leq 8 \cdot p_{j^*}\\[2ex]
            \displaystyle \int_{s_{\text{new}}(i) / p_{j^*}}^{\widehat{s}(i) / p_{j^*}} \frac{4 \sqrt{m}}{k_{j^*}} dt & p_j < \alpha \cdot \frac{p_{j^*}}{k_{j^*}} \wedge \widehat{s}(i) \leq 8 \cdot p_{j^*}\\[2ex]
            \displaystyle \int_{s_{\text{new}}(i) / p_{j^*}}^{\widehat{s}(i) / p_{j^*}} (2 \sqrt{m})^{- \rho(t)} dt & \widehat{s}(i) > 8 \cdot p_{j^*}
        \end{cases}\\[1ex]
        &\leq \begin{cases}
            \displaystyle \int_{s_{\text{new}}(i) / p_{j^*}}^{\widehat{s}(i) / p_{j^*}} (2 \sqrt{m})^{- 1} dt + \frac{p_j}{p_{j^*}} & p_j \geq \alpha \cdot \frac{p_{j^*}}{k_{j^*}} \wedge \widehat{s}(i) \leq 8 \cdot p_{j^*}\\[2ex]
            \displaystyle \frac{p_j}{p_{j^*}} \cdot \frac{4 \sqrt{m}}{k_{j^*}} & p_j < \alpha \cdot \frac{p_{j^*}}{k_{j^*}} \wedge \widehat{s}(i) \leq 8 \cdot p_{j^*}\\[2ex]
            \displaystyle \int_{s_{\text{new}}(i) / p_{j^*}}^{\widehat{s}(i) / p_{j^*}} (2 \sqrt{m})^{- \rho\left(\frac{s_{\text{new}}(i)}{p_{j^*}}\right)} dt & \widehat{s}(i) > 8 \cdot p_{j^*}
        \end{cases}\\[1ex]
        &\leq \begin{cases}
            \displaystyle \int_{s_{\text{new}}(i) / p_{j^*}}^{\widehat{s}(i) / p_{j^*}} (2 \sqrt{m})^{- 1} dt + \frac{p_j}{p_{j^*}} & \widehat{s}(i) \leq 8 \cdot p_{j^*}\\[2ex]
            \displaystyle \int_{s_{\text{new}}(i) / p_{j^*}}^{\widehat{s}(i) / p_{j^*}} (2 \sqrt{m})^{- \rho\left(\frac{\widehat{s}(i)}{2 \cdot p_{j^*}}\right)} dt & \widehat{s}(i) > 8 \cdot p_{j^*}
        \end{cases}\\[1ex]
        &\leq \begin{cases}
            \displaystyle \int_{s_{\text{new}}(i) / p_{j^*}}^{\widehat{s}(i) / p_{j^*}} (2 \sqrt{m})^{0} dt + \frac{p_j}{p_{j^*}} & \widehat{s}(i) \leq 8 \cdot p_{j^*}\\[2ex]
            \displaystyle \int_{s_{\text{new}}(i) / p_{j^*}}^{\widehat{s}(i) / p_{j^*}} (2 \sqrt{m})^{- (\widehat{\rho} - 1)} dt & \widehat{s}(i) > 8 \cdot p_{j^*}
        \end{cases}\\[1ex]
        &\leq 2 \cdot \frac{p_j}{p_{j^*}} \cdot (2 \sqrt{m})^{- (\widehat{\rho} - 1)}.
    \end{align*}
    To see the last inequality, observe that $\widehat{s}(i) \leq 8 \cdot p_{j^*}$ holds if and only if $\widehat{\rho} = 1$.

    In total, we get
    \begin{align*}
        (\Phi^{j^*})_{\text{new}} - (\Phi^{j^*})_{\text{old}} &\leq 2 \cdot \frac{p_j}{p_{j^*}} \cdot (2 \sqrt{m})^{- (\widehat{\rho} - 1)} - \frac{\alpha \cdot p_j}{p_{j^*}} \cdot (2\sqrt{m})^{-\widehat{\rho}}\\
        &= (4 \sqrt{m} - \alpha) \cdot \frac{p_j}{p_{j^*}} \cdot (2 \sqrt{m})^{- \widehat{\rho}}\\
        &\leq 0.
    \end{align*}
    This concludes case 1.
    \paragraph{Case 2 (Shadow): More than half of the eligible machines $i'$ (and in particular $i$) satisfy $p_{\max}(i') \leq \frac{\alpha \cdot p_j}{k_j}$.}~\\
    Recall that the $\alpha$-$\PropOne$-slack of machine $i$ does not decrease when job $j$ is allocated to machine~$i$.
    Therefore, $\Phi^{j^*}_i$ as well as $\Phi^{j^*}$ will not increase unless the penalty term was zero before allocating $j$ to $i$ and is positive afterwards.
    This bad case can only happen if $p_{\max}(i) < \frac{\alpha \cdot p_{j^*}}{k_{j^*}}$ before $j$ arrived, $p_j \geq \frac{\alpha \cdot p_{j^*}}{k_{j^*}}$ and $s_{\text{old}}(i) < 4 \cdot p_{j^*}$.
    We will therefore assume in the following that these inequalities hold.

    Since the algorithm chooses $i$ with maximum $\alpha$-\PropOne-slack among the eligible machines $i'$ for which $p_{\max}(i') \leq \frac{\alpha \cdot p_j}{k_j}$, the inequality $s_{\text{old}}(i) < 4 \cdot p_{j^*}$ implies that $s_{\text{old}}(i') < 4 \cdot p_{j^*}$ holds for all such machines.
    This implies that there are at least $\frac{k_j - 1}{2}$ eligible machines $i' \neq i$ with $s_{\text{old}}(i') < 4 \cdot p_{j^*}$.

    If $\frac{\alpha \cdot p_j}{k_j} \leq 4 \cdot p_{j^*}$, then $\widehat{s}(i') < 8 \cdot p_{j^*}$ for all such $i'$ and the proportional share from job $j$ decreases the potential of any such machine by at least
    \begin{align*}
        (\Phi^{j^*}_{i'})_{\text{old}} - \widehat{\Phi^{j^*}_{i'}} &\geq \int_{s_{\text{old}}(i') / p_{j^*}}^{\widehat{s}(i') / p_{j^*}} (2\sqrt{m})^{-\rho(t)} dt\\
        &= \int_{s_{\text{old}}(i') / p_{j^*}}^{\widehat{s}(i') / p_{j^*}} (2\sqrt{m})^{-1} dt\\
        &= \frac{\alpha \cdot p_j}{k_j \cdot p_{j^*}} \cdot (2 \sqrt{m})^{-1}\\
        &\geq \frac{4 \alpha}{k_j \cdot k_{j^*}}.
    \end{align*}
    Here, the last inequality uses $\alpha \geq 8 \sqrt{m}$ and $p_j \geq \frac{\alpha \cdot p_{j^*}}{k_{j^*}}$.

    Otherwise, we have $\frac{\alpha \cdot p_j}{k_j} > 4 \cdot p_{j^*}$. In this case, we obtain the above inequality as follows:
    \begin{align*}
        (\Phi^{j^*}_{i'})_{\text{old}} - \widehat{\Phi^{j^*}_{i'}} &\geq \int_{s_{\text{old}}(i') / p_{j^*}}^{\widehat{s}(i') / p_{j^*}} (2\sqrt{m})^{-\rho(t)} dt\\
        &\geq \int_{s_{\text{old}}(i') / p_{j^*}}^{s_{\text{old}}(i') / p_{j^*} + 4} (2\sqrt{m})^{-\rho(t)} dt\\
        &= \int_{s_{\text{old}}(i') / p_{j^*}}^{s_{\text{old}}(i') / p_{j^*} + 4} (2\sqrt{m})^{-1} dt\\
        &= \frac{2}{\sqrt{m}} \geq \frac{16}{\alpha} \geq \frac{16 \alpha}{k_j \cdot k_{j^*}} \geq \frac{4 \alpha}{k_j \cdot k_{j^*}}.
    \end{align*}
    In total, the potential decreases by at least
    \begin{align*}
        (\Phi^{j^*})_{\text{old}} - \widehat{\Phi^{j^*}} \geq \frac{k_j - 1}{2} \cdot \frac{4 \alpha}{k_j \cdot k_{j^*}} \geq \frac{\alpha}{k_{j^*}} \cdot \frac{k_j - 1}{k_j} \cdot 2 \geq \frac{\alpha}{k_{j^*}}.
    \end{align*}
    But the potential term for machine $i$ increases by at most the full penalty term which is bounded by
    \[
        (\Phi^{j^*}_{i})_{\text{new}} - \widehat{\Phi^{j^*}_{i}} \leq \frac{8 \sqrt{m}}{k_{j^*}} \leq \frac{\alpha}{k_{j^*}}.
    \]
    Thus, $j$ does not cause the potential to increase. This concludes case 2.

    We showed in all cases that the potential does not increase through any previous job $j$. As discussed above, this implies that the lemma statement holds for $j^*$ which completes the proof.
\end{proof}

\begin{proof}[Proof of \cref{thm:general_upper_bound}]
Choose $\alpha = 8\sqrt{m}$.
By \cref{lem:general-shadow-exists},
whenever the Fit case fails for a job with more than $\alpha$ eligible machines, there exists an eligible machine satisfying the
Shadow condition. Hence the algorithm is well-defined.

We first show that the $\alpha$-\PropOne-slacks computed from only jobs with more than $\alpha$ eligible machines stay nonnegative.
To this end, consider again the two cases in the algorithm.

In the Fit case, the updated slack of the chosen machine is at least $2p_j$
before allocating $j$. Allocating $j$ decreases its $\alpha$-\PropOne-slack by at
most $p_j$, so the slack remains nonnegative.

In the Shadow case, the chosen machine satisfies
\[
    p_{\max}(i) \leq \frac{\alpha p_j}{k_j}.
\]
After allocating $j$, the increase in $L_i-p_{\max}(i)$ is at most the previous
largest job $p_{\max}(i)$, while $\alpha\PS_i$ increases by
$\alpha p_j/k_j$. Thus the $\alpha$-\PropOne-slack does not decrease.
In particular, it does not become negative.

It remains to show that the $\alpha$-\PropOne{} guarantee does not break when reintroducing jobs with at most $\alpha$ eligible machines.
By the above results, all $\alpha$-\PropOne-slacks stay nonnegative when jobs with at most $\alpha$ eligible machines are not taken into account.
Adding a job $j$ with $k_j \leq \alpha$ back into the calculation cannot decrease the $\alpha$-\PropOne-slack of the machine $i$ to which $j$ is allocated as $\alpha\PS_i$ increases by $\alpha p_j/k_j$ which is at least as large as $p_j$.

Therefore, all machines have nonnegative $\alpha$-\PropOne-slack throughout the
algorithm and the final allocation is $\alpha$-\PropOne. Since
$\alpha=8\sqrt{m}$, this proves the theorem.
\end{proof}

\section{Omitted Proofs from \cref{sec:goods_setting} (Online Goods Setting)}
\label{app:goods-online}

\subsection{Greedy by deficit fails}
\label{app:deficit-greedy-fails}

\begin{lemma} \label{lem:goods_online_specific_lower_bound_slack_greedy}
    Consider the setting where goods arrive online and have to be allocated immediately and irrevocably.
    Allocating each good to an eligible agent $i$ for which the $1$-deficit $\PS_i - L_i$ is largest generally does not yield allocations satisfying $\frac{m}{8}$-\PropOne{}, even if all goods have utility $1$.
\end{lemma}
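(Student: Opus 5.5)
The plan is to build an explicit adversarial instance with unit goods, exploiting adversarial tie-breaking, in which some target agent $t$ ends with $L_t=0$ and $\PS_t>m/8$. Since every good has value $1$, the condition $L_t+1\ge \frac{8}{m}\PS_t$ then fails, so the allocation is not $\frac m8$-\PropOne. Greedy $1$-deficit assignment always gives a good to an eligible agent of maximum deficit $d(i)=\PS_i-L_i$. So $t$ must, throughout, share every one of its goods with at least one agent whose deficit is at least $d(t)$ (with ties broken against $t$). The core of the construction is therefore a supply of \emph{helper} agents whose deficit stays above the steadily growing deficit of $t$.

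I will use the identity $\sum_i d(i)=\sum_i\PS_i-\sum_i L_i=0$, which follows from the fact that the shares partition the total value. For $t$ to reach deficit $\Omega(m)$, roughly $\Omega(m)$ total negative deficit must accumulate on other agents. Greedy only lets this happen if those agents receive goods that no higher-deficit agent is eligible for. The construction proceeds in rounds. In each round, a fresh helper $h$ (load $0$) is paired with $t$, and the adversary first raises $d(h)$ slightly above $d(t)$. It does so with a gadget of goods eligible for $h$ together with a few \emph{sink} agents of currently minimal deficit: greedy sends these goods to the sinks, driving their deficits negative, while $h$ accumulates share. Then a single good with eligibility set $\{t,h\}$ is released. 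Greedy gives it to $h$, so $d(h)$ drops by $1/2$ and $d(t)$ rises by $1/2$. I intend to recycle each former helper as a sink in later rounds, so that the whole process uses only $O(m)$ agents.

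The key steps, in order, are as follows. First, fix the agent roles: $t$, about $m/2$ helpers, and the remaining agents as sinks; choose $m$ divisible by a small constant, as in the proof of \Cref{lem:greedy-fails}. Second, specify the gadget that lifts a fresh helper's deficit to $d(t)$, and verify step by step, by the greedy criterion with adversarial ties, which agent receives each good. Third, sum the contributions $1/2$ over about $m/4$ rounds to obtain $\PS_t\ge m/8$ plus a constant, while $L_t=0$. Fourth, compare $L_t+1=1$ with $\frac{8}{m}\PS_t>1$ and conclude.

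The main obstacle is the second step. Lifting a fresh helper to deficit $d(t)\approx r/2$ in round $r$ must not require more than $O(1)$ agents or goods \emph{per unit of deficit} amortized. Naive harmonic cascades, as in \Cref{thm:good_setting_online_lb}, only produce logarithmic deficits. My first attempt is to keep the helper's eligibility sets small: goods eligible for $\{h,s\}$ with $s$ a sink of deficit below $d(h)$. Each such good raises $d(h)$ by $1/2$ and pushes $d(s)$ down by $1/2$, since greedy must give it to $h$ unless $d(s)\ge d(h)$. This is exactly the problem, because greedy gives the good to $h$ when $h$ has the larger deficit. I would therefore reverse the order: first make sinks temporarily high-deficit using goods shared among sinks, then pair them with $h$. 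If per-round costs cannot be kept constant, I would fall back to proving the bound with a smaller constant than $\frac18$ and then tune the parameters.
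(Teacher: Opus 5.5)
\textbf{There is a genuine gap.} Everything rests on your second step: a gadget that lifts a fresh helper $h$ to deficit $d(t)$ at constant amortized cost. The proposal never supplies one.

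Your first attempt fails for the reason you note. A good eligible for $\{h,s\}$ goes to $h$ whenever $d(h)>d(s)$, so low-deficit sinks absorb nothing.

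The repair, first raising sinks via goods shared among sinks, is circular. An agent's deficit increases only when a good it is eligible for goes to an eligible agent of at least its own deficit. So raising a sink to level $d$ is the same problem as raising $h$ to level $d$, one level down.

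Without an explicit recursion and a count showing that $m$ agents suffice to push $\PS_t$ above $\tfrac{m}{8}$ with $L_t=0$, your third step is unsupported. The fallback to a smaller constant would prove a strictly weaker statement. An allocation that violates $cm$-\PropOne{} for some $c<\tfrac{1}{8}$ may still satisfy $\tfrac{m}{8}$-\PropOne{}.

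\textbf{The missing idea: no gadget is needed, because ties produce the spread themselves.} As long as two agents have equal deficit, release a good eligible for exactly these two. Under a fixed tie-breaking order $\prec$, one of them moves up by $\tfrac{1}{2}$ and the other down by $\tfrac{1}{2}$. This is a labeled chip-firing game on $\tfrac{1}{2}\mathbb{Z}$ with $m$ chips starting at $0$. For even $m$ it terminates with one chip at each of $\pm\tfrac{1}{2},\dots,\pm\tfrac{m}{4}$ \cite{anderson1989disks}, and the final configuration is sorted by label \cite{hopkins2017sorting}. The $\prec$-minimal agent is passed over in every tie it takes part in, so it never receives a good and ends with $L=0$ and $\PS=\tfrac{m}{4}$. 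This is exactly the agent-reusing recursion that your helpers and recycled sinks were reaching for.

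\textbf{Tie-breaking.} You also break ties against $t$ by fiat. To handle an arbitrary deterministic tie-breaking rule, the paper first runs the harmonic construction of \Cref{thm:good_setting_online_lb}, which makes all deficits distinct. In the subsequent chip-firing phase, run on deficits that ignore phase one, every pair good then goes to the agent with the larger phase-one deficit, whatever the rule. Phase one gives the target at most one good, so $L+1\le 2$ against $\PS>\tfrac{m}{4}$. That factor $2$ is where the constant $\tfrac{m}{8}$ comes from.
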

\begin{proof}
    Let $m$ (the number of agents) be even.

    Assume for now that there is a total ordering $\prec$ on the machines by which the greedy algorithm tiebreaks, i.e. whenever there are multiple eligible agents with largest deficit, it allocates the good to the largest agent according to $\prec$.

    Let the adversary do the following:
    As long as there exists a pair of agents $i_1, i_2$ with equal deficit, the adversary sends a good for which only $i_1$ and $i_2$ are eligible.

    Consider such a good $j$ with eligible agents $i_1$ and $i_2$. Wlog let $i_2 \prec i_1$. Then the good is allocated to $i_1$ which causes the deficit of $i_1$ to decrease by $\frac{1}{2}$ while the deficit of $i_2$ increases by $\frac{1}{2}$.

    The behavior of the deficits of the agents is therefore equivalent to that of a \emph{labeled chip-firing game} on $\frac{1}{2} \mathbb{Z}$ as described by \cite{hopkins2017sorting}. From the results of \cite{anderson1989disks} and \cite{hopkins2017sorting}, we get that this process terminates and that the deficits of the agents in the final configuration are
    \begin{align*}
        \frac{\frac{m}{2}}{2}, \frac{\frac{m}{2}-1}{2}, \dots, \frac{1}{2}, -\frac{1}{2}, -\frac{2}{2}, \dots, -\frac{\frac{m}{2}}{2}
    \end{align*}
    in the order of $\prec$.

    Let $i^*$ be the smallest agent according to $\prec$.
    Since the greedy algorithm tiebreaks away from $i^*$ and every good for which $i^*$ is eligible has another eligible agent with the same deficit as $i^*$, agent $i^*$ does not receive any good.
    Hence, their proportional share in the end is $\PS_{i^*} = \frac{m}{4}$ while their utility after adding one good is only $1$.

    Thus, this allocation does not satisfy better guarantees than $\frac{m}{4}$-\PropOne.

    To extend this lower bound to the case when the greedy algorithm is equipped with a tiebreak rule, we describe a two-phase approach:

    In the first phase, the adversary induces a configuration in which all agents have different deficits by sending goods as described in the proof of \cref{thm:good_setting_online_lb}.

    Once such a configuration is established, the second phase consists of running the procedure described above while ignoring the utilities and proportional shares from the goods sent in the first phase.
    This causes the greedy algorithm to \textquote{tiebreak} with respect to the order given by the deficits after the first phase.

    As the first phase adds at most one unit of utility to each agent, the utility of $i^*$ after adding one good is at most $2$. As this good from the first phase also adds a strictly positive amount of proportional share, adding the first phase can only improve the ratio between the utility and the proportional share of $i^*$ by a factor strictly smaller than $2$. Hence, the final allocation does not satisfy $\frac{m}{8}$-\PropOne.
\end{proof}

\subsection{Proof of \cref{lem:goods-online-show-potential-difference-factor-negative} (technical Lemma used in the proof of \cref{thm:goods-unit-upper})}
\label{app:goods-technical-lemma}

\TechnicalLemmaGoodsOnline*
\begin{proof}
    Since $k \geq 2$, it holds that
    \[
        2^{\frac{1}{k}} \leq 1 + \frac{2 \sqrt{2} - 2}{k}
    \]
    which we can use to bound
    \begin{align*}
        (k - 1) \cdot 2^{\frac{1}{k}} + \frac{2^{\frac{1}{k}}}{m} - k &\leq \left(k - 1 + \frac{1}{m}\right) \cdot \left(1 + \frac{2 \sqrt{2} - 2}{k}\right) - k\\
        &= 2 \sqrt{2} - 3 + \frac{1}{m} - \frac{2 \sqrt{2} - 2}{k} + \frac{2 \sqrt{2} - 2}{m \cdot k}.
    \end{align*}
    For $k = 2$, we can use $m \geq 3$ to obtain
    \[
        2 \sqrt{2} - 3 + \frac{1}{m} - \frac{2 \sqrt{2} - 2}{k} + \frac{2 \sqrt{2} - 2}{m \cdot k} = \sqrt{2} - 2 + \frac{\sqrt{2}}{m} \leq \frac{4}{3} \sqrt{2} - 2 < 0.
    \]
    For $k \geq 3$, we use $m \geq k$ for the first occurrence and $m \geq 3$ for the second occurrence and get
    \begin{align*}
        2 \sqrt{2} - 3 + \frac{1}{m} - \frac{2 \sqrt{2} - 2}{k} + \frac{2 \sqrt{2} - 2}{m \cdot k} &\leq 2 \sqrt{2} - 3 + \frac{1}{k} - \frac{2 \sqrt{2} - 2}{k} + \frac{2 \sqrt{2} - 2}{3 \cdot k}\\
        &= 2 \sqrt{2} - 3 + \frac{1 + 2 - \frac{2}{3} - 2 \sqrt{2} + \frac{2}{3} \sqrt{2}}{k}\\
        &\leq 2 \sqrt{2} - 3 + \frac{\frac{7}{3} - \frac{4}{3} \sqrt{2}}{3} = \frac{14}{9} \sqrt{2} - \frac{20}{9} < 0.
    \end{align*}
\end{proof}

\subsection{Proof of \cref{thm:log-m-upper-bound-online-goods} (upper bound $\log m$ for online goods)}
\label{app:ub-online-goods}

For convenience, we restate the main theorem of \Cref{sec:goods_setting}.

\ThmLogmGoods*
\begin{proof}
    In the case of one agent, the only allocation trivially satisfies \PropOne{}.

    For $m \geq 2$ agents, we analyze the performance of the bucketing algorithm that builds on the greedy deficit algorithm for equal-size goods and \Cref{thm:goods-unit-upper}.

    Upon the arrival of a good $j$ with utility $p_j$, we round its size up to the nearest power of $2$, denoted by $p'_j = 2^{\lceil \log_2 p_j \rceil}$. We maintain a separate instance for each subset of goods with equal rounded utilities.

    Because we round up by a factor of at most $2$, the proportional share $\PS_i$ of an agent in the original instance is related to the proportional share $\PS_{ik}$ in virtual instance $k$ such that $\sum_k \PS_{ik} = \Theta(\PS_i)$. Applying \Cref{thm:goods-unit-upper} to each virtual instance $k$ yields:
	\[
	L_{ik} + p_{ik}^{\max} \geq \Omega\left(\frac{1}{\log m}\right) \cdot \PS_{ik},
	\]
	where $p_{ik}^{\max}$ is the utility of the largest good in class $k$ for which agent $i$ is eligible and that is not allocated to $i$ (or $0$ if no such good exists). Summing over all classes $k$:
	\[
	L_i + \sum_k p_{ik}^{\max} \geq \Omega\left(\frac{1}{\log m}\right) \cdot \PS_i.
	\]
	Since the sum $\sum_k p_{ik}^{\max}$ is a geometric series dominated by the utility of the largest good in the original instance for which $i$ is eligible and that is not allocated to $i$, denoted $p_i^{\max}$, we have $\sum_k p_{ik}^{\max} = O(p_i^{\max})$. Incorporating this into the inequality, we obtain:
	\[
	L_i + O(p_i^{\max}) \geq \Omega\left(\frac{1}{\log m}\right) \cdot \PS_i.
	\]
	Absorbing the constant factors into the $\Omega(\cdot)$ notation yields the claimed $O(\log m)$–\PropOne{} guarantee.
\end{proof}

\section{Omitted proofs from \Cref{sec:envy} (Envy-based Fairness)}
\label{app:envy}

\subsection{Equivalence of Goods and Chores}
\label{app:goods=chores}

Identical valuations make the goods and chore settings equivalent for envy-based fairness. Relabeling each chore as a good of equal value preserves the allocation, and hence every eligibility constraint, so the strong and weak notions correspond exactly across the two settings.

\PropEquivalenceGoodsChores*
\begin{proof}
Fix two machines $i,i'$. For both the strong and the weak notion, the jobs relevant for comparing $i$ with $i'$ in the chore instance are exactly the jobs relevant for comparing $i'$ with $i$ in the corresponding goods instance. Replacing each chore by a good of the same magnitude reverses the direction of envy: machine $i$ envies machine $i'$ in the chore instance if and only if machine $i'$ envies machine $i$ in the goods instance.
The same correspondence holds after removing one relevant item, and after removing any relevant item. Hence EF, EF1, and EFX are preserved under this transformation, for both the strong and the weak notion.
\end{proof}

\subsection{Strong Envy}
\label{app:strong-ef1}

\ThmNoEFOneChore*
\begin{proof}
    Consider a first set $\N_1$ of $7$ jobs with processing time $1$ that have the following binary eligibility matrix, where each row represents a job, each column represents a machine and a machine is eligible for a job if and only if the corresponding entry in the matrix is $1$:
    \begin{align*}
        \begin{pmatrix}
            $1$ & $0$ & $1$ & $1$ & $1$ & $0$ & $0$ \\
            $0$ & $1$ & $0$ & $1$ & $1$ & $1$ & $0$ \\
            $0$ & $0$ & $1$ & $0$ & $1$ & $1$ & $1$ \\
            $1$ & $0$ & $0$ & $1$ & $0$ & $1$ & $1$ \\
            $1$ & $1$ & $0$ & $0$ & $1$ & $0$ & $1$ \\
            $1$ & $1$ & $1$ & $0$ & $0$ & $1$ & $0$ \\
            $0$ & $1$ & $1$ & $1$ & $0$ & $0$ & $1$
        \end{pmatrix}
    \end{align*}
    This incidence structure is the complement of the Fano plane, also known as the unique symmetric $2$-$(7,4,2)$ design and the biplane of order $2$ (see for example \cite{assmus1992designs}).
    We will use the following two of its properties:
	\begin{compactitem}
    \item every pair of jobs from $\N_1$ has exactly two common eligible machines; and
    \item every pair of machines is eligible for exactly two common jobs from $\N_1$.
	\end{compactitem}
	This implies that unless every machine is allocated exactly one job from $\N_1$, there is a pair of machines of which one machine has two jobs for which both machines are eligible while the other machine has no such jobs:
    Consider a machine $i$ taking two jobs $j, j' \in \N_1$ and consider the unique other machine $i'$ that is also eligible for both $j$ and $j'$. Then $j$ and $j'$ are the only two jobs from~$\N_1$ for which both $i$ and $i'$ are eligible, so $i'$ cannot have any such job.

    Now, add a second set $\N_2$ of two more jobs with processing time $p \in (\frac{1}{2}, 1)$ for which all machines are eligible. We claim that there is no allocation of all $9$ jobs that satisfies EF1.

    First, consider any allocation in which there is a machine $i$ that is assigned at least two jobs from $\N_1$. By the above argument, there is another machine $i'$ that is eligible for at least two of these jobs and that has no job from $\N_1$ for which machine $i$ is also eligible.
    Then $i$ will envy $i'$ even after removing one of its jobs unless $i'$ receives both jobs from $\N_2$. But since every machine is eligible for these two jobs, $i'$ then envies all machines that don't have a job for which $i'$ is eligible, even after removing one of the jobs from $i'$. Since there are $6$ other machines while $i'$ is eligible for only $4$ jobs from $\N_1$ (two of which are on machine $i$), such machines exist.

    It remains to check allocations in which each machine receives exactly one job from $\N_1$.

    If both jobs from $\N_2$ are assigned to the same machine $i$, then $i$ strongly envies every machine~$i'$ that does not envy~$i$ with respect to only the jobs in $\N_1$, even after removing the unit-size job assigned to $i$ from the first set. Since at most three other machines have a job from $\N_1$ for which $i$ is eligible, such a machine $i'$ exists. Thus strong EF1 is violated.

    Otherwise, the jobs from $\N_2$ are allocated to two different machines. To handle this case, we need to explore the structure of the allocation of the jobs in $\N_1$ a bit further. To this end, let $G_{\N_1}$ be the envy graph after adding only the jobs from $\N_1$.

    We first show that $G_{\N_1}$ satisfies
    \[\vert \delta^-_{G_{\N_1}}(i) \vert = \vert \delta^+_{G_{\N_1}}(i) \vert\]
    for all machines $i$:

    Let $i$ be some machine and let $j$ be the job allocated to machine $i$. Then $i$ envies exactly the machines $i'$ that are eligible for $j$ and whose job is incompatible with $i$. Since $j$ has $4$ eligible machines including $i$, the number of such machines is
    \begin{align*}
        3 - \vert \{\text{$i' \in \M(j): i' \neq i$, $i$ is eligible for the job assigned to $i'$}\} \vert.
    \end{align*}
    On the other hand, $i$ is envied by exactly the machines that are not eligible for $j$ and whose job is compatible with $i$. Since $i$ is eligible for $4$ jobs including $j$, the number of such machines is
    \begin{align*}
        &\phantom{{}={}} 3 - \vert \{\text{$j' \in \N_1: j' \neq j$, $i \in \M(j')$, the machine $i'$ to which $j'$ is assigned is eligible for $j$}\} \vert\\
        &= 3 - \vert \{\text{$i' \in \M(j): i' \neq i$, $i$ is eligible for the job $j'$ assigned to $i'$}\} \vert
    \end{align*}
    and thus equal to the number of machines that $i$ envies.

    Next, we want to show that $G_{\N_1}$ has at most one isolated vertex. To this end, assume for contradiction that $G_{\N_1}$ has two isolated vertices $i$ and $i'$. Since $i$ and $i'$ do not envy each other, $i'$ is eligible for the job $j$ allocated to $i$ if and only if $i$ is eligible for the job $j'$ allocated to $i'$.

    If $i$ and $i'$ are eligible for both $j$ and $j'$, let $i_2$ and $i_3$ be the other two machines eligible for $j$ and let $i_2'$ and $i_3'$ be the other two machines eligible for $j'$. Since $j$ and $j'$ share exactly two eligible machines which are $i$ and $i'$, all $6$ named machines are distinct. Let $\widehat{i}$ be the $7$th machine.

    Consider the job $j_2$ allocated to machine $i_2$. Since $i$ does not envy $i_2$, $i$ has to be eligible for $j_2$. As $j_2$ and $j$ share exactly two eligible machines which are $i$ and $i_2$, $j_2$ cannot be compatible with $i_3$ and $i'$. Moreover, at most one of $i_2'$ and $i_3'$ is eligible for $j_2$ as $j_2$ would otherwise share the three eligible machines $i$, $i_2'$ and $i_3'$ with $j'$. But then the three machines not eligible for $j_2$ have to be $i'$, $i_3$ and one of $i_2'$ and $i_3'$. Therefore, ~$\widehat{i}$ has to be eligible for $j_2$.

    By the same arguments, it follows that $\widehat{i}$ is also eligible for $j_3$, $j_2'$ and $j_3'$. But then all four jobs for which $\widehat{i}$ is eligible are allocated to other machines, contradicting the fact that each machine has one job.

    If $i'$ is not eligible for $j$ and $i$ is not eligible for $j'$, let $i^*_1$ and $i^*_2$ be the two eligible machines shared by $i$ and $i'$. Let $i_2$ be the fourth machine eligible for $j$ and let $i_2'$ be the fourth machine eligible for $j'$. Again, all $6$ named machines are distinct. Let $\widehat{i}$ be the $7$th machine.

    Consider the job $j^*_1$ allocated to $i^*_1$. Since both $i$ and $i'$ do not envy $i^*_1$, both of them have to be eligible for $j^*_1$. Then as $j$ and $j^*_1$ share the eligible machines $i$ and $i^*_1$, $i_2$ and $i^*_2$ cannot be eligible for $j^*_1$. Moreover, as $j'$ and $j^*_1$ share the eligible machines $i'$ and $i^*_1$, $i_2'$ cannot be eligible for $j^*_1$. Therefore, the three machines not eligible for $j^*_1$ are $i_2$, $i_2'$ and~$i^*_2$. It follows that $\widehat{i}$ is eligible for $j^*_1$.

    By the same arguments, $\widehat{i}$ is also eligible for $j^*_2$. But then $j^*_1$ and $j^*_2$ share the three eligible machines $i$, $i'$ and $\widehat{i}$, which is impossible given the structure of the instance.
    We conclude that $G_{\N_1}$ cannot have more than one isolated vertex.

    Using this fact together with our above result that $\vert \delta^-_{E(G_{\N_1})}(i) \vert = \vert \delta^+_{E(G_{\N_1})}(i) \vert$ for every machine $i$, we get that all machines but at most one lie on directed envy cycles in $G_{\N_1}$.
    Moreover, since no pair of machines can envy each other, every envy cycle is of length at least $3$.

    Now let $i_1$ and $i_2$ be the two machines receiving the jobs from $\N_2$.
    Since at least one of $i_1$ and $i_2$ is on an envy cycle of length at least $3$ in $G_{\N_1}$, at least one of $i_1$ and $i_2$ envies some other machine $i' \in \M \setminus \{i_1, i_2\}$ in $G_{\N_1}$.
    W.l.o.g.\ let this be $i_1$.
    Then $i_1$ is not eligible for the job from $\N_1$ allocated to $i'$ as $i_1$ would otherwise not envy $i'$ in $G_{\N_1}$.
    Since $i'$ also does not receive a job from $\N_2$, it does not have any job for which $i_1$ is eligible.
    Thus, $i_1$ envies $i'$ even after removing its larger job.

    In all cases, there is a pair of machines such that one strongly envies the other even after removing its largest job. Therefore, no allocation of these jobs satisfies EF1.
\end{proof}

The preceding theorem shows that strong EF1 may fail to exist offline when
processing times are general.
In the online setting with general processing times, even approximation does not restore attainability.
For $\alpha>0$, an allocation satisfies \emph{strong
$\alpha$-EF1} if, for every pair $i,i'\in\M$, either $i$ does not strongly envy $i'$, or there is a job $j\in C(i,i')$ assigned to $i$ such that
$L_{i,i'}-p_j\le \alpha L_{i',i}$. Already for
$m=3$ machines, no online algorithm can guarantee strong $\alpha$-EF1 for any
$\alpha>0$. %

\ThmNoAlphaEFOne*
\begin{proof}
    Fix $\alpha>0$, and let $p>\alpha$, say $p \coloneqq \alpha+1$. Let the machines be $1,2,3$. %

    Consider the following sequence of jobs: First, for each pair of machines, release one unit-size job eligible exactly for this pair. Our target assignment is as follows:
    \begin{compactitem}
        \item the job for which machines $1$ and $2$ are eligible should be assigned to machine $1$;
        \item the job for which machines $2$ and $3$ are eligible should be assigned to machine $2$;
        \item the job for which machines $1$ and $3$ are eligible should be assigned to machine $3$.
    \end{compactitem}

For every pair whose unit-size job was not assigned according to the target assignment, release one additional job of processing time $p$ eligible exactly for the same pair. Such a job must be assigned to the target machine of that pair: otherwise, the other machine has two jobs for which the target machine is also eligible, while the target machine has none compatible with the other machine. Even after removing one job from the other machine, its restricted load is at least $1$, whereas the target machine has restricted load $0$, contradicting strong $\alpha$-EF1.

Thus, after these correction jobs, for each pair of machines, the load of the target machine of this pair restricted to jobs for which both machines from the pair are eligible is larger than the load of the other machine restricted to these jobs by a factor of at least $p$.

Finally, release one job of processing time $p$ for which all three machines are eligible. Let $i$ be the machine receiving this final job. In the target assignment, machine $i$ is the target machine for exactly one pair; let $i'$ be the other machine in that pair.

Since all jobs released so far have size at most $p$, this last job is a largest job allocated to $i$ for which $i'$ is also eligible. After removing this job, we are in the same situation as before where the load of machine $i$ restricted to jobs for which both $i$ and $i'$ are eligible is larger than the load of $i'$ restricted to these jobs by a factor of at least $p > \alpha$. This violates strong $\alpha$-EF1.
\end{proof}

\subsection{Weak Envy}
\label{app:weak-envy}

Consider the equivalent of Algorithm 3 of Elkind et al.~\cite{ELL+25temporal}:
assign each arriving job to an eligible machine of minimum current load. Once a machine has larger load than another, any later job assigned to it by this rule must be incompatible with the other machine, and is therefore ignored by weak envy.

\ThmWeakEFOneGreedy*
\begin{proof}
    For the first statement, we use the above algorithm and arguments similar to \cite{ELL+25temporal}. %
    Consider two machines $i$ and $i'$ such that $i$ weakly envies $i'$. Consider the last time point at which $i$ did not envy $i'$ and let $j$ be the next job which is then allocated to machine $i$ and causes $i$ to envy $i'$ until the end of time. Observe that while $i$ envies $i'$, $i'$ always has smaller load than $i$. By the greedy criterion, $i'$ cannot be eligible for any job after $j$ which is allocated to $i$. Therefore, these jobs do not count for our envy notion.
    We conclude that machine $i$ does not envy machine $i'$ after removing job $j$ because $i$ did not envy $i'$ before being assigned $j$ and the later jobs do not make the envy worse.
    This shows that the allocation satisfies EF1.

    To compute an allocation satisfying weak EFX offline, run the above algorithm with the jobs sorted in order of non-increasing processing times. Then the job $j$ in the above argument is the smallest job on machine $i$ for which $i'$ is eligible which implies that the allocation computed in this way satisfies weak EFX.
\end{proof}

The positive online result does not extend to EFX.
\ThmNoWeakEFX*
\begin{proof}
    We use the instance and argument from \cite{ELL+25temporal}.
    For completeness, we give a full proof for our setting.

    Consider two machines and three jobs for which both machines are eligible. The first two jobs have processing time $1$, and the third job has processing time $2$.

	To maintain weak EFX after the first two jobs, an online algorithm must assign them to different machines: otherwise, one machine has load $2$ and the other has load $0$, and removing either unit-size job still leaves positive envy.

	The third job is then assigned to one of the two machines. This machine has load $3$, while the other machine has load $1$. Even after removing the job of processing time $1$ from the first machine, its remaining load is $2>1$. Thus, weak EFX is violated.
\end{proof}

\section{Impossibility Results for Scheduling Objectives and General Valuations}
\label{app:impossibility}

\subsection{General Valuations Rule Out Sublinear Online Guarantee}
\label{app:impossibility-general-valuations}

With general valuations, job $j$ has a machine-dependent processing time $p_{ij}$ for every $i \in M(j)$, and we define the proportional share analogously as $\PS_i := \sum_{j : i \in M(j)} p_{ij}/k_j$.

\begin{observation}
    No deterministic online algorithm can guarantee allocations satisfying $o(m)$-\PropOne{} in restricted-assignment chore allocation with general valuations.
\end{observation}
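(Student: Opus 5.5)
The plan is to adapt the adversary from \Cref{thm:chore-online-LB}, now using machine-dependent sizes instead of eligibility to control proportional shares. Every job will be eligible for all $m$ machines, so $k_j=m$ and each job adds exactly $p_{ij}/m$ to $\PS_i$. A machine $i$ that ends with two jobs of $i$-size $a\ge b$ violates $\alpha$-\PropOne{} as soon as $b>\alpha\,\PS_i$. So the goal is a machine holding two jobs that it values at comparable sizes, while every other job eligible for it is valued tiny or far smaller by $i$. Then $\PS_i=O((a+b)/m)$ and $\alpha=\Omega(m)$.

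\textbf{Construction.} The adversary releases jobs $1,2,\dots$ adaptively, using geometrically growing scales. Job $t$ has size $m^{t}$ on every machine that is still empty when it arrives. On each machine $i$ that already holds a job, it has size equal to $i$'s own first job, with one exception: the last, ``trap'' job. Consider a machine $i$ that takes its first job at step $s$ with size $m^{s}$. All earlier jobs were valued by $i$ at $m^{t}$ for $t<s$, so their contribution to $\PS_i$ is $\sum_{t<s}m^{t}/m\le 2m^{s-2}$. Its own job contributes $m^{s-1}$.

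\textbf{Step 1 (a repeat ends the instance).} Suppose the algorithm ever places a job on an already loaded machine $i$. Then $i$ holds two jobs of $i$-size $m^{s}$, and the adversary stops. Even after removing one, $i$ keeps load $m^{s}$. Its share is $O(m^{s-1})$ plus $m^{s}/m$ from each later job, so $\PS_i=O(m^{s-1})$ and the ratio is $\Omega(m)$.

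\textbf{Step 2 (the trap).} Otherwise, after $m$ jobs every machine holds exactly one job. The adversary releases one final job, valued by each machine at the size of that machine's own job. Whichever machine $i$ receives it gets the same contradiction as in Step 1.

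\textbf{Main obstacle.} The delicate point is bounding the contribution to $\PS_i$ of jobs that arrive after $i$ is loaded but before the trap. With the valuation above, each such job adds $m^{s}/m$. For a machine loaded early, up to $m$ such jobs could accumulate to $\PS_i\approx m^{s}$, destroying the bound.

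I would first repair this by valuing those intermediate jobs on loaded machines at a tiny $\varepsilon$ rather than $m^{s}$. That risks the algorithm dumping them cheaply on loaded machines. This is harmless only if it cannot go on forever, and I would try to bound it by having the adversary release, after each dump, a job whose size on the dumped-on machine again equals that machine's first job.

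Alternatively, I could make the trap target only machines loaded late, say after step $m/2$. Those have at most $m/2$ subsequent jobs, and with $\varepsilon$-valuations those jobs contribute negligibly. Reconciling these two choices so that Steps 1 and 2 both give ratio $\Omega(m)$ is where I expect the real work to lie.
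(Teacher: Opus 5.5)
There is a genuine gap. The obstacle you flag at the end is never resolved, and it already breaks Step 1, not just Step 2.

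Under your valuation, every job that arrives after machine $i$ is loaded at step $s$ has $i$-size $m^{s}$. Since it is eligible for all $m$ machines, it adds $m^{s-1}$ to $\PS_i$. If the repeat happens $\Theta(m)$ steps after $s$, then $\PS_i=\Theta(m^{s})$ and the ratio in Step 1 is $O(1)$, so your claim $\PS_i=O(m^{s-1})$ there is false.

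The $\varepsilon$ repair does not help either. An $\varepsilon$-job on a loaded machine leaves residual load $O(\varepsilon)$ after its first job is removed, which is far below $\alpha\PS_i\ge\alpha m^{s-1}$. So the algorithm can absorb every later job this way at no risk. The adversary then never forces a second machine to be loaded, and never reaches the trap.

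Within an all-eligible construction the two requirements really do conflict. Write $\alpha=\delta m$. For a dump onto a loaded machine $i$ (first job of $i$-size $a_i$, so $\PS_i\ge a_i/m$) to violate $\alpha$-\PropOne{}, the dumped job must have $i$-size more than $\delta a_i$. Each such deterring job therefore adds more than $\delta a_i/m$ to $\PS_i$, so after $T$ of them $\PS_i>\delta T a_i/m$. The trap can then only cause a violation on $i$ if $T<1/\delta^2$. Your alternative of trapping only late-loaded machines runs into the same tension: any job you make harmless on early machines becomes a free dumping ground.

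The missing idea is to use the eligibility sets, which the restricted-assignment model lets the adversary choose. Requiring every job to be eligible for all machines throws away exactly the tool you need. The construction goes as follows.

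\begin{itemize}
\item Release only $m/2$ first-phase jobs. Make job $\ell$ eligible exactly for the machines that have not yet received a job, with size $2^\ell$ on each of them.
\item This forces the algorithm onto a fresh machine every time, so your Step 1 never arises. Once a machine is loaded, no later first-phase job contributes to its share at all.
\item Every first-phase job has more than $m/2$ eligible machines. Hence the machine receiving job $\ell$ has share less than $\sum_{t\le\ell}2^{t+1}/m<4\cdot 2^{\ell}/m$.
\item Finally, release one trap job eligible exactly for the $m/2$ loaded machines, with size $2^\ell$ on the machine holding job $\ell$. It adds $2\cdot 2^\ell/m$ to that machine's share.
\item Let $\ell^*$ be the index of the first job on the machine that receives the trap. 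That machine has residual load $2^{\ell^*}$ against a share below $6\cdot 2^{\ell^*}/m$. This violates $\alpha$-\PropOne{} for every $\alpha\le m/6$.
\end{itemize}

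General valuations are needed only for the trap, whose size must match each eligible machine's own first job. Geometric growth by a factor $2$, rather than $m$, already suffices.
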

\begin{proof}
    Fix a deterministic online algorithm. Consider the following adversarial instance:

    The adversary first sends a total of $\frac{m}{2}$ jobs such that the eligible machines for the $\ell$-th job are exactly those machines that did not receive any of the first $\ell-1$ jobs. On these machines, the $\ell$-th job has processing time $2^\ell$.

    At this stage, since each of these jobs has more than $\frac{m}{2}$ eligible machines, the proportional share of the machine receiving the $\ell$-th job is smaller than $\frac{4 \cdot 2^{\ell}}{m}$ while the load of this machine is $2^\ell$.

    The adversary now releases one final job whose eligible machines are exactly those machines that received one of the first $\frac{m}{2}$ jobs. On the machine $i$ that received the $\ell$-th job, this job has processing time $p_{ij} = 2^\ell$.

    Consider the machine $i^*$ receiving this job. Let $\ell^*$ be the index of the first job allocated to $i^*$. Then $i^*$ has two jobs of processing time $2^{\ell^*}$. Hence, the load of $i^*$ without its largest job is $2^{\ell^*}$.

    On the other hand, the proportional share of $i^*$ is smaller than $\frac{6 \cdot 2^{\ell^*}}{m}$. Thus, $i^*$ violates $\alpha$-\PropOne{} for any $\alpha \leq \frac{m}{6}$.
\end{proof}

\subsection{Fairness Can Be Incompatible with Makespan}
\label{app:impossibility-simultan-scheduling}

\begin{observation}
    In restricted-assignment chore allocation, adding the requirement to satisfy $\alpha$-\PropOne{} for some $\alpha \geq 1$ may increase the optimum makespan by a factor $\Omega(\frac{m}{\alpha})$, even for unit-size jobs.
\end{observation}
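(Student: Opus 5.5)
The plan is to construct an explicit unit-size instance with a constant optimum makespan, and then show that every $\alpha$-\PropOne{} allocation places $\Omega(m/\alpha)$ jobs on a single machine. I have not yet fixed the concrete eligibility sets; the main obstacle, stated below, is exactly finding them.

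The construction is driven by one structural fact. With unit sizes, $\alpha$-\PropOne{} only restricts machine $i$ to load at most $\alpha\PS_i+1$. Machines of makespan-$1$ allocations therefore always satisfy it, and every machine may take one job for free. So the makespan-optimal allocation must put two (or $O(1)$) jobs on many \emph{light} machines $b$ with $\alpha\PS_b<1$. For such a machine, $\alpha$-\PropOne{} forbids a second job, so in any fair allocation the surplus jobs must flow to a \emph{hub} machine $h$. The hub may carry a large load only if its own share is small, i.e.\ only if it is eligible for few jobs with small $k_j$.

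The instance I would try is layered, in the spirit of the instance in \Cref{lem:greedy-fails}. There is one hub $h$, a large set $B$ of light machines, and a set $A$ of ``absorber'' machines. The absorbers receive many wide jobs (eligibility $A\cup B\cup\{h\}$), so $\PS_b$ stays below $1/\alpha$ for every $b\in B$, since each wide job contributes only $1/k_j$. The absorbers then soak up the wide jobs with $O(1)$ load each in the optimal schedule. In addition there are about $|B|$ \emph{narrow-ish} jobs whose eligibility sets are $\{h\}\cup B_t$ for blocks $B_t\subseteq B$ of size about $\alpha$. The block size is chosen so that these jobs add at most about $1/\alpha$ to $\PS_b$ and to $\PS_h$ per job. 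The makespan-optimal schedule puts one narrow job on each light machine together with one wide job, giving makespan $2$.

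For the fair side, I argue by counting. Each light machine has $\alpha\PS_b<1$, so it holds at most one job. The absorbers, via their own share bound, also cannot take the narrow jobs, since they are not eligible for them. By pigeonhole, the $\Theta(|B|)$ jobs that cannot be accommodated must go to $h$. Choosing the parameters with $|B|=\Theta(m)$ and blocks of size $\Theta(\alpha)$ should force $L_h=\Omega(m/\alpha)$. The remaining check is that $\alpha\PS_h+1\ge L_h$ is at least feasible, so that a fair allocation exists at all. Indeed $\PS_h=\Theta(m/\alpha\cdot 1/\alpha)$ plus its wide-job contributions, and the parameters must be tuned so that this lies between the forced load and the cap. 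This gives the ratio $\Omega(m/\alpha)$ against makespan $2$.

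The main obstacle is the simultaneous bookkeeping of shares. The light machines need $\PS_b<1/\alpha$ while sitting in enough jobs to be useful to the optimal schedule. Meanwhile the total capacity $\sum_i(\alpha\PS_i+1)=\alpha n+m$ always exceeds $n$, so the forcing must come from eligibility structure and not from aggregate capacity. I would first test the two-layer version, with only $h$ and $B$ and no absorbers, by a direct Hall-type argument. If that does not give small enough $\PS_b$, I would add the absorber layer to dilute the wide jobs. The final step is to verify the exact inequalities for concrete parameters, say $|B|=m/2$, blocks of size $2\alpha$, and $m/2$ wide jobs.
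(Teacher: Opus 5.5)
\textbf{Verdict: genuine gap.} The plan routes the surplus to a \emph{single} hub $h$, and for $\alpha\ge 2$ no load can be forced onto a single hub at all.

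\textbf{Why the single hub fails.} With unit jobs, $\alpha$-\PropOne{} caps machine $i$ at $c_i=\lfloor\alpha\PS_i\rfloor+1>\alpha\PS_i$ jobs. Take any nonempty set $J'$ of jobs that each have an eligible machine other than $h$, and write $\M(J')$ for the machines eligible for some job of $J'$. Then
\[
\sum_{i\in\M(J')\setminus\{h\}} c_i \;>\; \alpha\sum_{j\in J'}\frac{k_j-1}{k_j} \;\ge\; \frac{\alpha}{2}\,|J'| \;\ge\; |J'| .
\]
By Hall's theorem for $b$-matchings, all such jobs fit on machines other than $h$ within their caps. So $h$ keeps only the jobs eligible for it alone, which it holds in the optimum as well. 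Your two-layer Hall test would therefore find Hall's condition satisfied, and adding absorbers only adds capacity.

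\textbf{Why the parameters fail.} Suppose a light machine $b$ holds two jobs $j_1,j_2$ in the optimum. Then it needs $1/k_{j_1}+1/k_{j_2}<1/\alpha$. Hence it can lie in at most one narrow job with $k_j\approx\alpha+1$; a block carrying about $\alpha$ narrow jobs already gives $\PS_b\approx 1$. Its remaining share must then be below $1/(\alpha(\alpha+1))$. But $\Theta(m)$ wide jobs with $k_j\le m$ give $\PS_b=\Omega(1)$; with your parameters, $\PS_b\ge 1/2$.

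Three smaller points:
\begin{itemize}
\item $h$ needs a \emph{large} share to carry load, since $L_h\le\alpha\PS_h+1$, not a small one.
\item Your count of $\Theta(|B|)=\Theta(m)$ forced jobs on $h$ does not match the claimed $L_h=\Omega(m/\alpha)$.
\item No feasibility check is needed: \PropOne{} allocations always exist by \Cref{thm:offline-propone}.
\end{itemize}

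\textbf{The missing idea.} The hub must be a \emph{set} $H$ of $\Theta(\alpha)$ machines. The same computation shows a job helps force load onto $H$ only if fewer than a $1/\alpha$-fraction of its eligible machines lie outside $H$. For a job with any eligible machine outside $H$, this requires $|\M(j)\cap H|>\alpha-1$. The surplus then spreads over $|H|$ machines, so $\Theta(m/\alpha)$ is exactly what pigeonhole can give.

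\textbf{The paper's construction.} Take $d=\lfloor 2\alpha\rfloor$ designated machines and $N=2\beta d+1=\Theta(m)$ further machines. Each further machine gets two private unit jobs, eligible for it and for all $d$ designated machines.
\begin{itemize}
\item Then $k_j=d+1>2\alpha$, so every non-designated machine has $\alpha\PS_i=2\alpha/(d+1)<1$ and can keep at most one of its two jobs.
\item Hence at least $N$ jobs land on the $d$ designated machines, forcing makespan at least $\lceil N/d\rceil=2\beta+1=\Omega(m/\alpha)$.
\item Giving every non-designated machine both of its jobs has makespan $2$.
\end{itemize}
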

\begin{proof}
    Let $d := \lfloor 2\alpha \rfloor \geq 2$, so that $d \leq 2\alpha < d + 1$.
    If $m < 3d + 1$, then $m \leq 3d \leq 6\alpha$, so $\frac{m}{\alpha} = O(1)$ and the claim is trivial.
    Otherwise, let $\beta \geq 1$ be the largest integer with $d \cdot (2\beta + 1) + 1 \leq m$.

    Let there be $d$ designated machines, $N := 2\beta d + 1$ non-designated machines, and $m - d - N \geq 0$ idle machines that are not eligible for any job.
    For every non-designated machine $i$, let there be $2$ jobs with processing time $1$ whose eligible machines are all designated machines and~$i$.
    By Theorem~\ref{thm:offline-propone}, an allocation satisfying \PropOne{}, and hence $\alpha$-\PropOne{}, exists.

    Every job has $d + 1$ eligible machines, so the proportional share of every non-designated machine~$i$ is $\PS_i = \frac{2}{d + 1}$.
    If $i$ received both of its jobs, then
    \[
        L_i - p_{\max}(i) = 1 > \frac{2\alpha}{d + 1} = \alpha \cdot \PS_i,
    \]
    since $d + 1 > 2\alpha$.
    Therefore, every allocation satisfying $\alpha$-\PropOne{} allocates at most one job to each non-designated machine.
    Hence at least $N = 2\beta d + 1$ jobs are allocated to the $d$ designated machines.
    This leads to a makespan of at least
    \[
        \left\lceil \frac{2\beta d + 1}{d} \right\rceil = 2\beta + 1 .
    \]
    In contrast, allocating both jobs of every non-designated machine to that machine achieves a makespan of $2$.
    Thus, requiring $\alpha$-\PropOne{} increases the optimum makespan by a factor of at least $\frac{2\beta + 1}{2}$.

    It remains to relate $\beta$ to $m$ and $\alpha$.
    By the maximality of $\beta$, we have $d \cdot (2\beta + 3) + 1 > m$, that is, $2\beta + 1 > \frac{m - 1}{d} - 2$.
    Since $m \geq 3d + 1$ implies $\frac{m-1}{d} \geq 3$, we obtain $2\beta + 1 > \frac{m-1}{3d} \geq \frac{m-1}{6\alpha}$.
    Hence the factor is at least $\frac{m-1}{12\alpha} = \Omega(\frac{m}{\alpha})$.
\end{proof}

\printbibliography

\end{document}